\documentclass[11pt]{amsart}

\setlength{\topmargin}{-0.5in}
\usepackage{latexsym}
\usepackage{afterpage}
\setlength{\columnsep}{0.5cm} \setlength{\oddsidemargin}{-0cm}
\setlength{\evensidemargin}{-0cm} \setlength{\textwidth}{6.8in}
\setlength{\textheight}{8.7in}

\usepackage{amsmath}
\usepackage{amsfonts}
\usepackage{amssymb}
\usepackage[pdftex]{graphicx}
\usepackage{subfig}
\usepackage{multirow}
\usepackage{titling}
\usepackage{textcomp}
\usepackage{color}

\usepackage[ruled,vlined,linesnumbered,norelsize]{algorithm2e}

\usepackage{xr}

\newboolean{showFigs}
\setboolean{showFigs}{true}

\theoremstyle{plain}

\newtheorem{claim}{Claim}
\newtheorem{conclusion}{Conclusion}

\newtheorem{definition}{Definition}

\newtheorem{observation}{Observation}

\numberwithin{equation}{section}

\renewcommand{\times}{}

\newcommand{\tp}[1][\alpha]{\ensuremath{top\left(#1\right)}}

\newcommand{\mint}{\ensuremath{\emph{min}^t}}
\newcommand{\A}{\ensuremath{A}}

\newcommand{\B}{\ensuremath{B}}
\newcommand{\Bl}{\ensuremath{L}}
\newcommand{\Bh}{\ensuremath{H}}

\newcommand{\sig}{\ensuremath{s}}

\newcommand{\ser}[1]{\ensuremath{\vec{#1}}}
\newcommand{\sera}[2]{\ensuremath{\vec{#1}^{\:#2}}}
\newcommand{\sert}[1]{\ensuremath{\vec{#1}\hspace{1pt}'}}
\newcommand{\abs}{\ensuremath{\text{abs}}}

\newcommand{\Bs}{\ser{\B}}
\newcommand{\Bls}{\ser{\Bl}}
\newcommand{\Bhs}{\ser{\Bh}}

\newcommand{\sigs}{\ser{\sig}}
\newcommand{\rev}[1]{\ensuremath{\bar{#1}}}

\newcommand{\modTwo}[1]{\ensuremath{\textnormal{mod2}\left(#1\right)}}

\newcommand{\dig}[1]{\ensuremath{{d_{#1}}}}
\newcommand{\muli}[1]{\ensuremath{m_{#1}}}

\newcommand{\lesseq}[2]{\ensuremath{#1 \leq^{t} #2}}

\newcommand{\rng}[3][\B]{\ensuremath{{#1}^{{\left[#2,#3\right)}}}}
\newcommand{\lt}[2][\B]{\ensuremath{{#1}^{{<#2}}}}
\newcommand{\geqt}[2][\B]{\ensuremath{{#1}^{{\geq #2}}}}
\newcounter{example}

\newenvironment{comment}[1][]{}{}
\newcommand{\subtitle}[1]{%
  \posttitle{%
    \vskip1em%
    \par\end{center}
    \begin{center}\LARGE#1\end{center}
    }%
}

\begin{document}
\title{Detecting Breakage Fusion Bridge cycles in tumor genomes---an algorithmic approach}
\subtitle{Supporting Information}
\maketitle

\setcounter{definition}{6}
\setcounter{claim}{2}
\renewcommand{\thefigure}{S\arabic{figure}}
\renewcommand{\thetable}{S\arabic{table}}

\section{Properties of BFB Strings}

\newcommand{\bfb}{\ensuremath{\stackrel{\scriptscriptstyle{\text{BFB}}}{\longrightarrow}}}
\newcommand{\bfbm}[1][m]{\ensuremath{\begin{array}{c}
	\scriptscriptstyle \text{BFB} \\[-6pt]
	\longrightarrow \\[-8pt]
	\scriptscriptstyle #1
\end{array}}}

In this section, we prove Claim 1 from the main manuscript. To do so, we first formulate several auxiliary claims.

\begin{observation}
\label{obs:bfb}
If $\alpha \bfb \beta$, $\beta = \beta' \beta''$, and $\beta'' \bfb \beta''\gamma$, then $\alpha \bfb \beta \gamma$.
\end{observation}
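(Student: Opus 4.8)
The plan is to show that the growth of the suffix $\beta''$ into $\beta''\gamma$ can be replayed verbatim on the whole of $\beta$, because every BFB operation acts only at the active (right) end and leaves any fixed left prefix untouched. I would first record two structural facts about the relation $\bfb$. First, it is transitive: concatenating the operation sequences that witness $\alpha \bfb \beta$ and $\beta \bfb \beta\gamma$ witnesses $\alpha \bfb \beta\gamma$. Second, a single BFB operation on a string $s$ reflects a suffix $q$ of $s$ and appends its reversal $\rev{q}$, producing $s\,\rev{q}$ (the re-breakage of the fused palindrome at the telomeric end). Granting these, it suffices to prove the prefix-preservation statement $\beta \bfb \beta\gamma$ and then compose it with the hypothesis $\alpha \bfb \beta$.

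To prove $\beta \bfb \beta\gamma$, let $\beta'' = u_0 \to u_1 \to \cdots \to u_m = \beta''\gamma$ be the sequence of elementary operations witnessing $\beta'' \bfb \beta''\gamma$, where the $i$-th operation reflects a suffix $q_i$ of $u_{i-1}$ and appends $\rev{q_i}$, so that $u_i = u_{i-1}\,\rev{q_i}$. I would apply the \emph{same} operations to $\beta = \beta' u_0$ and prove, by induction on $i$, the invariant that the resulting $i$-th string equals $\beta' u_i$. The invariant holds at $i=0$. Assuming it at $i-1$, the current string is $\beta' u_{i-1}$; since $\totSize{q_i} \leq \totSize{u_{i-1}} \leq \totSize{\beta' u_{i-1}}$, the suffix of $\beta' u_{i-1}$ of length $\totSize{q_i}$ lies entirely to the right of the frozen prefix $\beta'$ and hence is exactly $q_i$. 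Reflecting it appends the identical block $\rev{q_i}$, giving $\beta' u_{i-1}\,\rev{q_i} = \beta' u_i$ and restoring the invariant. At $i=m$ the string is $\beta' u_m = \beta'\beta''\gamma = \beta\gamma$, so $\beta \bfb \beta\gamma$, and transitivity with $\alpha \bfb \beta$ yields $\alpha \bfb \beta\gamma$.

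The only real content — and the step I would treat most carefully — is the claim that each operation of the $\beta''$-derivation lifts unchanged to the $\beta$-context. This rests entirely on the directionality of the BFB process: an operation reflects a suffix of the current string and extends it on the right, so a left prefix such as $\beta'$ can never be consumed or modified, and the reflected suffix is literally the same block in both strings by the length bound above. If the model instead allowed a reflection that reached across into $\beta'$ or deleted material from the left end, this lifting would break; I would therefore anchor the argument precisely to the definition of a single BFB operation and check that the breakage point always falls within the appended right-hand portion. Everything else is routine bookkeeping about concatenating derivations.
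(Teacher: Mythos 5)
Your proof is correct: the suffix-replay induction combined with transitivity of $\bfb$ is precisely the reasoning the paper leaves implicit by stating this as an Observation with no written proof---in the paper's model every BFB step turns $\rho\gamma$ into $\rho\gamma\rev{\gamma}$ for a nonempty suffix $\gamma$, so a frozen left prefix is never consumed and each step of the derivation of $\beta'' \bfb \beta''\gamma$ lifts verbatim to $\beta'\beta''$, exactly as you argue. Your write-up is simply the fully spelled-out version of the one-line justification the authors treat as self-evident, including the correct identification of the only point that needs care (the reflected suffix $q_i$ lies entirely inside $u_{i-1}$, hence is also a suffix of $\beta' u_{i-1}$).
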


Call a string $\alpha$ an \emph{$l$-$t$-string} if for the count vector $\vec{n}(\alpha) = [n_1, n_2, \ldots, n_k]$, $n_r > 0$ if and only if $l \leq r \leq t$.
Thus, an $l$-$t$-BFB string is an $l$-BFB string $\alpha$ such that $\tp = t$. Denote by $\alpha_{l,t}$ the consecutive genomic region $\alpha_{l,t} = \sigma_l \sigma_{l+1} \ldots \sigma_t$ (when $t < l$, $\alpha_{l, t} = \varepsilon$), and observe that $l$-$t$-BFB strings always start with the prefix $\alpha_{l,t}$.

\begin{claim}
\label{clm:substring}
Let $l', l, t$ be integers and $\alpha \beta$ an $l'$-BFB string such that $\beta$ is an $l$-$t$-string. Then,
\begin{enumerate}
	\item If $\beta$ starts with the prefix $\alpha_{l,t}$, then $\alpha_{l,t} \bfb \beta$ (i.e. $\beta$ is an $l$-$t$-BFB string).
	\item If $\beta$ ends with the suffix $\alpha_{l,t}$, then $\rev{\alpha}_{l,t} \bfb \rev{\beta}$.
	\item If $\beta$ starts with the prefix $\rev{\alpha}_{l,t}$, then $\rev{\alpha}_{l,t} \bfb \beta$.
	\item If $\beta$ ends with the suffix $\rev{\alpha}_{l,t}$, then $\alpha_{l,t} \bfb \rev{\beta}$.
\end{enumerate}
\end{claim}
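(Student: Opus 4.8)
The plan is to prove the four parts \emph{simultaneously} by strong induction on the number $m$ of BFB operations used to derive $\alpha\beta$ from its reference $\alpha_{l',t'}$ (equivalently, on $|\alpha\beta|$). The four statements must travel together because a single fold reverses orientation: recalling that a BFB operation appends to a string $\gamma$ the reversal $\overline{u}$ of one of its suffixes $u$, a fold turns a prefix into a suffix and sends $\alpha_{l,t}$ to $\rev{\alpha}_{l,t}$. Consequently, verifying part (1) for $\alpha\beta$ will typically invoke part (2), (3), or (4) for a string with fewer folds, and conversely. For the base case $m=0$ we have $\alpha\beta=\alpha_{l',t'}$, which contains no inverted segments; hence the only $l$-$t$-substrings that can occur are the contiguous blocks $\alpha_{l,t}$, the hypotheses of parts (3)--(4) are vacuous, and parts (1)--(2) hold by the reflexive derivation $\alpha_{l,t}\bfb\alpha_{l,t}$.

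For the inductive step, write the final BFB operation as $\gamma\bfbm[1]\gamma\overline{u}=\alpha\beta$, where $u$ is a suffix of $\gamma$ and $\gamma$ is an $l'$-BFB string obtained with $m-1$ operations. Since $\beta$ is a suffix of $\alpha\beta$, I split on where $\beta$ begins relative to the fold boundary between $\gamma$ and $\overline{u}$. In the straddling case $\beta=\delta\overline{u}$ with $\delta$ a nonempty suffix of $\gamma$, I first apply the induction hypothesis to the pair $(\gamma,\delta)$ --- after checking that the relevant prefix or suffix of $\delta$ still matches $\alpha_{l,t}$ or $\rev{\alpha}_{l,t}$, which it does because $\delta$ is a prefix of $\beta$, with a short two-way split according to whether $u$ is a suffix of $\delta$ or $\delta$ a suffix of $u$ --- to obtain a BFB derivation of $\delta$ (or $\rev{\delta}$). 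The fold appending $\overline{u}$ is itself a BFB operation acting on the suffix $u$, so Observation~\ref{obs:bfb} lets me graft this fold onto the derivation of $\delta$ and reassemble $\beta=\delta\overline{u}$, yielding $\alpha_{l,t}\bfb\beta$ or the appropriately oriented variant.

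The delicate case, and the one I expect to be the main obstacle, is when $\beta$ lies entirely inside the folded part $\overline{u}$. Then $\overline{\beta}$ is only a \emph{prefix} of $u$, hence a prefix of a suffix of $\gamma$ rather than a clean suffix, so the induction hypothesis --- stated for suffixes --- does not apply directly. The orientation bookkeeping is fine ($\beta$ starting with $\alpha_{l,t}$ makes $\overline{\beta}$ end with $\rev{\alpha}_{l,t}$, matching the hypothesis of part (4)); what is missing is boundary alignment. I would resolve this by recursing into $\gamma$: peel its successive folds, tracking the image of $\overline{\beta}$ and flipping both its orientation and the part of the claim being invoked each time it passes through a fold, until it either aligns with a fold boundary and is caught by the straddling case, or descends to the reference as a contiguous block. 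Each peel strictly decreases the number of folds, so this nested descent terminates under the strong induction; the care needed is to verify that at every level the prefix/suffix and the $\alpha_{l,t}$ versus $\rev{\alpha}_{l,t}$ conditions are preserved under the orientation flips.

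Finally, parts (2)--(4) require no separate argument: they are handled in lockstep inside the same induction, each inductive case simply invoking whichever of the four conclusions the relevant fold produces. This interdependence is precisely why all four statements must be carried together, and the only genuine work is the case analysis of the inductive step together with the termination and consistency check for the nested-fold descent.
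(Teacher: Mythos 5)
There is a genuine gap, and it sits exactly where you predicted the difficulty would be --- plus one place where you did not. The single idea your plan is missing is the truncation device that drives the paper's proof: by Lemma~2 of~\cite{first_bfb_paper}, \emph{every prefix of a BFB string is again a BFB string}, so an occurrence of a string whose right endpoint is known can always be converted into a \emph{suffix} of a shorter BFB string, where the induction hypothesis applies. With this, your ``delicate case'' is a one-step argument, not a nested descent: if $\beta$ is a proper suffix of the folded part $\rev{u}$ (write $\gamma = \eta u$, $\rev{u} = u'\beta$ with $u' \neq \varepsilon$), then $u = \rev{\beta}\,\rev{u'}$ and $\alpha\beta = \eta\,\rev{\beta}\,\rev{u'}u'\beta$, so $\eta\rev{\beta}$ is a \emph{proper prefix} of $\alpha\beta$; Lemma~2 makes it an $l'$-BFB string, $\rev{\beta}$ is a suffix of it, and the induction hypothesis applies with parts (1)$\leftrightarrow$(4) and (2)$\leftrightarrow$(3) exchanged. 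Without this, your descent does not go through as described: after one mirroring step the tracked occurrence of $\rev{\beta}$ is an \emph{internal} substring of $\gamma$ (a prefix of $u$, not a suffix of $\gamma$), and your straddling case --- which is stated only for suffixes of the whole string across its \emph{last} fold --- never catches an internal occurrence straddling an interior fold boundary. The occurrence only becomes a suffix if you truncate, which is precisely the missing lemma.

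The straddling case itself also breaks in two places. First, in your sub-case ``$\delta$ a suffix of $u$'' (i.e.\ $\delta$ is a proper suffix of $u$, say $u = w\delta$ with $w \neq \varepsilon$), the graft via Observation~\ref{obs:bfb} is impossible: the appended segment $\rev{u}$ is strictly longer than $\delta$, so no derivation of $\beta = \delta\rev{u}$ can be assembled by extending a derivation of $\delta$; the correct move is again mirror-and-truncate, since $\alpha\beta = \eta\,\rev{\beta}\,\rev{w}$ exhibits $\eta\rev{\beta}$ as a proper prefix (this is the paper's case 2). Second, even in the benign sub-case where $u$ is a suffix of $\delta$, your hand-off only works for the prefix conditions (parts (1) and (3)): for parts (2) and (4) the designated suffix $\phi$ of $\beta$ lies inside $\rev{u}$, not inside $\delta$, so the induction hypothesis applied to $\delta$ says nothing about it. The paper needs a genuinely separate argument here (its case 3): first establish $\rev{\phi} \bfb u$, conclude $\rev{\phi}$ is a prefix of $u$, truncate $\alpha\beta$ right after that prefix, apply the induction hypothesis there to get $\phi \bfb \phi\,\rev{\delta''}$ (where $\delta = \delta'' u$), and chain everything with Observation~\ref{obs:bfb} to reach $\rev{\beta}$. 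So your closing assertion that parts (2)--(4) ``require no separate argument'' is exactly where the proof fails; the four parts do travel together, but not for free.
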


\begin{proof}
When $t < l$, $\alpha_{l, t} = \beta = \varepsilon$, and all four items in the claim are sustained in a straightforward manner. Similarly, when $\alpha \beta = \alpha_{l', t}$, then $\beta = \alpha_{l,t}$ and again all four items in the claim are sustained. Otherwise, $t \geq l$ and there are some $\rho, \gamma$ such that $\gamma \neq \varepsilon$, $\rho \gamma$ is an $l'$-BFB string, and $\alpha \beta = \rho \gamma \rev{\gamma}$. In particular, $\alpha_{l', t}$ is a proper prefix of $\alpha \beta$. Assume by induction that the claim is sustained with respect to all proper prefixes of $\alpha \beta$ (from Lemma~2 in~\cite{first_bfb_paper}, all such prefixes are $l'$-BFB strings). Note that $\beta$, $\rev{\gamma}$, and $\gamma \rev{\gamma}$ are all suffixes of $\alpha \beta = \rho \gamma \rev{\gamma}$.
Consider three cases: 1. $\beta$ is a proper suffix of $\rev{\gamma}$,  2. $\beta$ is a proper suffix of $\gamma \rev{\gamma}$ and $\rev{\gamma}$ is a suffix of $\beta$, and 3. $\gamma \rev{\gamma}$ is a suffix of $\beta$.

\vspace{10pt}
\noindent
\textbf{1.} $\beta$ is a proper suffix of $\rev{\gamma}$.
In this case, $\rev{\gamma} = \gamma' \beta$ for some string $\gamma' \neq \varepsilon$, therefore $\rho \gamma \rev{\gamma} = \rho \rev{\beta} \rev{\gamma'} \gamma' \beta$. From the inductive assumption and the fact that $\rho \rev{\beta}$ is a proper prefix of $\rho \rev{\beta} \rev{\gamma'} = \rho \gamma$ (which is in turn a proper prefix of $\alpha \beta$), $\rho \rev{\beta}$ sustains the claim. Therefore,
\begin{enumerate}
	\item If $\beta$ starts with the prefix $\alpha_{l,t}$, then $\rev{\beta}$ ends with the suffix $\rev{\alpha}_{l,t}$, therefore $\alpha_{l,t} \bfb \beta$.
	\item If $\beta$ ends with the suffix $\alpha_{l,t}$, then $\rev{\beta}$ starts with the prefix $\rev{\alpha}_{l,t}$, therefore $\rev{\alpha}_{l,t} \bfb \rev{\beta}$.
	\item If $\beta$ starts with the prefix $\rev{\alpha}_{l,t}$, then $\rev{\beta}$ ends with the suffix $\alpha_{l,t}$, therefore	$\rev{\alpha}_{l,t} \bfb \beta$.
	\item If $\beta$ ends with the suffix $\rev{\alpha}_{l,t}$, then $\rev{\beta}$ starts with the prefix $\alpha_{l,t}$, therefore $\alpha_{l,t} \bfb \rev{\beta}$.
\end{enumerate}

\vspace{10pt}
\noindent
\textbf{2.} $\beta$ is a proper suffix of $\gamma \rev{\gamma}$ and $\rev{\gamma}$ is a suffix of $\beta$.
In this case, there are some $\gamma_1$ and $\gamma_2$ such that $\gamma_1 \neq \varepsilon$, $\gamma = \gamma_1 \gamma_2$, $\gamma \rev{\gamma} = \gamma_1 \gamma_2 \rev{\gamma}_2 \rev{\gamma}_1$  and $\beta = \gamma_2 \rev{\gamma}_2 \rev{\gamma}_1$. Thus, $\alpha \beta = \rho \gamma \rev{\gamma} = \rho \gamma_1 \gamma_2 \rev{\gamma}_2 \rev{\gamma}_1 = \rho \rev{\beta} \rev{\gamma}_1$. Here also, we get that $\rho \rev{\beta}$ is a proper prefix of $\alpha \beta$, and similarly as in the previous case the inductive assumption implies the correctness of the claim.

\vspace{10pt}
\noindent
\textbf{3.} $\gamma \rev{\gamma}$ is a suffix of $\beta$.
In this case, there is some $\gamma'$ such that $\beta = \gamma' \gamma \rev{\gamma}$, and therefore $\alpha \beta = \alpha \gamma' \gamma \rev{\gamma}$. To show items (1) and (3) in the claim, assume that $\beta$ starts with the prefix $\phi$ such that either $\phi = \alpha_{l,t}$ or $\phi = \rev{\alpha}_{l,t}$, respectively. It must be that $\phi$ is a prefix of $\gamma' \gamma$, since the first character of $\rev{\gamma}$ is the reverse of the last character of $\gamma$, and thus cannot be included in $\phi$. Therefore, from the inductive assumption and the fact that $\alpha \gamma' \gamma$ is a proper prefix of $\alpha \gamma' \gamma \rev{\gamma} = \alpha \beta$ (recall that $\gamma \neq \varepsilon$ and therefore $\rev{\gamma} \neq \varepsilon$), $\phi \bfb \gamma' \gamma$. By definition, $\phi \bfb \gamma' \gamma \rev{\gamma} = \beta$, proving items (1) and (3) in the claim.

To show items (2) and (4) in the claim, assume that $\beta$ ends with the suffix $\phi$ such that either $\phi = \alpha_{l,t}$ or $\phi = \rev{\alpha}_{l,t}$, respectively. Similarly as above, it must be that $\phi$ is a suffix of $\rev{\gamma}$. Note that case (2) of this proof implies that $\rev{\phi} \bfb \gamma$, and by definition $\rev{\phi} \bfb \gamma \rev{\gamma}$. In particular, $\rev{\phi}$ is a prefix of $\gamma$, and therefore the string $\alpha \gamma' \rev{\phi}$ is a proper prefix of $\alpha \beta = \alpha \gamma' \gamma \rev{\gamma}$, and $\rev{\phi}$ is the suffix of the suffix $\gamma' \rev{\phi}$ of $\alpha \gamma' \rev{\phi}$. From the inductive assumption, $\phi \bfb \phi \rev{\gamma'}$. Thus, from Observation~\ref{obs:bfb}, and the fact that $\phi$ is a suffix of $\rev{\gamma}$, we get that $\rev{\phi} \bfb \gamma \rev{\gamma} \bfb \gamma \rev{\gamma} \rev{\gamma'} = \rev{\beta}$, and items (2) and (4) in the clam follow.
\end{proof}

\begin{claim}
\label{clm:tPalindrome}
Let $\alpha$ be a BFB string, and let $\sigma \beta \rev{\sigma}$ be a substring of $\alpha$ such that $\beta$ contains no occurrences of $\sigma$ or $\rev{\sigma}$. Then, $\beta$ is a palindrome.
\end{claim}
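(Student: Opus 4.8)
The plan is to induct on the length of the BFB string $\alpha$, using the recursive decomposition recalled in the proof of Claim~\ref{clm:substring}. If $\alpha$ cannot be folded, it is an increasing run $\alpha_{l',t}=\sigma_{l'}\cdots\sigma_t$, in which every segment occurs exactly once and in a single orientation; then no substring of the form $\sigma\beta\rev{\sigma}$ exists and the statement holds vacuously. Otherwise I would write $\alpha=\rho\gamma\rev{\gamma}$ with $\gamma\neq\varepsilon$ and $\rho\gamma$ a strictly shorter BFB string, fix an occurrence of $\sigma\beta\rev{\sigma}$, and split on where its last character $\rev{\sigma}$ sits relative to the fold between $\gamma$ and $\rev{\gamma}$.

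If $\rev{\sigma}$ lies in $\rho\gamma$, then the whole occurrence lies in the shorter BFB string $\rho\gamma$ and the inductive hypothesis applies verbatim. If $\rev{\sigma}$ lies in the trailing $\rev{\gamma}$ and the leading $\sigma$ also lies in $\rev{\gamma}$, I would reverse-complement: $\rev{(\sigma\beta\rev{\sigma})}=\sigma\rev{\beta}\rev{\sigma}$ is then a substring of $\gamma$, hence of $\rho\gamma$, and $\rev{\beta}$ avoids $\sigma$ and $\rev{\sigma}$ precisely when $\beta$ does; the inductive hypothesis makes $\rev{\beta}$ a palindrome, which is the same as $\beta$ being one. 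The leading $\sigma$ cannot lie in $\rho$: that would force all of the nonempty $\gamma$ into $\beta$, so $\gamma$ and therefore $\rev{\gamma}$ would avoid both $\sigma$ and $\rev{\sigma}$, contradicting $\rev{\sigma}\in\rev{\gamma}$.

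This leaves the crucial straddling case, where $\sigma$ sits in $\gamma$ and $\rev{\sigma}$ in $\rev{\gamma}$. Writing $\gamma=c_1\cdots c_m$, so that $\gamma\rev{\gamma}=c_1\cdots c_m\,\rev{c_m}\cdots\rev{c_1}$, I would set $\sigma=c_i$ and $\rev{\sigma}=\rev{c_j}$ (so $c_j=\sigma$), giving $\beta=c_{i+1}\cdots c_m\,\rev{c_m}\cdots\rev{c_{j+1}}$. The key step is to prove $i=j$: if $j>i$ then $c_j=\sigma$ falls in the forward block $c_{i+1}\cdots c_m$ of $\beta$, while if $j<i$ then $\rev{c_i}=\rev{\sigma}$ falls in the backward block $\rev{c_m}\cdots\rev{c_{j+1}}$ of $\beta$; either possibility puts a forbidden character inside $\beta$. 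Hence $i=j$ and $\beta=c_{i+1}\cdots c_m\,\rev{c_m}\cdots\rev{c_{i+1}}$, which is visibly fixed by reverse-complementation and so is a palindrome. I expect this straddling case to be the main obstacle: everything hinges on showing that the avoidance hypothesis forces the two occurrences of the segment to be mirror images across the fold, after which the palindrome property is immediate. The remaining cases collapse to the inductive hypothesis with little extra work.
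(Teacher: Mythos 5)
Your proof is correct and takes essentially the same approach as the paper's: induction on the final fold $\alpha=\rho\gamma\rev{\gamma}$, reduction to a shorter BFB string when the occurrence avoids the fold (reverse-complementing it into $\gamma$ when it lies inside $\rev{\gamma}$), and a direct mirror-symmetry argument in the straddling case forcing $\beta=w\rev{w}$. The only difference is bookkeeping: the paper first invokes prefix-closure of BFB strings (Lemma~2 of \cite{first_bfb_paper}) to normalize the occurrence to a suffix of $\alpha$, collapsing your four cases into two, whereas you treat a general occurrence and dispatch the extra cases (occurrence inside $\rho\gamma$; leading $\sigma$ inside $\rho$) by the inductive hypothesis and a one-line contradiction.
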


\begin{proof}
From Lemma~2 in~\cite{first_bfb_paper}, every prefix of $\alpha$ is a BFB string, and thus we may assume without loss of generality that $\sigma \beta \rev{\sigma}$ is a suffix of $\alpha$. We prove the claim by induction over the length of $\alpha$. Note that for getting a substring of the form $\sigma \beta \rev{\sigma}$, $\alpha$ must be of the form $\alpha = \rho \gamma \rev{\gamma}$, where $\gamma \neq \varepsilon$ (since strings of the form $\alpha_{l,t}$ cannot contain both characters $\sigma$ and $\rev{\sigma}$). If $\rev{\gamma}$ is a suffix of $\sigma \beta \rev{\sigma}$, then $\rev{\gamma}$ ends with $\rev{\sigma}$, and does not contain any additional occurences of $\sigma$ or $\rev{\sigma}$. Therefore, $\gamma$ starts with $\sigma$, and it must be that $\sigma \beta \rev{\sigma} = \gamma \rev{\gamma}$, and in particular $\beta$ is a palindrome. Else, $\sigma \beta \rev{\sigma}$ is a suffix of $\rev{\gamma}$, therefore $\sigma \rev{\beta} \rev{\sigma}$ is a prefix of $\gamma$. In particular, the prefix $\rho \sigma \rev{\beta} \rev{\sigma}$ of $\rho \gamma$ is a proper prefix of $\alpha$ (since $\rev{\gamma} \neq \varepsilon$). Since $\rho$ is a BFB string (Lemma~2 in~\cite{first_bfb_paper}), the inductive assumption implies that $\rev{\beta}$, and therefore $\beta$, is a palindrome.
\end{proof}

\begin{claim}
\label{clm:convexed}
Let $\alpha$ be a BFB string and $\gamma$ a palindromic concatenation of $l$-blocks, such that $\alpha$ contains $\rev{\alpha}_{l,t} \gamma \alpha_{l,t}$ as a substring and $\tp[\gamma] = t' \leq t$. Then, $\gamma$ is a convexed $l$-palindrome.
\end{claim}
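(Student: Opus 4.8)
The plan is to massage the hypothesis into a form where Claim~\ref{clm:substring} applies directly, and then read off the convexity of $\gamma$. (The degenerate cases $\gamma = \varepsilon$ and $t < l$ are immediate, as in the previous proofs, so assume $t \geq l$ and $\gamma \neq \varepsilon$.) First, I would fix an occurrence of $\rev{\alpha}_{l,t}\gamma\alpha_{l,t}$ inside $\alpha$ and pass to its central substring $\rev{\alpha}_{l,t'}\gamma\alpha_{l,t'}$, obtained by dropping the matched outer pairs $\rev{\sigma}_r,\sigma_r$ for $t' < r \leq t$. This is still a substring of $\alpha$, still of the required form, and now its surrounding runs reach exactly $\tp[\gamma] = t'$; since the conclusion concerns $\gamma$ alone, I may assume $t = t'$. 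Because every prefix of a BFB string is a BFB string (Lemma~2 in~\cite{first_bfb_paper}), I then truncate $\alpha$ immediately after this occurrence, so that $\mu := \rev{\alpha}_{l,t}\gamma\alpha_{l,t}$ becomes a suffix of a BFB string. Note that $\mu$ is a palindrome (as $\gamma$ is) and an $l$-$t$-string: the two runs already supply every level in $[l,t]$, and $\gamma$, a concatenation of $l$-blocks of top $t' \leq t$, contributes nothing outside $[l,t]$.

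Now $\mu$ is a suffix, an $l$-$t$-string, and starts with the prefix $\rev{\alpha}_{l,t}$, so Claim~\ref{clm:substring}(3) yields $\rev{\alpha}_{l,t} \bfb \rev{\alpha}_{l,t}\gamma\alpha_{l,t}$. This is the heart of the argument: it certifies the sandwiched palindrome as a BFB string obtained from the bare descending run $\rev{\alpha}_{l,t}$ by folds that never descend below level $l$, with $\gamma$ sitting at its center at top exactly $t$. By the definition of a convexed $l$-palindrome, this level-$\geq l$ derivability of the sandwich is essentially what must be shown; in the tight regime $t = t'$ it should match the definition directly, with the palindromic $l$-block decomposition of $\gamma$ being exactly the block structure carried along by the folds.

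If the definition of ``convexed'' is not already handed to me by the previous step, I would convert the derivability into block-level convexity by an induction on $|\alpha|$ patterned on the proofs of Claims~\ref{clm:substring} and \ref{clm:tPalindrome}. Writing $\alpha = \rho\delta\rev{\delta}$ for the final fold ($\delta \neq \varepsilon$), I would split on the position of $\mu$ relative to the fold point: when $\mu$ lies inside $\rev{\delta}$, palindromy places $\mu$ inside $\delta$ as well, hence inside the proper prefix $\rho\delta$, and the induction hypothesis applies; when $\mu$ straddles the fold, the fold reflects a suffix of $\rho\delta$ onto $\rev{\delta}$, identifying the outer $l$-blocks of $\gamma$ with its inner ones, and Observation~\ref{obs:bfb} together with the induction hypothesis on the prefix reassembles the reflected half, yielding that the block tops are non-increasing away from the center. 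I expect Claim~\ref{clm:tPalindrome} to help pin down where the fold may cut.

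I expect the straddling case to be the main obstacle. Two points need care: first, that the final fold respects the $l$-block boundaries of $\gamma$, so that ``reflecting a suffix'' really matches whole blocks to whole blocks rather than severing one; and second, that the hypothesis $\tp[\gamma] = t' \leq t$ is precisely what keeps the newly reflected outer layer from rising above the top already present, so that reflecting a suffix of a center-peaked top-profile leaves it center-peaked. Once these alignment and height-bound facts are in hand, everything else—the reduction to $t = t'$, the prefix-closure making $\mu$ a suffix, and the single appeal to Claim~\ref{clm:substring}(3)—is routine.
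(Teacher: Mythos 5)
Your opening reduction is fine, but it proves the wrong statement. Reducing to $t = t'$, truncating $\alpha$ so that $\mu = \rev{\alpha}_{l,t}\gamma\alpha_{l,t}$ is a suffix, and invoking Claim~\ref{clm:substring}(3) correctly yields $\rev{\alpha}_{l,t} \bfb \rev{\alpha}_{l,t}\gamma\alpha_{l,t}$. However, this is exactly the conclusion of Claim~\ref{clm:convexedGeneration} --- the \emph{converse} implication (convexed $\Rightarrow$ derivable) --- and it is not the definition of a convexed $l$-palindrome. That definition is recursive and structural: $\gamma$ must be $\varepsilon$ or factor as $\gamma'\beta\gamma'$ with $\beta$ an $l$-BFB palindrome, $\gamma'$ a convexed $l$-palindrome of strictly smaller top, repeated identically on both sides. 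So nothing can be ``read off'' from derivability; the implication (derivable $\Rightarrow$ convexed) is essentially the claim itself, and your first step has not advanced toward it. Your fallback sketch does not close this gap: you yourself flag the straddling case as ``the main obstacle'' and leave it unresolved, and the target you aim for there --- ``block tops are non-increasing away from the center'' --- is strictly weaker than convexedness, which also requires the central factor to be an $l$-BFB palindrome and the flanks to be recursively convexed, not merely height-monotone.

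The paper's proof works on a different axis: induction on the number of $l$-blocks in $\gamma$, not on $|\alpha|$ or on the derivation. Writing $\gamma = \beta_1\cdots\beta_q\beta_{q+1}\beta_q\cdots\beta_1$, let $i$ be minimal with $\tp[\beta_i] = t'$, and split $\gamma = \gamma'\gamma''\rev{\gamma'}$ where $\gamma' = \beta_1\cdots\beta_{i-1}$ has top $< t'$ and $\gamma'' = \beta_i\cdots\beta_i$ is palindromic with top $t'$. Then (a) $\gamma''$ starts with $\alpha_{l,t'}$, so Claim~\ref{clm:substring}(1) applied to a suitable prefix of $\alpha$ certifies $\gamma''$ as an $l$-BFB string, hence an $l$-BFB palindrome; (b) $\alpha$ contains $\rev{\sigma}_{t'}\bigl(\rev{\alpha}_{l,t'-1}\gamma'\alpha_{l,t'-1}\bigr)\sigma_{t'}$ with no occurrence of $\sigma_{t'}^{\pm}$ inside, so Claim~\ref{clm:tPalindrome} forces $\gamma'$ to be a palindrome, giving $\gamma = \gamma'\gamma''\gamma'$; and (c) the inductive hypothesis applies to $\gamma'$, which has fewer blocks and top $< t'$. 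This directly reconstructs the recursive definition layer by layer, which is precisely the step your proposal is missing: you need an induction that produces the factorization $\gamma'\gamma''\gamma'$ together with proofs that $\gamma''$ is an $l$-BFB palindrome and $\gamma'$ is again of the hypothesis's form, rather than an induction over folds of the ambient string, where block boundaries and fold points interact in the uncontrolled way you anticipated.
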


\begin{proof}
By induction on the number of $l$-blocks composing $\gamma$. If $\gamma$ is composed of zero $l$-blocks, then $\gamma = \varepsilon$, which is a convexed $l$-palindrome by definition. Otherwise, $\gamma$ is of the form $\gamma = \beta_1 \beta_2 \ldots \beta_{q} \beta_{q+1} \beta_{q} \ldots \beta_2 \beta_1$, where $\beta_i$ is an $l$-block for every $1 \leq i \leq q$, and $\beta_{q+1}$ is an $l$-block in case $\gamma$ is composed of an odd number $2q + 1$ of blocks and $\beta_{q+1} = \varepsilon$ in case $\gamma$ is composed of an even number $2q$ of blocks. Let $i$ be the minimum index such that $\tp[\beta_i] = t'$. Observe that $\gamma = \gamma' \gamma'' \rev{\gamma'}$, where $\gamma' = \beta_1 \beta_2 \ldots \beta_{i-1}$ is a concatenation of $l$-blocks such that $\tp[\gamma'] < t'$  (from the selection of $i$), and $\gamma'' = \beta_i \ldots \beta_{q} \beta_{q+1} \beta_{q} \ldots \beta_i$ is a palindromic concatenation of $l$-blocks with $\tp[\gamma''] = t'$. Since $\gamma''$ is a substring of $\alpha$, it is the suffix of some prefix $\alpha'$ of $\alpha$. From Lemma~2 in~\cite{first_bfb_paper}, $\alpha'$ is a BFB string. From the fact that $\gamma''$ starts with $\alpha_{l,t'}$ (as $\alpha_{l,t'}$ is a prefix of the $l$-$t'$-block $\beta_i$), we get from Claim~\ref{clm:substring} that $\gamma''$ is an $l$-BFB string, and in particular it is an $l$-BFB palindrome. In addition, observe that $\alpha$ contains $\rev{\alpha}_{l,t'} \gamma' \alpha_{l,t'} = \rev{\sigma}_{t'} \rev{\alpha}_{l,t'-1} \gamma' \alpha_{l,t'-1} \sigma_{t'}$ as a substring. Since $\rev{\alpha}_{l,t'-1} \gamma' \alpha_{l,t'-1}$ does not contain occurrences of $\sigma_{t'}$ or $\rev{\sigma}_{t'}$, from Claim~\ref{clm:tPalindrome}, $\rev{\alpha}_{l,t'-1} \gamma' \alpha_{l,t'-1}$, and in particular $\gamma'$, is a palindrome.
Thus, from the inductive assumption, $\gamma'$ is a convexed $l$-palindrome, and by definition $\gamma = \gamma' \gamma'' \rev{\gamma'} = \gamma' \gamma'' \gamma'$ is a convexed $l$-palindrome.
\end{proof}

\begin{claim}
\label{clm:convexedGeneration}
Let $l, t', t$ be integers such that $l, t' \leq t$. For every convexed $l$-$t'$-palindrome $\gamma$,  $\rev{\alpha}_{l,t} \bfb \rev{\alpha}_{l,t} \gamma \alpha_{l,t}$.
\end{claim}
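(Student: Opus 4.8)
The plan is to induct on the number of $l$-blocks composing $\gamma$, proving the statement simultaneously for all admissible $t$ (that is, all $t \ge \max(l,\tp[\gamma])$). The base case $\gamma = \varepsilon$ is immediate: appending the reverse of the whole string $\rev{\alpha}_{l,t}$ to itself is a single BFB fold producing $\rev{\alpha}_{l,t}\alpha_{l,t} = \rev{\alpha}_{l,t}\gamma\alpha_{l,t}$. For the inductive step I write $\gamma = \gamma'\gamma''\gamma'$ as furnished by the recursive definition of a convexed $l$-palindrome, where $\gamma'$ is a convexed $l$-palindrome with $\tp[\gamma'] < t'$ and $\gamma''$ is an $l$-BFB palindrome with $\tp[\gamma''] = t' \le t$; note that $\gamma'$ has strictly fewer $l$-blocks than $\gamma$, so the inductive hypothesis applies to it (at any admissible parameter).

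The central idea is to realize the target as a single closing fold $H \bfb H\rev{H}$ of a suitable prefix $H$. Since every palindrome over the signed alphabet has even length (no character equals its own reverse), I may write $\gamma'' = \mu\rev{\mu}$ with $\mu$ its first half, and set $H = \rev{\alpha}_{l,t}\gamma'\mu$. Using $\rev{\gamma'} = \gamma'$ and $\gamma'' = \mu\rev{\mu}$ one checks that $H\rev{H} = \rev{\alpha}_{l,t}\gamma'\mu\rev{\mu}\gamma'\alpha_{l,t} = \rev{\alpha}_{l,t}\gamma\alpha_{l,t}$, so it suffices to establish $\rev{\alpha}_{l,t} \bfb H$ and then append $\rev{H}$ by one fold. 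This closing fold is exactly what regenerates the tall right wall $\alpha_{l,t}$ out of the initial left wall $\rev{\alpha}_{l,t}$; this is essential, because in $H$ the characters of index up to $t$ occur only inside that initial wall and so cannot be produced by any local fold later in the string.

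To reach $H$ from $\rev{\alpha}_{l,t}$ I append in two stages, each justified by Observation~\ref{obs:bfb}. First, applying the inductive hypothesis to $\gamma'$ with parameter $t'$ gives $\rev{\alpha}_{l,t'} \bfb \rev{\alpha}_{l,t'}\gamma'\alpha_{l,t'}$; since $\rev{\alpha}_{l,t'}$ is a suffix of $\rev{\alpha}_{l,t}$, Observation~\ref{obs:bfb} upgrades this to $\rev{\alpha}_{l,t} \bfb \rev{\alpha}_{l,t}\gamma'\alpha_{l,t'}$. Second, I need $\alpha_{l,t'} \bfb \mu$, which then lets me grow the suffix $\alpha_{l,t'}$ of the current string into $\mu$, yielding $\rev{\alpha}_{l,t} \bfb \rev{\alpha}_{l,t}\gamma'\mu = H$ by Observation~\ref{obs:bfb}. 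I expect this second ingredient to be the crux. I plan to obtain it from Claim~\ref{clm:substring}(4): the string $\gamma'' = \mu\rev{\mu}$ is an $l$-BFB string whose suffix $\rev{\mu}$ is an $l$-$t'$-string ending with $\rev{\alpha}_{l,t'}$ (because $\mu$, like $\gamma''$, begins with $\alpha_{l,t'}$, as $|\mu| \ge |\alpha_{l,t'}|$), so the claim gives $\alpha_{l,t'} \bfb \rev{(\rev{\mu})} = \mu$. Chaining the two appends with the closing fold then gives $\rev{\alpha}_{l,t} \bfb \rev{\alpha}_{l,t}\gamma\alpha_{l,t}$, completing the induction. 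The main obstacle is precisely the recognition that the half-string $\mu$ is itself BFB-generable from $\alpha_{l,t'}$; once Claim~\ref{clm:substring} supplies this, the remainder is routine bookkeeping with Observation~\ref{obs:bfb}.
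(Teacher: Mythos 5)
Your proposal is correct and follows essentially the same route as the paper's own proof: decompose $\gamma = \gamma'\gamma''\gamma'$, extend $\rev{\alpha}_{l,t}$ to $\rev{\alpha}_{l,t}\gamma'\mu$ where $\mu$ is the first half of $\gamma''$ (using Observation~\ref{obs:bfb} for the pasting steps), and finish with a single fold of the entire string, which works since $\rev{\gamma'}=\gamma'$. The only minor variations are that you induct on the number of $l$-blocks rather than on $t'$, you apply the inductive hypothesis at parameter $t'$ and extend via Observation~\ref{obs:bfb} where the paper applies it at $t$ and truncates to a prefix, and you derive $\alpha_{l,t'} \bfb \mu$ from Claim~\ref{clm:substring}(4), whereas the paper gets this directly from the definition of an $l$-$t'$-BFB palindrome as $\alpha\rev{\alpha}$ with $\alpha$ an $l$-$t'$-BFB string.
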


\begin{proof}
We prove the claim by induction on $t'$. When $t' < l$, $\gamma = \varepsilon$ is the only convexed $l$-$t'$-palindrome, and by definition $\rev{\alpha}_{l,t} \bfb \rev{\alpha}_{l,t} \alpha_{l,t}$. Otherwise, $t' \geq l$, and assume by induction the claim holds for every $l, t'', t$ such that $t'' < t' \leq t$. By definition, $\gamma$ is of the form $\gamma' \beta \gamma'$, where $\gamma'$ is a convexed $l$-$t''$-palindrome such that $t'' < t'$, and $\beta$ is an $l$-$t'$-BFB palindrome. From the inductive assumption, $\rev{\alpha}_{l,t} \bfb \rev{\alpha}_{l,t} \gamma' \alpha_{l,t}$, and therefore $\rev{\alpha}_{l,t} \bfb \rev{\alpha}_{l,t} \gamma' \alpha_{l,t'}$. As $\beta$ is an $l$-$t'$-BFB palindrome, $\beta$ is of the form $\beta = \alpha \rev{\alpha}$, where $\alpha$ is an $l$-$t'$-BFB string. In particular, $\alpha_{l,t'} \bfb \alpha$, and from Observation~\ref{obs:bfb} and the fact that $\rev{\alpha}_{l,t} \bfb \rev{\alpha}_{l,t} \gamma' \alpha_{l,t'}$, we get that $\rev{\alpha}_{l,t} \bfb \rev{\alpha}_{l,t} \gamma' \alpha \bfb \rev{\alpha}_{l,t} \gamma' \alpha \rev{\alpha} \rev{\gamma'} \alpha_{l, t} = \rev{\alpha}_{l,t} \gamma' \beta \gamma' \alpha_{l, t} = \rev{\alpha}_{l,t} \gamma \alpha_{l, t}$.
\end{proof}

Finally, we turn to prove the correctness of Claim 1 from the main manuscript.

\vspace{6pt}
\noindent
\textbf{Claim 1.}
\emph{
  A string $\alpha$ is an $l$-BFB palindrome if and only if $\alpha =
  \varepsilon$, $\alpha$ is an $l$-block, or $\alpha = \beta \gamma
  \beta$, such that $\beta$ is an $l$-BFB palindrome, $\gamma$ is a convexed
  $l$-palindrome, and $\lesseq{\gamma}{\beta}$.
}

\begin{proof}

By definition, if $\alpha = \varepsilon$ or $\alpha$ is an $l$-block, then $\alpha$ is an $l$-BFB palindrome. Thus, it remains to show that when $\alpha$ is neither $\varepsilon$ nor an $l$-block, $\alpha$ is an $l$-BFB palindrome if and only if $\alpha = \beta \gamma \beta$, such that $\beta$ is an $l$-BFB palindrome, $\gamma$ is a convexed $l$-palindrome, and $\lesseq{\gamma}{\beta}$. Let $t = \tp[\alpha]$.

Assume that $\alpha$ is an $l$-BFB palindrome which is neither $\varepsilon$ nor an $l$-block. Therefore, $\alpha$ is a concatenation of at least two $l$-blocks, and so $\alpha$ is of the form $\alpha = \beta \gamma \beta$, such that $\beta$ is an $l$-block and $\gamma$ is some palindromic concatenation of $l$-blocks. Thus, $\beta$ must start with the prefix $\alpha_{l,t}$ and end with the suffix $\rev{\alpha}_{l,t}$, and $\tp[\gamma] \leq t = \tp[\beta]$. In addition, observe that $\rev{\alpha}_{l,t} \gamma \alpha_{l,t}$ is a substring of $\alpha$, and from Claim~\ref{clm:convexed}, $\gamma$ is a convexed $l$-palindrome, proving this direction of the claim.

For the other direction, assume that $\alpha = \beta \gamma \beta$, such that $\beta$ is an $l$-BFB palindrome, $\gamma$ is a convexed $l$-palindrome, and $\lesseq{\gamma}{\beta}$. Therefore, $\tp[\beta] = t$ , and $\tp[\gamma] = t' \leq t$. Since $\beta$ is an $l$-$t$-BFB string, it starts with the prefix $\alpha_{l,t}$, and being a palindrome it ends with the suffix $\rev{\alpha}_{l,t}$. From Claim~\ref{clm:convexedGeneration} and Observation~\ref{obs:bfb}, $\beta \gamma \alpha_{l,t}$ is an $l$-BFB string, and applying again Observation~\ref{obs:bfb}, $\beta \gamma \beta = \alpha$ is an $l$-BFB string. Being a palindrome, $\alpha$ is an $l$-BFB palindrome.
\end{proof}

\section{Algorithm SEARCH-BFB}

This section completes the missing details in the description of Algorithm SEARCH-BFB in the main manuscript. We describe the FOLD procedure, prove the correctness of the algorithm, and analyze its running time.

\subsection{Additional Notation and Collection Arithmetics}

In order to give an implementation of the FOLD procedure, we first add notation and definitions of some new entities, and observe related properties. For short, from now on we simply say a ``collection'' when referring to an $l$-BFB palindrome collection (in some cases we will explicitly indicate that the collection is an $l$-block collection). A collection containing a single element $\beta$ will be simply denoted by $\beta$, instead of $\{\beta\}$.

For two numbers $t, t'$ and a collection $\B$, $\rng{t}{t'}$ denotes the sub-collection containing all elements $\beta$ in $\B$ such that $t \leq \tp[\beta] < t'$. Denote $\geqt{t} = \rng{t}{\infty}$ and $\lt{t} = \B - \geqt{t} = \rng{0}{t}$.
For a nonempty collection $\B$, denote $\mint(\B) = \displaystyle{\min_{\beta \in \B} \{\tp[\beta]\}}$, where $\mint(\emptyset)$ is defined to be $\infty$. Say that an element $\beta \in \B$ is \emph{minimal} in $\B$ if $\tp[\beta] = \mint(\B)$.
The collection $\B = \B' \cap \B''$ contains all elements appearing in both $\B'$ and $\B''$, where the count of each element $\beta \in \B$ equals to the minimum among the counts of $\beta$ in $\B'$ and $\B''$.
Say that $\B' \subseteq \B$ if $\B' = \B \cap \B'$.
Notations of the form $\ser{a}$ will denote series $\ser{a} = a_0, a_1, a_2, \ldots$, and $\ser{a}_d$ denotes the prefix $a_0, a_1, \ldots, a_d$ of $\ser{a}$.
For an integer $m \neq 0$, denote by $\dig{m}$ the maximum integer $d \geq 0$ such that $m$ is divided by $2^d$. For example, $\dig{8} = \dig{-24} = 3$, and $\dig{7} = 0$. Observe that $\dig{m} = 0$ when $m$ is odd, and otherwise $\dig{m} = 1 + \dig{\frac{m}{2}}$. $\dig{m}$ can also be understood as the index of the least significant bit different from 0 in the binary representation of $m$, and in particular $\dig{m} \leq \log_2 m$.

\begin{observation}
\label{obs:modTwo}
For two collections $\B, \B'$,
\begin{itemize}
	\item $\begin{array}{rl}
		\modTwo{\B + \B'} = & \modTwo{\B + \modTwo{\B'}} = \modTwo{\modTwo{\B} + \B'} \\
			= & \modTwo{\modTwo{\B} + \modTwo{\B'}} \\
			= & \modTwo{\B} + \modTwo{\B'} - 2 \times (\modTwo{\B} \cap \modTwo{\B'})
	\end{array}$.
	\item For an integer $i \geq 0$, $\modTwo{\B + i \times \B'} = \modTwo{\B + \B'}$ when $i$ is odd, and $\modTwo{\B + i \times \B'} = \modTwo{\B}$ when $i$ is even.
	 In particular, $\modTwo{\B - \B'} = \modTwo{\B - \B' + 2\B'} = \modTwo{\B + \B'}$.
	\item For two integers $t$ and $t'$, $\modTwo{\rng{t}{t'}} = \rng[\left(\modTwo{\B}\right)]{t}{t'}$.
\end{itemize}
\end{observation}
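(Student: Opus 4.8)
The plan is to reduce all three bullets to elementary facts about a single integer count, exploiting that every operation in the statement acts independently on each element of a collection. I would fix an arbitrary element $\beta$ and write $a$ and $b$ for its counts in $\B$ and $\B'$. Then at $\beta$ the collection $\B+\B'$ has count $a+b$, the collection $i\,\B'$ has count $i\cdot b$, the intersection $\B\cap\B'$ has count $\min(a,b)$, and $\modTwo{\cdot}$ replaces any count $c$ by $c\bmod 2\in\{0,1\}$. Since two collections are equal precisely when they agree in the count of every element, it suffices to check each displayed identity as an identity of integers reduced modulo $2$ at this arbitrary $\beta$.

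For the first bullet I would invoke the single fact $c\equiv (c\bmod 2)\pmod 2$ to strip every inner $\modTwo{\cdot}$, so that each of $\modTwo{\B+\B'}$, $\modTwo{\B+\modTwo{\B'}}$, $\modTwo{\modTwo{\B}+\B'}$, and $\modTwo{\modTwo{\B}+\modTwo{\B'}}$ evaluates at $\beta$ to $(a+b)\bmod 2$. For the last expression, $\modTwo{\B}+\modTwo{\B'}-2\,(\modTwo{\B}\cap\modTwo{\B'})$ evaluates at $\beta$ to $(a\bmod2)+(b\bmod2)-2\min(a\bmod2,b\bmod2)$, which I would verify equals $(a+b)\bmod 2$ by the three cases $a,b$ both odd, of mixed parity, or both even; this is exactly the XOR identity, with the $\min$ term realizing the logical AND.

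For the second bullet the key observation is $i\cdot b\equiv (i\bmod2)\,b\pmod 2$, whence $(a+i\cdot b)\bmod 2$ equals $(a+b)\bmod 2$ for odd $i$ and equals $a\bmod 2$ for even $i$. The ``in particular'' clause follows without any appeal to non-negativity of counts: the identity $(\B-\B')+2\,\B'=\B+\B'$ holds pointwise, so applying the even case with $i=2$ to the base $\B-\B'$ gives $\modTwo{\B+\B'}=\modTwo{(\B-\B')+2\,\B'}=\modTwo{\B-\B'}$, using only $-b\equiv b\pmod 2$.

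For the third bullet I would observe that the range operator $\rng{t}{t'}$ depends only on the tops $\tp[\beta]$ and not on the counts: it preserves the count of each $\beta$ with $t\le \tp[\beta]<t'$ and sends every other count to $0$. Because $\modTwo{\cdot}$ acts purely on counts and fixes $0$, the two operators commute, and both sides evaluate at $\beta$ to $a\bmod 2$ when $\beta$ is in range and to $0$ otherwise. I do not expect a genuine obstacle here: the whole content is the ``everything is pointwise'' reduction of the first paragraph, after which each identity is an immediate parity case-check. The only places demanding a little care are the $\min$/intersection term in the first bullet and ensuring, as above, that the subtraction identity is obtained via the $+2\,\B'$ trick rather than by manipulating possibly-negative counts directly.
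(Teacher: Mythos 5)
Your proof is correct. The paper states this result as an Observation with no proof at all, and your pointwise reduction---checking each identity on the count of an arbitrary element as a parity identity of integers, with the $\min$/intersection term realizing XOR and the $+2\B'$ trick handling subtraction---is exactly the routine verification the paper leaves implicit.
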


\begin{definition}
\label{def:convexedCollection}
A \emph{convexed $l$-collection of order $q$} is an $l$-BFB palindrome
collection $\A$ of the form $\A = \{\alpha_1,
2\times\alpha_1, 4\times\alpha_2, \ldots, 2^{q-1}\times\alpha_{q}\}$, where
$\alpha_{q} <^t \alpha_{q-1} <^t \ldots <^t \alpha_{1}$.
\end{definition}

A convexed $l$-collection  of order $q$ $\A = \{\alpha_1, \ldots, 2^{q-1}\times\alpha_{q}\}$ satisfies $\left|\A\right| =
2^q -1$. In addition, $\A = \emptyset$ when $q = 0$, and when $\A \neq \emptyset$, $\modTwo{\A} = \alpha_1$ and $\frac{\A}{2} \equiv \frac{1}{2} \times \A = \{\alpha_2,
2 \times \alpha_{3}, \ldots, 2^{q-2} \times \alpha_{q}\}$ is a convexed
$l$-collection of order $q-1$. It is possible to concatenate all elements in $\A$ to produce a convexed $l$-palindrome $\gamma_{\A}$, where $\gamma_{\A} = \varepsilon$ if $\A = \emptyset$, and otherwise $\gamma_{\A} = \gamma_{\frac{\A}{2}} \alpha_1 \gamma_{\frac{\A}{2}}$.
In Fig. 2a, all 1-blocks besides the two repeats of $\beta_{1}$ form a convexed $1$-collection $\A=\{\beta_{2}, 2\times \beta_{3}, 4\times \beta_{4}\}$ of order
$3$, where $\gamma_{\A} =
\beta_{4}\beta_{3}\beta_{4}\beta_{2}\beta_{4}\beta_{3}\beta_{4}$.

\begin{observation}
\label{obs:modTwoA}
For a convexed $l$-collection $\A$ and an integer $m$, $|\modTwo{m \times \A}| = 0$ if either $m$ is even or $\A = \emptyset$, and otherwise $|\modTwo{m \times \A}| = 1$.
\end{observation}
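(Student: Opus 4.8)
The plan is to reduce everything to the explicit coefficient pattern of a convexed $l$-collection and then invoke the parity identities already recorded in Observation~\ref{obs:modTwo}. Recall that a convexed $l$-collection of order $q$ has the form $\A = \{\alpha_1, 2\times\alpha_2, 4\times\alpha_3, \ldots, 2^{q-1}\times\alpha_q\}$, so the coefficient of each distinct element $\alpha_i$ is the power $2^{i-1}$; in particular $\alpha_1$ is the unique element carrying an odd coefficient, and every other coefficient is even.

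First I would dispose of the two degenerate cases. If $\A = \emptyset$, then $m \times \A = \emptyset$ and hence $\modTwo{m \times \A} = \emptyset$, giving $|\modTwo{m \times \A}| = 0$. If $m$ is even, I would apply the second bullet of Observation~\ref{obs:modTwo} with $\B = \emptyset$ and $\B' = \A$, which yields $\modTwo{m \times \A} = \modTwo{\emptyset} = \emptyset$, so again $|\modTwo{m \times \A}| = 0$.

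For the remaining case I assume $m$ is odd and $\A \neq \emptyset$ (so $q \geq 1$). The same bullet of Observation~\ref{obs:modTwo}, now with the odd multiplier $m$, gives $\modTwo{m \times \A} = \modTwo{\A}$. Since $\A \neq \emptyset$ is a convexed $l$-collection, the remark following Definition~\ref{def:convexedCollection} gives $\modTwo{\A} = \alpha_1$, a single-element collection, so $|\modTwo{m \times \A}| = 1$, completing the argument.

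I do not expect any genuine obstacle here: the claim is essentially a direct reading of the coefficient sequence $1, 2, 4, \ldots, 2^{q-1}$, in which exactly one term is odd. The only point meriting care is that Observation~\ref{obs:modTwo} is phrased for nonnegative multipliers, so to cover a negative $m$ I would first note that $\modTwo{}$ depends only on the parity of each coefficient and is therefore unchanged when $m$ is replaced by $|m|$, after which the case analysis above applies verbatim.
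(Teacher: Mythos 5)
Your proof is correct and follows exactly the route the paper intends: the paper states this as an unproved observation, and it is immediate from the coefficient pattern $1, 2, 4, \ldots, 2^{q-1}$ (only $\alpha_1$ has odd count, so $\modTwo{\A} = \alpha_1$, as noted after Definition~\ref{def:convexedCollection}) together with the parity rules of Observation~\ref{obs:modTwo}. Your extra remark about negative $m$ is a harmless refinement of a corner case the paper never exercises, since $m \times \A$ is only ever used with nonnegative multipliers.
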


\begin{claim}
\label{clm:subConvex}
Let $\A = \{\alpha_1, \ldots, 2^{j-1} \times \alpha_j, \ldots, 2^{r-1} \times \alpha_r\}$ and $\A' = \{\alpha_1, \ldots, 2^{j-1} \times \alpha_j\}$ be two convexed $l$-collections (where $\A' \subseteq \A$, and it is possible that $\A' = \emptyset$).
For every number $t$, there is an integer $x \geq 0$ and a convexed $l$-collection $\hat{\A}$
such that $\lt[(\A - \A')]{t} = 2^x \times \hat{\A}$. In addition, if $\A' \neq \emptyset$ then $x > 0$ and $|\hat{\A}| < |\A|$
\end{claim}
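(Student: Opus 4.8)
The plan is to exploit the rigid shape of a convexed $l$-collection and show that, after subtracting $\A'$, the operation $\lt[(\A - \A')]{t}$ merely chops off an initial segment of the defining chain. Writing $\A = \{\alpha_1, 2\times\alpha_2, \ldots, 2^{r-1}\times\alpha_r\}$ with $\alpha_r <^{t} \alpha_{r-1} <^{t} \cdots <^{t} \alpha_1$, and $\A' = \{\alpha_1, \ldots, 2^{j-1}\times\alpha_j\}$, the difference is
\[
\A - \A' = \{2^{j}\times\alpha_{j+1},\; 2^{j+1}\times\alpha_{j+2},\; \ldots,\; 2^{r-1}\times\alpha_r\},
\]
so the element $\alpha_i$ appears with multiplicity $2^{i-1}$ for every $i > j$.

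First I would record that the tops are non-increasing along the chain, $\tp[\alpha_1] \geq \tp[\alpha_2] \geq \cdots \geq \tp[\alpha_r]$, since $<^{t}$ refines the order on tops (exactly as already used for $\leq^{t}$ in Claim~1, where $\lesseq{\gamma}{\beta}$ forces $\tp[\gamma] \leq \tp[\beta]$). Consequently the index set $\{i : \tp[\alpha_i] < t\}$ is upward closed, so retaining only the elements with top below $t$ keeps precisely a suffix of the chain: there is an index $k$ with $j \leq k \leq r$ for which
\[
\lt[(\A - \A')]{t} = \{2^{k}\times\alpha_{k+1},\; 2^{k+1}\times\alpha_{k+2},\; \ldots,\; 2^{r-1}\times\alpha_r\},
\]
with the convention $k = r$ when this sub-collection is empty.

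Next I would factor out the common power of two. Taking $x = k$ and $\hat{\A} = \{\alpha_{k+1},\; 2\times\alpha_{k+2},\; \ldots,\; 2^{r-k-1}\times\alpha_r\}$ gives $\lt[(\A - \A')]{t} = 2^{x}\times\hat{\A}$, and $\hat{\A}$ has exactly the form required by Definition~\ref{def:convexedCollection}, its chain $\alpha_r <^{t} \cdots <^{t} \alpha_{k+1}$ being inherited from that of $\A$; hence $\hat{\A}$ is a convexed $l$-collection of order $r - k$ (empty when $k = r$), settling the existence part with $x \geq 0$. For the additional clause, if $\A' \neq \emptyset$ then $j \geq 1$, whence $x = k \geq j \geq 1 > 0$; and since $|\A| = 2^{r} - 1$ while $|\hat{\A}| = 2^{r-k} - 1$ with $k \geq 1$, we have $r - k < r$ and therefore $|\hat{\A}| < |\A|$.

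The only step demanding genuine care is the suffix-selection observation: that filtering by $\tp[\cdot] < t$ deletes precisely an initial block of the chain rather than an arbitrary subset. This hinges entirely on the monotonicity of tops along $<^{t}$; once that is in hand, the remainder is bookkeeping with the geometric multiplicities $1, 2, 4, \ldots$ and the size identity $|\hat{\A}| = 2^{\,r-k} - 1$.
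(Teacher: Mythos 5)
Your proof is correct and follows essentially the same route as the paper's: identify the suffix of the chain surviving the $\tp[\cdot] < t$ filter (using monotonicity of tops along $<^{t}$), factor out the common power of two, and observe the remainder is again a convexed $l$-collection of strictly smaller order. Your indexing of $\hat{\A}$ starting at $\alpha_{k+1}$ is in fact slightly cleaner than the paper's, which writes $\alpha_x$ where $\alpha_{x+1}$ is meant.
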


\begin{proof}
First, note that $\A - \A' = \{2^j \times \alpha_{j+1}, \ldots, 2^{r-1} \times \alpha_r\}$. Now, let $x = j$ if $\tp[\alpha_{j+1}] < t$, and otherwise let $x$ be the maximum integer in the range $j < x \leq r$ such that $\tp[\alpha_{x}] \geq t$. Then, $\lt[(\A-\A')]{t} = \{2^{x} \times \alpha_{x+1}, \ldots, 2^{r-1} \times \alpha_r\} = 2^{x} \times \{\alpha_{x}, \ldots, 2^{r-x-1} \times \alpha_r\}$. Choosing $\hat{\A} = \{\alpha_{x}, \ldots, 2^{r-x-1} \times \alpha_r\}$, the claim follows.
\end{proof}

\begin{definition}
\label{def:decomposition}
Let $\B = \{\muli{1} \times \beta_1, \muli{2} \times \beta_2, \ldots,
\muli{q} \times \beta_q\}$ be an $l$-BFB palindrome collection. The
\emph{decomposition} of $\B$ is a series triplet $\left\langle \Bs, \Bls,
  \Bhs\right\rangle$, whose elements are recursively
defined as follows:
\begin{itemize}
	\item $\B_0 = \B$, and $\B_d = \frac{1}{2} \times \left(\B_{d-1} - \Bl_{d-1} - \Bh_{d-1}\right)$ for $d>0$.
	\item $\Bl_d = \modTwo{\B_d}$.
	\item $\Bh_d = \geqt[(\B_d - \Bl_d)]{\mint(\Bl_d)}$.
\end{itemize}

Denote by $r(\B)$ the minimum integer $r$ such that
$\B_r = \emptyset$.
\end{definition}

\ifthenelse{\boolean{showFigs}}{
\begin{table}
  \caption{{\normalsize
  The decomposition and signature of the collection $\B = \{2 \times \beta_1, \beta_2, 2 \times \beta_3, 4 \times \beta_{4}\}$ appearing in Fig. 2a. Here, $r(\B) = 3$.
}}
\begin{minipage}[H]{.40\textwidth}
\begin{tabular*}{\hsize}{@{\extracolsep{\fill}}c|cccc}
$d$	& $\B_d$  & $\Bl_d$ & $\Bh_d$  & $s_d$ \\ \hline
$0$ 	&
$\{2 \times \beta_{1}, \beta_{2}, 2 \times \beta_{3}, 4 \times \beta_{4}\}$ &
$\beta_{2}$ & $2 \times \beta_{1}$ & $1$\\
$1$		&
$\{\beta_{3}, 2 \times \beta_{4}\}$ &
$\beta_{3}$ & $\emptyset$ & $0$\\
$2$		&
$\beta_{4}$ & $\beta_{4}$ & $\emptyset$ & $0$\\
$3$ & $\emptyset$ & $\emptyset$ & $\emptyset$ &  $-1$ \\
$4$ & $\emptyset$ & $\emptyset$ & $\emptyset$ &  $0$ \\
$\vdots$ & $\vdots$ & $\vdots$ & $\vdots$ & $\vdots$
\end{tabular*}
\label{tab:decomposition}
 \end{minipage}
\end{table}

}{}
Table~\ref{tab:decomposition} gives the decomposition of the collection $\B^1$ corresponding to Fig. 2a in the main manuscript.
In what follows, let $\B$ be a collection, $\left\langle \Bs, \Bls, \Bhs\right\rangle$
its decomposition, and $r = r(\B)$.

\begin{comment}

By definition, $\Bl_d, \Bh_d \subseteq \B_d$.  It
may be observed that the count of each element in $\Bl_d$ is exactly 1
(by definition of the $\modTwo{\cdot}$ operation), i.e. $\modTwo{\Bl_d} = \Bl_d$, the count
of each element in $\B_d - \Bl_d$ is even (since reducing $\Bl_d$ from $\B_d$ decreases by 1 the count of each element with an odd count in $\B_d$), therefore the counts
of all elements in $\Bh_d$ and in $\B_d - \Bl_d -
\Bh_d$ are even (since nonzero counts in these collections equal to
the corresponding even counts in $\B_d - \Bl_d$), i.e. $\modTwo{\B_d - \Bl_d} = \modTwo{\Bh_d} = \modTwo{\B_d - \Bl_d - \Bh_d} = \emptyset$.
In addition, every single occurrence of an element $\beta \in \B_d$
(and in particular every $\beta \in \Bh_d$ or $\beta \in \Bl_d$),
corresponds to $2^d$ repeats of $\beta$ in $\B$.
\end{comment}

\begin{definition}
\label{def:t}
For a collection $\B$, $\ser{t}(\B) = \ser{t}$ is the non-decreasing series of numbers whose elements are given by
$t_0 = \infty$, and $t_d = \min(\mint(\Bl_d), t_{d-1}))$ for $d > 0$.
\end{definition}

The following observation may be easily asserted, in an inductive manner.

\begin{observation}
\label{obs:t}
For a collection $\B$ and every integer $d\geq0$, $\Bh_d = \geqt[(\B_d - \Bl_d)]{t_{d+1}}$, $\lt{t_{d}} = 2^d \times \B_d$,
 and $\rng{t_{d+1}}{t_{d}} = 2^d \times (\Bl_d + \Bh_d)$.
\end{observation}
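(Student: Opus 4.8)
The plan is to establish the three equalities by a single induction on $d$, with the middle identity $\lt{t_d} = 2^d \times \B_d$ serving as the load-bearing invariant: once it is known for a given index, the other two identities follow almost formally, and its inductive step is where all the real work lives. The first thing I would record is a structural consequence used throughout. Since $\lt{t_d}$ contains only elements whose top is strictly below $t_d$, the equality $\lt{t_d} = 2^d\times\B_d$ forces the right-hand side, and hence $\B_d$ itself, to contain only elements $\beta$ with $\tp[\beta] < t_d$. In particular $\geqt[\B_d]{t_d} = \emptyset$, and because $\Bl_d \subseteq \B_d$ we get $\mint(\Bl_d) < t_d$ whenever $\Bl_d \neq \emptyset$. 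Reading this off the recurrence $t_{d+1} = \min(\mint(\Bl_d), t_d)$ gives $t_{d+1} = \mint(\Bl_d)$ when $\Bl_d \neq \emptyset$ and $t_{d+1} = t_d$ when $\Bl_d = \emptyset$; in all cases $t_{d+1} \leq t_d$.

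For the base case $d=0$, the invariant reads $\lt{t_0} = \B$, which holds since $t_0 = \infty$ and $\B_0 = \B$. For the inductive step I assume $\lt{t_d} = 2^d\times\B_d$. Because $t_{d+1} \leq t_d$, restricting $\B$ to tops below $t_{d+1}$ only retains elements already accounted for in $\lt{t_d} = 2^d\times\B_d$, and since all $2^d$ copies of an element share its top this yields $\lt{t_{d+1}} = 2^d \times \lt[\B_d]{t_{d+1}}$. It therefore suffices to show $\lt[\B_d]{t_{d+1}} = \B_d - \Bl_d - \Bh_d = 2\times\B_{d+1}$, equivalently $\geqt[\B_d]{t_{d+1}} = \Bl_d + \Bh_d$, for then $\lt{t_{d+1}} = 2^d\times 2\times\B_{d+1} = 2^{d+1}\times\B_{d+1}$, completing the step.

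I would prove $\geqt[\B_d]{t_{d+1}} = \Bl_d + \Bh_d$ by case analysis on whether $\Bl_d$ is empty, using that the restriction $\geqt{\cdot}$ distributes over the collection sum $\B_d = \Bl_d + (\B_d - \Bl_d)$. If $\Bl_d \neq \emptyset$ then $t_{d+1} = \mint(\Bl_d)$; every element of $\Bl_d$ has top at least $\mint(\Bl_d)$, so $\geqt[\Bl_d]{t_{d+1}} = \Bl_d$, while $\geqt[(\B_d - \Bl_d)]{t_{d+1}} = \Bh_d$ directly by the definition of $\Bh_d$, and summing gives $\Bl_d + \Bh_d$ with correct multiplicities (the count $1$ contributed by $\Bl_d$ reassembles, together with the even count in $\Bh_d \subseteq \B_d - \Bl_d$, the full count of each element in $\B_d$). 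If $\Bl_d = \emptyset$ then $t_{d+1} = t_d$, the top-$<t_d$ invariant makes $\geqt[\B_d]{t_d} = \emptyset$, and $\Bl_d + \Bh_d = \geqt[(\B_d-\Bl_d)]{\infty} = \emptyset$ as well. This closes the induction for the middle identity.

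Parts (1) and (3) then drop out. For Part (1), $\Bh_d = \geqt[(\B_d - \Bl_d)]{\mint(\Bl_d)}$ by definition, and the same two cases show this equals $\geqt[(\B_d - \Bl_d)]{t_{d+1}}$: when $\Bl_d \neq \emptyset$ the thresholds $\mint(\Bl_d)$ and $t_{d+1}$ coincide, and when $\Bl_d = \emptyset$ both sides are empty. For Part (3), since $t_{d+1}\leq t_d$ I write $\rng{t_{d+1}}{t_d} = \lt{t_d} - \lt{t_{d+1}}$ and substitute the middle identity at $d$ and at $d+1$ together with $2\times\B_{d+1} = \B_d - \Bl_d - \Bh_d$, obtaining $2^d\times\B_d - 2^{d+1}\times\B_{d+1} = 2^d\times(\Bl_d + \Bh_d)$. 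The one delicate point — the step I would treat most carefully — is the reconciliation of the threshold $\mint(\Bl_d)$ appearing in the definition of $\Bh_d$ with the threshold $t_{d+1}$ appearing in the statement; this is precisely where the invariant that $\B_d$ carries no element of top $\geq t_d$ is needed, so as to rule out spurious elements in the window $[t_d, \mint(\Bl_d))$ in the degenerate case $\Bl_d = \emptyset$, where $\mint(\Bl_d) = \infty$ but $t_{d+1} = t_d$.
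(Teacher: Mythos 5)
Your proof is correct and follows exactly the route the paper intends: the paper offers no written proof (it merely notes the observation ``may be easily asserted, in an inductive manner''), and your induction on $d$ with $\lt{t_{d}} = 2^d \times \B_d$ as the invariant, splitting on whether $\Bl_d = \emptyset$ to reconcile the thresholds $\mint(\Bl_d)$ and $t_{d+1}$, is precisely that inductive assertion carried out in full. You also correctly read the recurrence for $\ser{t}$ as $t_{d+1} = \min(\mint(\Bl_d), t_d)$, which is what the paper uses throughout despite the off-by-one typo in Definition~\ref{def:t}.
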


Finally, we define the \emph{signature} of a collection, which is derived from its decomposition and will serve as an optimality measure implying the folding restrictions over the collection.
\begin{definition}
\label{def:signature}
The \emph{signature} of $\B$ is a series $\sigs = \sigs(\B)$, where $s_0 = |\Bl_0|$, and $s_d = |\Bl_d| - |\Bl_{d-1}| - \frac{|\Bh_{d-1}|}{2} + \max (s_{d-1}, 0)$ for $d > 0$.
\end{definition}

The last column of Table~\ref{tab:decomposition} shows the signature of the exemplified collection.
For two signatures $\sigs = s_0, s_1, \ldots$ and $\sert{s} = s'_0, s'_1, \ldots$, denote $\sigs < \sert{s}$ if $\sigs$ precedes $\sert{s}$ lexicographically, i.e. there is some integer $d \geq 0$ such that $\sig_{i} = \sig'_{i}$ for every $0 \leq i < d$, and $\sig_d < \sig'_d$. Denote $\sigs \leq \sert{s}$ if $\sigs < \sert{s}$ or $\sigs = \sert{s}$.
We will show that signatures can serve as an optimality measure for collections, where lower signature collections are always less restricted than higher signature collection with respect to folding possibilities.

From now on, when discussing derived entities such a decompositions $\left\langle \Bs, \Bls, \Bhs\right\rangle$, signatures $\sigs$, etc., we assume these entities correspond to the collection $\B$ discussed in the same context without stating so explicitly. When several collections are considered, these collections are annotated with superscripts (e.g. $\B', \B^*, \B^3$, etc.), which also annotate their correspondingly derived entities (e.g. $\Bl'_d, \sera{s}{3}$, etc.).

\begin{claim}
\label{clm:sL}
For every $d \geq 0$, $|\Bl_d| \geq \max(s_d, 0)$ and $|\B_d| + |\Bl_d| - s_d \geq \max(s_d, 0)$.
\end{claim}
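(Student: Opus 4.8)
The plan is to prove the first inequality $|\Bl_d| \geq \max(s_d, 0)$ for all $d$ by induction on $d$, and then to derive the second inequality for each $d$ directly from the first at the same index. Throughout I will use two elementary facts: $|\B_d| \geq 0$, and $|\Bl_d| \leq |\B_d|$ (equivalently, $\Bl_d \subseteq \B_d$), which holds because $\Bl_d = \modTwo{\B_d}$ replaces each count in $\B_d$ by its residue modulo $2$ and hence never increases a count, so $\modTwo{\B_d} = \modTwo{\B_d} \cap \B_d$. I will also use that $|\Bh_{d-1}| \geq 0$.

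For the base case $d = 0$, the signature definition gives $s_0 = |\Bl_0|$, so the first inequality reads $|\Bl_0| \geq \max(|\Bl_0|, 0) = |\Bl_0|$ and holds with equality, while the second reads $|\B_0| + |\Bl_0| - |\Bl_0| = |\B_0| \geq \max(|\Bl_0|, 0) = |\Bl_0|$, which is exactly the containment $|\Bl_0| \leq |\B_0|$. For the inductive step of the first inequality ($d > 0$), I rearrange the signature recursion into
\[
|\Bl_d| - s_d = |\Bl_{d-1}| + \tfrac{|\Bh_{d-1}|}{2} - \max(s_{d-1}, 0).
\]
The inductive hypothesis (the first inequality at $d-1$) gives $|\Bl_{d-1}| \geq \max(s_{d-1}, 0)$, and since $|\Bh_{d-1}| \geq 0$ the right-hand side is nonnegative; hence $|\Bl_d| \geq s_d$, and as $|\Bl_d| \geq 0$ trivially, we conclude $|\Bl_d| \geq \max(s_d, 0)$. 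Note that only the first inequality at the previous index enters here, so this induction closes on its own without reference to the second inequality.

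Finally, the second inequality at index $d$ follows from the first inequality at the same index together with $|\B_d| \geq |\Bl_d|$: from $|\Bl_d| \geq \max(s_d, 0) \geq s_d$ we get $|\B_d| + |\Bl_d| - s_d \geq |\B_d| \geq 0$, and from $|\B_d| \geq |\Bl_d| \geq s_d$ we get $|\B_d| + |\Bl_d| \geq 2 s_d$, i.e. $|\B_d| + |\Bl_d| - s_d \geq s_d$; combining the two bounds yields $|\B_d| + |\Bl_d| - s_d \geq \max(s_d, 0)$.

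The only genuine content is therefore the first inequality, and the main obstacle is the bookkeeping around the $\max(\cdot, 0)$ term in the signature recursion: one must recognize that the $-\max(s_{d-1}, 0)$ contribution is exactly absorbed by the inductive bound $|\Bl_{d-1}| \geq \max(s_{d-1}, 0)$, leaving the nonnegative remainder $\tfrac{|\Bh_{d-1}|}{2}$ to guarantee $|\Bl_d| - s_d \geq 0$. Everything else is a short manipulation resting only on the containment $\Bl_d \subseteq \B_d$ and nonnegativity of sizes, and in particular no separate induction is needed for the second inequality.
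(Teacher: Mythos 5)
Your proof is correct and follows essentially the same route as the paper's: induction on $d$ for the first inequality using the signature recursion (where the inductive bound $|\Bl_{d-1}| \geq \max(s_{d-1},0)$ absorbs the $-\max(s_{d-1},0)$ term, leaving the nonnegative $\frac{|\Bh_{d-1}|}{2}$), and then the second inequality deduced from the first together with $|\B_d| \geq |\Bl_d|$. The only cosmetic difference is that the paper concludes the second inequality via the single chain $|\B_d| + |\Bl_d| - s_d \geq |\B_d| \geq |\Bl_d| \geq \max(s_d, 0)$, whereas you split it into two bounds and combine them, which is equivalent.
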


\begin{proof}
We first show the first inequality in the Claim.
For $d = 0$, $s_0 = |\Bl_0| \geq 0$ by definition. Assume by induction $|\Bl_{d'}| \geq \max(s_{d'}, 0)$ for every $0 \leq d'< d$. Then, $|\Bl_{d}| = s_{d} + |\Bl_{d-1}| + \frac{|\Bh_{d-1}|}{2} - \max(s_{d-1}, 0) \geq s_{d} + \frac{|\Bh_{d-1}|}{2} \geq s_{d}$. In addition, $|\Bl_d| \geq 0$, and so $|\Bl_d| \geq \max(s_d, 0)$.
The second inequality follows immediately from the first one, as $|\B_d| + |\Bl_d| - s_d \geq |\B_d| \geq |\Bl_d| \geq \max(s_d, 0)$.
\end{proof}

\begin{claim}
\label{clm:r}
For $r = r(\B)$, $s_r  = -\frac{|\B_{r-1}| + |\Bl_{r-1}|}{2} + \max(s_{r-1}, 0) \leq 0$, and $s_d = 0$ for every $d > r$.
\end{claim}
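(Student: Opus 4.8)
The plan is to read off what the defining condition $\B_r = \emptyset$ forces on level $r-1$, substitute the resulting cardinality identities into the signature recursion to obtain the stated formula for $s_r$, derive the bound $s_r \leq 0$ from Claim~\ref{clm:sL}, and then settle $s_d = 0$ for $d > r$ by a short induction. Throughout I assume $\B \neq \varepsilon$ so that $r \geq 1$ and the quantities $\B_{r-1}, \Bl_{r-1}, s_{r-1}$ referenced in the statement are defined.

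First I would unwind Definition~\ref{def:decomposition} at level $r$. Since $\B_r = \frac{1}{2}(\B_{r-1} - \Bl_{r-1} - \Bh_{r-1}) = \emptyset$, and halving a nonnegative collection yields $\emptyset$ only from $\emptyset$, we get $\B_{r-1} - \Bl_{r-1} - \Bh_{r-1} = \emptyset$. Because $\Bl_{r-1} = \modTwo{\B_{r-1}}$ is the odd-count part of $\B_{r-1}$ and $\Bh_{r-1} \subseteq \B_{r-1} - \Bl_{r-1}$, we have $\Bl_{r-1} + \Bh_{r-1} \subseteq \B_{r-1}$, so the vanishing of $\B_{r-1} - \Bl_{r-1} - \Bh_{r-1}$ is equivalent to the multiset identity $\Bh_{r-1} = \B_{r-1} - \Bl_{r-1}$, whence $|\Bh_{r-1}| = |\B_{r-1}| - |\Bl_{r-1}|$. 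I would also note $\Bl_r = \modTwo{\emptyset} = \emptyset$, i.e. $|\Bl_r| = 0$. This translation of ``$\B_r = \emptyset$'' into the cardinality identity $|\Bh_{r-1}| = |\B_{r-1}| - |\Bl_{r-1}|$ is the crux of the argument, and the one step where I would be most careful with the collection arithmetic.

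With these two facts the formula is immediate: substituting $|\Bl_r| = 0$ and $|\Bh_{r-1}| = |\B_{r-1}| - |\Bl_{r-1}|$ into the definition $s_r = |\Bl_r| - |\Bl_{r-1}| - \frac{|\Bh_{r-1}|}{2} + \max(s_{r-1}, 0)$ collapses, after combining the $|\Bl_{r-1}|$ terms, to $s_r = -\frac{|\B_{r-1}| + |\Bl_{r-1}|}{2} + \max(s_{r-1}, 0)$, exactly as claimed. For the inequality $s_r \leq 0$ I would chain the first bound of Claim~\ref{clm:sL} at level $r-1$, namely $\max(s_{r-1}, 0) \leq |\Bl_{r-1}|$, with the trivial containment bound $|\Bl_{r-1}| \leq |\B_{r-1}|$ (since $\Bl_{r-1} \subseteq \B_{r-1}$). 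Adding these two inequalities gives $2\max(s_{r-1},0) \leq |\Bl_{r-1}| + |\B_{r-1}|$, i.e. $\max(s_{r-1},0) \leq \frac{|\B_{r-1}| + |\Bl_{r-1}|}{2}$, which is precisely the assertion $s_r \leq 0$.

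Finally, for $d > r$ I would first observe that $\B_r = \emptyset$ propagates upward: by Definition~\ref{def:decomposition}, $\B_d = \emptyset$ forces $\Bl_d = \Bh_d = \emptyset$ and hence $\B_{d+1} = \emptyset$, so $\B_d = \Bl_d = \Bh_d = \emptyset$ for every $d \geq r$ by induction. The signature recursion then degenerates to $s_d = \max(s_{d-1}, 0)$ for all $d > r$. Using $s_r \leq 0$ from the previous step, we get $s_{r+1} = \max(s_r, 0) = 0$, and since $\max(0,0) = 0$ this value is preserved, giving $s_d = 0$ for every $d > r$ and completing the proof. I expect no real obstacle in this last induction; the only genuine work is the level-$(r-1)$ cardinality identity in the second paragraph.
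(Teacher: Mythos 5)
Your proposal is correct and follows essentially the same route as the paper: translate $\B_r = \emptyset$ into $\Bh_{r-1} = \B_{r-1} - \Bl_{r-1}$, substitute into the signature recursion to get the stated formula, invoke Claim~\ref{clm:sL} for the sign, and finish the $d > r$ part by the degenerate recursion $s_d = \max(s_{d-1},0)$. The only (immaterial) difference is that the paper gets $s_r \leq 0$ by applying Claim~\ref{clm:sL} at index $r$ itself (since $|\Bl_r| = 0$ forces $\max(s_r,0) \leq 0$), whereas you apply it at index $r-1$ together with $|\Bl_{r-1}| \leq |\B_{r-1}|$; both are one-line consequences of the same lemma.
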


\begin{proof}
The inequality $s_r \leq 0$ follows immediately from Claim~\ref{clm:sL} and the fact that $|\Bl_r| = 0$. In addition, since $\B_r = \frac{1}{2} \times (\B_{r-1} - \Bl_{r-1} - \Bh_{r-1}) = \emptyset$, we have that $\Bh_{r-1} = \B_{r-1} - \Bl_{r-1}$, and therefore $s_r = |\Bl_r| - |\Bl_{r-1}| - \frac{|\Bh_{r-1}|}{2} + \max(s_{r-1}, 0) = - \frac{|\B_{r-1}| + |\Bl_{r-1}|}{2} + \max(s_{r-1}, 0)$.

To show the second part of the claim, note that $|\B_d| = |\Bl_d| = |\Bh_d| = 0$ for every $d \geq r$. This implies that for every $d > r$, $s_d = |\Bl_d| - |\Bl_{d-1}| - \frac{|\Bh_{d-1}|}{2} + \max(s_{d-1}, 0) = \max(s_{d-1}, 0)$. Since we showed that $s_r \leq 0$, we have that $s_{r+1} = \max(s_r, 0) = 0$, and inductively it follows that that $s_d = 0$ for every $d > r$.
\end{proof}

Define the series $\ser{\Delta} = \ser{\Delta}(\B)$, where $\Delta_0 = 0$, and $\Delta_d = \Delta_{d-1} + 2^{d-1} \abs(s_{d-1})$ for $d > 0$ (where $\abs(s_{d-1})$ is the absolute value of $s_{d-1}$).

\begin{claim}
\label{clm:main}
For every integer $d \geq 0$,

$$
|\B| = 2^d\left(|\B_{d}| + |\Bl_d| - s_d\right) + \Delta_d \geq 2^d\max(s_d, 0) + \Delta_d.
$$
\end{claim}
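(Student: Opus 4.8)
The plan is to prove the equality $|\B| = 2^d\left(|\B_{d}| + |\Bl_d| - s_d\right) + \Delta_d$ by induction on $d$, and then read off the inequality for free from the second bound in Claim~\ref{clm:sL}. For the base case $d = 0$, I would simply unwind the definitions: $\B_0 = \B$, $s_0 = |\Bl_0|$, and $\Delta_0 = 0$, so the right-hand side collapses to $|\B| + |\Bl_0| - |\Bl_0| = |\B|$.

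For the inductive step, I assume the identity for $d$ and derive it for $d+1$. The key structural fact I would use is that $\Bl_d$ and $\Bh_d$ are disjoint sub-collections of $\B_d$: indeed $\Bl_d = \modTwo{\B_d} \subseteq \B_d$ and $\Bh_d = \geqt[(\B_d - \Bl_d)]{\mint(\Bl_d)} \subseteq \B_d - \Bl_d$. Hence the definition $\B_{d+1} = \frac{1}{2}\left(\B_d - \Bl_d - \Bh_d\right)$ gives the cardinality relation $|\B_d| = 2|\B_{d+1}| + |\Bl_d| + |\Bh_d|$. I would then substitute the recursive definition of $s_{d+1}$ from Definition~\ref{def:signature}, together with $\Delta_{d+1} = \Delta_d + 2^d\abs(s_d)$, into the expression $2^{d+1}\left(|\B_{d+1}| + |\Bl_{d+1}| - s_{d+1}\right) + \Delta_{d+1}$. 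The $|\Bl_{d+1}|$ terms cancel against each other, the cardinality relation above rewrites $2^{d+1}|\B_{d+1}|$ in terms of $|\B_d|$, the $|\Bh_d|$ contributions cancel, and the whole expression telescopes down to $2^d|\B_d| + 2^d|\Bl_d| - 2^{d+1}\max(s_d, 0) + 2^d\abs(s_d) + \Delta_d$.

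Comparing this with the inductive hypothesis $|\B| = 2^d|\B_d| + 2^d|\Bl_d| - 2^d s_d + \Delta_d$, the step reduces to verifying the single elementary identity $\abs(s_d) - 2\max(s_d, 0) = -s_d$, checked case-by-case on the sign of $s_d$ (both sides equal $-s_d$ whether $s_d \geq 0$ or $s_d < 0$). This is the only non-mechanical point, and it is precisely where the definitions of the signature and of $\Delta_d$ are engineered to interlock; everything else is bookkeeping with cardinalities. I expect the main obstacle to be purely organizational: keeping the powers of $2$ and the several cardinality substitutions straight so that the cancellation is transparent, rather than any conceptual difficulty.

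Finally, the inequality $2^d\left(|\B_{d}| + |\Bl_d| - s_d\right) + \Delta_d \geq 2^d\max(s_d, 0) + \Delta_d$ follows immediately from the second inequality of Claim~\ref{clm:sL}, namely $|\B_d| + |\Bl_d| - s_d \geq \max(s_d, 0)$, multiplied through by $2^d \geq 0$ and shifted by $\Delta_d$.
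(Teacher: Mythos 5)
Your proposal is correct and follows essentially the same route as the paper: induction on $d$, the base case by unwinding $\B_0 = \B$, $s_0 = |\Bl_0|$, $\Delta_0 = 0$, and an inductive step that combines the cardinality relation $|\B_d| = 2|\B_{d+1}| + |\Bl_d| + |\Bh_d|$ with the recursions for $s_{d+1}$ and $\Delta_{d+1}$ and the sign identity $\abs(s_d) - 2\max(s_d,0) = -s_d$, then gets the inequality from Claim~\ref{clm:sL}. The only cosmetic difference is that you expand the $(d+1)$-expression backward to match the hypothesis, whereas the paper pushes the $d$-expression forward; the algebra is identical.
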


\begin{proof}
The fact that $2^d\left(|\B_{d}| + |\Bl_d| - s_d\right) + \Delta_d \geq 2^d\max(s_d, 0) + \Delta_d$ follows from Claim~\ref{clm:sL}. The equality $|\B| = 2^d\left(|\B_{d}| + |\Bl_d| - s_d\right) + \Delta_d$ is proven by induction on $d$. For $d = 0$, since $\B_0 = \B$, $s_0 = |\Bl_0|$, and $\Delta_0 = 0$ by definition, we get that $2^0\left(|\B_{0}| + |\Bl_0| - s_0\right) + \Delta_d = |\B|$. Now, assuming the claim holds for some $d \geq 0$, we show it also holds for $d' = d+1$:

$$
\begin{array}{rl}
|\B| = & \displaystyle{2^{d}\left(|\B_{d}| + |\Bl_{d}| - s_d\right) + \Delta_d} \\
= & \displaystyle{2^{d}\left(|\B_{d}| + |\Bl_{d}| - s_d - \abs(s_d)\right) + \Delta_{d+1}} \\
= & \displaystyle{2^{d}\left(|\B_{d}| + |\Bl_{d}| - 2\max(s_d, 0)\right) + \Delta_{d+1}} \\
= & \displaystyle{2^{d}\left((2|\B_{d+1}| + |\Bl_d| + |\Bh_d|) + |\Bl_{d}| - 2\max(s_d, 0)\right) + \Delta_{d+1}} \\
= & \displaystyle{2^{d}\left(2|\B_{d+1}| + 2|\Bl_{d+1}| - 2s_{d+1}\right) + \Delta_{d+1}} \\
 = & \displaystyle{2^{d+1}\left(|\B_{d+1}| + |\Bl_{d+1}| - s_{d+1}\right) + \Delta_{d+1}}.
\end{array}
$$

\end{proof}

\begin{conclusion}
\label{conc:r}
For every $d \geq r = r(\B)$ we have that $|\B_d| = |\Bl_d| = 0$, where from Claim~\ref{clm:r} $s_{d} = 0$ for $d > r$. Thus, Claim~\ref{clm:main} implies that $|\B| = -2^rs_r + \Delta_{r}  = \Delta_d$ for every $d > r$.
\end{conclusion}

\begin{claim}
\label{clm:eqDeltar}
Let $\B$ and $\B'$ be two collections such that $|\B| = |\B'| = n$ and $\sigs_{r-1} = \sert{s}_{r-1}$ for $r = r(\B)$. Then, $\sigs \leq \sert{s}$.
\end{claim}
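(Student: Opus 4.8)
The plan is to reduce the lexicographic comparison to a single inequality between $s_r$ and $s'_r$, and then to extract both quantities from the mass identity of Claim~\ref{clm:main}. Since the hypothesis $\sigs_{r-1} = \sert{s}_{r-1}$ states that $s_d = s'_d$ for every $0 \le d \le r-1$, the two signatures can first differ only at index $r$ or later. It therefore suffices to prove $s_r \le s'_r$, together with the fact that equality at index $r$ forces $\sigs = \sert{s}$ in full.

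First I would record that the prefix $\Delta_r$ of the $\ser{\Delta}$-series depends only on $s_0, \ldots, s_{r-1}$: unrolling the recurrence $\Delta_d = \Delta_{d-1} + 2^{d-1}\abs(s_{d-1})$ gives $\Delta_r = \sum_{d=0}^{r-1} 2^d \abs(s_d)$. Hence the agreement of the two signatures on the indices $0$ through $r-1$ yields $\Delta_r = \Delta'_r$.

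Next I would evaluate Claim~\ref{clm:main} at $d = r$ for both collections. For $\B$, the definition $r = r(\B)$ gives $\B_r = \emptyset$, so $|\B_r| = |\Bl_r| = 0$ and Conclusion~\ref{conc:r} already records $n = |\B| = -2^r s_r + \Delta_r$. For $\B'$, Claim~\ref{clm:main} gives $n = |\B'| = 2^r\bigl(|\B'_r| + |\Bl'_r| - s'_r\bigr) + \Delta'_r$. Equating the two right-hand sides, cancelling the common value $\Delta_r = \Delta'_r$, and dividing by $2^r$ yields
$$ s'_r - s_r = |\B'_r| + |\Bl'_r| \ge 0, $$
so $s_r \le s'_r$. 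If this inequality is strict, then $\sigs < \sert{s}$ and we are done.

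The one remaining point, which I expect to be the only subtle step, is the equality case $s_r = s'_r$. Here the displayed identity forces $|\B'_r| + |\Bl'_r| = 0$, hence $\B'_r = \emptyset$ and $r(\B') \le r$. Claim~\ref{clm:r} then annihilates the tails of both signatures: $s_d = 0$ for all $d > r$, and since $r(\B') \le r$ also $s'_d = 0$ for all $d > r$. Combined with $s_d = s'_d$ for every $d \le r$, this gives $\sigs = \sert{s}$, so that $\sigs \le \sert{s}$ in every case.
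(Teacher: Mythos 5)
Your proposal is correct and follows essentially the same route as the paper's own proof: both derive $\Delta_r = \Delta'_r$ from the shared prefix, combine Conclusion~\ref{conc:r} (for $\B$) with Claim~\ref{clm:main} at $d = r$ (for $\B'$) to obtain $s'_r - s_r = |\B'_r| + |\Bl'_r| \geq 0$, and settle the equality case by noting $\B'_r = \emptyset$ forces $r(\B') \leq r$ so that Claim~\ref{clm:r} zeroes out both tails. No gaps; the argument matches the paper step for step.
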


\begin{proof}
Since $\sigs_{r-1} = \sert{s}_{r-1}$, it follows that $\Delta_{r} = \Delta'_{r}$. From Conclusion~\ref{conc:r}, $s_r = \frac{\Delta_r - n}{2^r}$. From Claim~\ref{clm:main}, $s'_r = \frac{\Delta'_r - n}{2^r} + |\B'_r| + |\Bl'_r| \geq \frac{\Delta_r - n}{2^r} = s_r$. If $s'_r > s_r$, then $\sigs < \sert{s}$ and the claim holds. If $s'_r = s_r$, then in particular $|\B'_r| = 0$ and so $\B'_r = \emptyset$. Thus, $r(\B') \leq r$, and so for every $d > r$ we have that $s'_d = s_d = 0$, and $\sert{s} = \sigs$.
\end{proof}

\subsection{Folding Increases Signature}
\label{sec:FoldingIncreasesSignature}

This section is dedicated for proving the following claim:

\begin{claim}
\label{clm:foldIncreaseSignature}
Let $\B'$ be a folding of an $l$-block collection $\B$. If $\B \neq \B'$, then $\sigs < \sert{s}$.
\end{claim}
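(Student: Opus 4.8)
The backbone of the argument is Claim~\ref{clm:eqDeltar}, which compares two equal-size collections once they are known to share a signature prefix. The plan is therefore to (i) record that folding preserves the total size, (ii) locate the coarsest decomposition level at which $\B$ and $\B'$ disagree, (iii) prove that the signature strictly increases at that level, and (iv) assemble these into $\sigs < \sert{s}$.

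First I would read off from the definition of folding that a fold only re-nests the $l$-blocks of $\B$ into larger $l$-BFB palindromes without creating or destroying blocks, so $|\B| = |\B'|$. With the common value $n = |\B| = |\B'|$ fixed, both Conclusion~\ref{conc:r} (giving $|\B| = \Delta_d$ for $d > r$) and Claim~\ref{clm:eqDeltar} become available.

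Next I would trace how a fold propagates through the series $\langle \Bs, \Bls, \Bhs \rangle$. By Observation~\ref{obs:t} the decomposition sorts $\B$ into the top-value bands $\rng{t_{d+1}}{t_d} = 2^d(\Bl_d + \Bh_d)$, so a fold whose lowest active top value is $t$ can perturb the collection only at the unique level $d^*$ with $t_{d^*+1} \le t < t_{d^*}$; every coarser level is untouched. Hence $\B_d = \B'_d$, and therefore $\Bl_d = \Bl'_d$, $\Bh_d = \Bh'_d$ and $s_d = s'_d$, for all $d < d^*$, so $\sigs$ and $\sert{s}$ agree on the prefix ending at $d^*-1$.

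The crux is the strict inequality at level $d^*$. Here I would compute exactly how a nontrivial fold rewrites $\Bl_{d^*}$ and $\Bh_{d^*}$: folding pairs and promotes minimal elements, which---tracked through the $\modTwo{\cdot}$ bookkeeping of Observation~\ref{obs:modTwo}---either enlarges $|\Bl_{d^*}|$ or shrinks $|\Bh_{d^*}|$ relative to $\B$. Substituting into Definition~\ref{def:signature} and using $s_d = s'_d$ for $d < d^*$, this should force $s_{d^*} < s'_{d^*}$, whence $\sigs < \sert{s}$ by definition of the lexicographic order. I expect this step to be the main obstacle: one must identify precisely which of $|\Bl_{d^*}|$, $|\Bh_{d^*}|$ moves and certify the sign of the net change to $s_{d^*}$, ruling out the degenerate cases with Claim~\ref{clm:sL} (which bounds $|\Bl_d|$ below by $\max(s_d,0)$) and the size identity of Claim~\ref{clm:main}. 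If a general folding is a composition of elementary folds, I would first settle the single-elementary-fold case and then take the first elementary fold that actually changes the collection as the witness, arguing that subsequent folds cannot undo the strict gain in an already-fixed signature prefix.
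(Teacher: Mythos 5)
Your step (i) is false, and the whole plan collapses on it. Foldings do \emph{not} preserve collection size: $|\B|$ counts elements (with multiplicity), not the $l$-blocks composing them, so re-nesting several blocks into one composite palindrome strictly decreases the element count, and adjoining $\varepsilon$ elements (which a folding is allowed to do) increases it. Concretely, a type I elementary folding $\B' = \B - m \times (2 \times \beta + \A) + m \times \alpha$ of Definition~\ref{def:elementary} removes $m(2 + |\A|)$ elements and adds back only $m$; item~5 of Claim~\ref{clm:rightFolding} records $|\B'| = |\B| - 2^r$ for a right-folding, and a type II elementary folding has $|\B'| = |\B| + m$. Indeed, the entire purpose of the FOLD procedure is to produce a folding of prescribed size $n \neq |\B|$. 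Consequently Claim~\ref{clm:eqDeltar}, whose hypothesis is $|\B| = |\B'| = n$, cannot serve as the backbone of this proof (in the paper it is used for the correctness of FOLD, not here), and everything you build on the ``common value $n$'' is unavailable.

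The localization picture in steps (ii)--(iii) also fails, for two reasons. First, an elementary fold is not confined to one decomposition level: the convexed collection $\A = \{\alpha_1, 2\times\alpha_2, \ldots\}$ has elements with strictly decreasing top values spread over many levels, and the multiplicity $m$ displaces the effect to level $\dig{m}$ --- by Claim~\ref{clm:elementaryDecomposition}, $\Bl'_i = \Bl_i$ for all $i < \dig{m}$ even though $\B'_i \neq \B_i$ at those levels, so the first level where the decompositions disagree is neither unique in the way you describe nor the level where the signatures disagree. Second, and more fundamentally, at the first level $d$ where the $\Bl$-parts do differ one cannot force strictness: in the paper's Claim~\ref{clm:typeI} (the case $q \geq d$), when $\alpha_1 \in \Bl_d$ the computation gives $s'_d = s_d$ exactly --- a tie --- and the strict increase may surface only at a deeper level after nontrivial corrections; this is precisely why the paper needs the inductive Claim~\ref{clm:elementaryAux}, with its auxiliary collections $C$ and $\hat{\A}$ and induction on $|\A|$. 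So ``this should force $s_{d^*} < s'_{d^*}$'' is not merely unproven; it is false as stated. The one part of your proposal that matches the paper is the closing fallback: decompose an arbitrary folding into elementary ones (Claim~\ref{clm:elementarySeq}) and show each elementary fold strictly increases the signature (Claims~\ref{clm:typeI} and~\ref{clm:addEpsilons}). Note, however, that once every elementary step strictly increases the signature, plain transitivity of the lexicographic order finishes the argument; no ``first fold that changes the collection'' witness, and no argument that later folds cannot undo the gain, is needed.
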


The proof, given at the end of this section, is based on an observation that shows how to present a general folding as a series of a special kind of elementary foldings, and showing that such elementary foldings always increase the signature of the collection.

\begin{definition}
\label{def:elementary}
Let $\B'$ be a folding of $\B$.
\begin{itemize}
	\item Say that $\B'$ is a \emph{type I elementary folding} if $\B'$ is of the form $\B' = \B - m \times (2 \times \beta + \A) + m \times \alpha$, where $\beta$ is an $l$-block, $\A \neq \emptyset$ is a convexed $l$-collection, $m > 0$ is an integer such that $m \times (2 \times \beta + \A) \subseteq \B$, and $\alpha = \beta \gamma_{\A} \beta$ is an $l$-BFB palindrome such that $\alpha \notin \B$.
	\item Say that $\B'$ is a \emph{type II elementary folding} if $\varepsilon \notin \B$ and $\B'$ is of the form $\B' = \B + m \times \varepsilon$.
\end{itemize}
\end{definition}

\begin{claim}
\label{clm:elementarySeq}
Let $\B$ be a collection of $l$-blocks.
For every folding $\B'$ of $\B$
there is a sequence of collections $\B^0, \B^{1}, \ldots, \B^j$, where $\B^0 = \B'$, $\B^j = \B$, and for every $0 \leq i < j$, $\B^{i}$ is a (type I or II) elementary folding of $\B^{i+1}$.
\end{claim}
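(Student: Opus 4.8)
The plan is to prove the statement by induction, peeling off one elementary folding at a time while moving from the fully folded collection $\B'$ back toward the block collection $\B$. The driving invariant is that every folding of $\B$ is built from the same multiset of constituent $l$-blocks as $\B$: a type I fold merges $2\times\beta+\A$ into $\beta\gamma_{\A}\beta$, which preserves the block multiset, and a type II fold only inserts copies of $\varepsilon$. Accordingly I will use the non-negative integer measure $\mu(\B')$ defined as the number of copies of $\varepsilon$ in $\B'$ plus the difference between the (fixed) total number of blocks and the number of non-$\varepsilon$ elements of $\B'$. Since each non-$\varepsilon$ element consists of at least one block, $\mu(\B')\geq 0$; and $\mu(\B')=0$ holds exactly when $\B'$ has no empty elements and every element is a single block, i.e. $\B'$ is a block collection with the block multiset of $\B$, which forces $\B'=\B$. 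The induction is on $\mu$, with the base case $\mu=0$ giving the trivial sequence ($j=0$).

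For the inductive step I assume $\B'\neq\B$, so $\mu(\B')>0$, and produce a single elementary unfolding to a collection $\B^{1}$ with $\mu(\B^{1})<\mu(\B')$ that is again a folding of $\B$; the induction hypothesis then supplies a sequence from $\B^{1}$ to $\B$, and prepending $\B'$ together with the relation that $\B^{0}=\B'$ is an elementary folding of $\B^{1}$ gives the required sequence. There are two cases. If $\B'$ contains the empty string, let $m\geq 1$ be its multiplicity and set $\B^{1}=\B'-m\times\varepsilon$; then $\varepsilon\notin\B^{1}$ and $\B'=\B^{1}+m\times\varepsilon$ exhibits $\B'$ as a type II elementary folding of $\B^{1}$, while $\mu(\B^{1})=\mu(\B')-m$. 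Otherwise $\B'$ has no empty element but, since $\B'\neq\B$, it has a non-block element $\alpha$. By Claim~1 of the main manuscript, $\alpha$ admits the outermost decomposition $\alpha=\beta\,\gamma\,\beta$ with $\beta$ an $l$-block and $\gamma$ a convexed $l$-palindrome; writing $\gamma=\gamma_{\A}$ for the associated convexed $l$-collection $\A$, I take $m$ to be the multiplicity of $\alpha$ and set $\B^{1}=\B'-m\times\alpha+m\times(2\times\beta+\A)$. Then $\B'=\B^{1}-m\times(2\times\beta+\A)+m\times\alpha$ is a type I elementary folding, and because the merge $2\times\beta+\A\mapsto\alpha$ replaces $m(1+\totSize{\A})$ non-$\varepsilon$ elements by $m$ of them, $\mu$ strictly drops.

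The bookkeeping that makes the type I step legitimate is routine but must be checked: that $m\times(2\times\beta+\A)\subseteq\B^{1}$ (immediate, since those copies were just inserted); that $\alpha=\beta\gamma_{\A}\beta$ is an $l$-BFB palindrome (this is exactly the content of Claim~1 together with the correspondence between convexed collections and convexed palindromes); and that $\alpha\notin\B^{1}$ (all copies of $\alpha$ were removed, and neither $\beta$ nor any element of $\A$ can equal $\alpha$, since $\alpha$ is strictly longer than each of them). I also rely on the fact that folding is composable, so that an elementary unfolding of a folding of $\B$ is again a folding of $\B$; this is what lets the induction hypothesis apply to $\B^{1}$ in both cases.

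I expect the one genuinely delicate point to be the requirement $\A\neq\emptyset$ in the definition of a type I elementary folding. This requirement fails precisely when the chosen non-block element has the form $\beta\beta$ (outermost decomposition with empty convexed middle $\gamma=\varepsilon$), in which case no type I fold can create or destroy it, and the induction would stall. The plan is to rule this out at the source: I will argue from the definition of folding that a folding of a collection of $l$-blocks never contains an element of the form $\beta\beta$, because every fold introduces a nonempty convexed middle and hence cannot produce an element that is a bare concatenation of two blocks. Granting this, every non-block element of $\B'$ has $\A\neq\emptyset$ in its outermost decomposition, the type I step always applies, and the induction goes through.
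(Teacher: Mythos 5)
Your overall strategy---peel off one elementary unfolding at a time, a type II step when $\varepsilon$ is present, a type I step on a composite element otherwise---is essentially the paper's own construction, and most of your bookkeeping (the choice of $m$ as the full multiplicity, the check $\alpha \notin \B^{1}$, composability of foldings) matches the paper's proof. The gap is in your last paragraph. Your plan for the case $\gamma = \varepsilon$ rests on the claim that a folding of an $l$-block collection never contains an element of the form $\beta\beta$. That claim is false. By Claim 1 itself, $\beta\beta = \beta\gamma\beta$ with $\gamma = \varepsilon$ a convexed $l$-palindrome satisfying $\lesseq{\gamma}{\beta}$, so $\beta\beta$ is a legitimate $l$-BFB palindrome, and nothing in the definition of a folding prevents two copies of a block $\beta$ from being concatenated into the single element $\beta\beta$: the collection $\{\beta\beta\}$ is a folding of $\{2\times\beta\}$. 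Indeed the paper's own procedures produce such elements---FOLD on input $(\{2\times\beta\}, 1)$ first adds an $\varepsilon$ and then right-folds $2\times\beta + \varepsilon$ into the single palindrome $\beta\varepsilon\beta = \beta\beta$. So your induction genuinely stalls on these elements, and ``ruling them out at the source'' is not an available move.

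The paper's resolution is the opposite of exclusion: when the composite element is $\alpha = \beta\beta$, take $\A = \{\varepsilon\}$. Since $\varepsilon$ is an $l$-BFB palindrome, $\{\varepsilon\}$ is a \emph{nonempty} convexed $l$-collection with $\gamma_{\A} = \varepsilon$, so $\B^{0} = \B^{1} - m\times(2\times\beta + \A) + m\times\alpha$ is a valid type I elementary folding of $\B^{1} = \B^{0} - m\times\alpha + m\times(2\times\beta + \varepsilon)$; the $\varepsilon$'s introduced this way are stripped off later by a type II step. Note that with this fix your measure $\mu$ also breaks: unfolding $m$ copies of $\beta\beta$ into $2m$ blocks plus $m$ copies of $\varepsilon$ leaves $\mu$ unchanged (the non-$\varepsilon$ element count rises by $m$ while the $\varepsilon$ count rises by $m$), so the strict decrease you induct on fails exactly at the problematic step. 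You would need a different well-founded measure, for instance the paper's implicit one: every type I step replaces composite elements by elements composed of strictly fewer $l$-blocks, so after finitely many such steps only blocks and $\varepsilon$'s remain, and then a single type II step reaches $\B$.
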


\begin{proof}
By definition, each element in a folding $\B'$ of $\B$ is either an $l$-block from $\B$, a concatenation of several $l$-blocks from $\B$,
or $\varepsilon$.
The sequence $\B^0, \B^{1}, \ldots, \B^j$ is built iteratively as follows.

Initiate $\B^0 = \B'$, and $i = 0$. As long as $\B^i \neq \B$, we show how to compute $\B^{i+1}$ given the collection $\B^{i}$.
The construction maintains the property that each computed collection $\B^i$ is a folding of $\B$.
In the case where $\B^{i}$ contains some composite $l$-BFB palindrome of the form $\alpha = \beta \gamma_{\A} \beta$, let $m$ be the count of $\alpha$ in $\B^{i}$, and set $\B^{i+1} = \B^{i} - m \times \alpha + m \times \left(2 \times \beta + \A\right)$. We may assume $\A \neq \emptyset$, since when $\gamma_{\A} = \varepsilon$ we can choose $\A = \varepsilon$. Observe that $\B^{i+1}$ is a folding of $\B$ (where the same sub-collection of $l$-blocks from $\B$ which composes the $m$ copies of $\alpha$ in $\B^{i}$, composes the elements in the $m$ repeats of $2 \times \beta + \A$ in $\B^{i+1}$), and that
$\B^{i}$ is a type I elementary folding of $\B^{i+1}$. In addition, since the number of $l$-blocks composing each element in $\A$ is less than the number of $l$-blocks composing $\alpha$, after a finite number of such modification there will be no more composite palindromes in the collection.

In the case where $\B^{i}$ contains no composite palindrome, $\B^{i}$ is a folding of $\B$ containing only $l$-blocks and 0 or more $\varepsilon$ elements. If $\B^{i}$ contains no $\varepsilon$ elements, then $\B^{i} = \B$, and the process is completed choosing $j = i$. Else, for $m$ the count of $\varepsilon$ in $\B^{i}$, set $\B^{i+1} = \B^{i} - m \times \varepsilon$, and therefore $\B^{i}$ is a type II elementary folding of $\B_{i+1}$. Note that $\B^{i+1} = \B$, completing the process for $j = i+1$.
\end{proof}

Observe that the signature of a collection depends only in its decomposition, and is independent in the manner the collection was obtained. Therefore, from the above claim, in order to show that foldings necessarily increase signatures, it is enough to show that each elementary folding increases the signature.
In what follows, we give several technical claims that will prove this property.

\begin{claim}
\label{clm:addition}
Let $\B, \B'$, and $\A$ be $l$-BFB palindrome collections and $m > 0$ an integer such that $\B' = \B + m \times \A$.
Then,
\begin{enumerate}
	\item For every $0 \leq i \leq \dig{m}$, $\B'_i = \B_i + \frac{m}{2^{i}} \times \lt[\A]{t_{i}}$.
	\item For every $0 \leq i < \dig{m}$, $\Bl'_{i} = \Bl_{i}$ (i.e. $\Bls'_{\dig{m}-1} = \Bls_{\dig{m}-1}$).
	\item For every $0 \leq i < \dig{m}$, $\Bh'_i = \Bh_i + \frac{m}{2^{i}} \times \rng[\A]{t_{i+1}}{t_{i}}$.
\end{enumerate}
\end{claim}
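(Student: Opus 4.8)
The plan is to prove the three parts simultaneously by induction on $i$, coupling them so that the decomposition of $\B'$ is tracked level by level against that of $\B$. Writing $\B' = \B + m \times \A$, the base case $i = 0$ is immediate: $\B'_0 = \B' = \B_0 + m \times \A$, and since $t_0 = \infty$ we have $\lt[\A]{t_0} = \A$, giving $\B'_0 = \B_0 + \frac{m}{2^0} \times \lt[\A]{t_0}$, which is part~(1) at level $0$.

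For the inductive step, I would assume part~(1) at some level $i$ with $0 \le i < \dig{m}$ and derive parts~(2) and~(3) at level $i$ together with part~(1) at level $i+1$. The governing arithmetic fact is that $i < \dig{m}$ forces $2^{i+1} \mid m$, so the multiplier $\frac{m}{2^i}$ is even. Hence the second bullet of Observation~\ref{obs:modTwo} gives $\Bl'_i = \modTwo{\B'_i} = \modTwo{\B_i + \frac{m}{2^i} \times \lt[\A]{t_i}} = \modTwo{\B_i} = \Bl_i$, which is part~(2). Since $\Bl'_j = \Bl_j$ for every $j \le i$, the recurrence of Definition~\ref{def:t} together with Observation~\ref{obs:t} yields $t'_{i+1} = t_{i+1}$; synchronizing the two $t$-series in this way is, in my view, the real crux, because every displayed formula is phrased through the thresholds $t_i$ of $\B$ rather than of $\B'$.

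Granting $\Bl'_i = \Bl_i$ and $t'_{i+1} = t_{i+1}$, part~(3) follows by applying the identity $\Bh'_i = \geqt[(\B'_i - \Bl'_i)]{t'_{i+1}}$ of Observation~\ref{obs:t} to $\B'$, substituting part~(1), and using that selecting the elements whose top is at least a fixed threshold distributes over collection addition and scalar multiplication. The part coming from $\B$ reassembles into $\Bh_i$ by Observation~\ref{obs:t} for $\B$, while the part coming from $\A$ is exactly the copies of $\A$ whose tops lie in $[t_{i+1}, t_i)$, namely $\frac{m}{2^i} \times \rng[\A]{t_{i+1}}{t_i}$. Part~(1) at level $i+1$ then comes from the decomposition recurrence $\B'_{i+1} = \frac{1}{2} \times (\B'_i - \Bl'_i - \Bh'_i)$ of Definition~\ref{def:decomposition}: subtracting the freshly computed $\Bl'_i$ and $\Bh'_i$ from part~(1) at level $i$, the $\A$-contribution collapses through $\lt[\A]{t_i} - \rng[\A]{t_{i+1}}{t_i} = \lt[\A]{t_{i+1}}$ (legitimate since $t_{i+1} \le t_i$), leaving $\B'_{i+1} = \B_{i+1} + \frac{m}{2^{i+1}} \times \lt[\A]{t_{i+1}}$ and closing the induction.

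The only genuine difficulty I anticipate is this bookkeeping of the $t$-series: one must confirm that $t'_{i+1} = t_{i+1}$ before invoking Observation~\ref{obs:t} on $\B'$, and must respect that the claim only asserts parts~(2) and~(3) for $i < \dig{m}$ --- precisely the range in which $\frac{m}{2^i}$ is even and the modular reduction leaves $\Bl_i$ untouched. Once the two decompositions are known to share thresholds through level $\dig{m}$, everything else is routine collection arithmetic: distributing $\modTwo{\cdot}$, the ``top at least $t$'' selection, and the ``top below $t$'' selection over sums and differences, and tracking the powers of two.
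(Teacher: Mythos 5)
Your proposal is correct and matches the paper's own proof essentially step for step: the same coupled induction on $i$, the same base case at $i=0$, the same use of the evenness of $\frac{m}{2^i}$ (for $i < \dig{m}$) to get $\Bl'_i = \Bl_i$ via Observation~\ref{obs:modTwo}, the same synchronization $t'_{i+1} = t_{i+1}$ before invoking Observation~\ref{obs:t} on $\B'$, and the same telescoping $\lt[\A]{t_i} - \rng[\A]{t_{i+1}}{t_i} = \lt[\A]{t_{i+1}}$ to close the induction. The point you single out as the crux---tracking the $t$-series of $\B'$ against that of $\B$---is exactly the step the paper handles with the remark that $\Bls'_i = \Bls_i$ implies $t'_{i+1} = t_{i+1}$.
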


\begin{proof}
$\B_0 = \B$ and $t_{0} = \infty$, therefore $\B'_0 = \B' = \B + m \times \A = \B_0 + \frac{m}{2^{0}} \times \lt[\A]{t_{0}}$. Thus, the first item in the claim holds for $i = 0$, and the two other items hold trivially for every $0 \leq i < 0$.

Assuming by induction that for some $i < \dig{m}$ the first item holds for every $0 \leq i' \leq i$ and the two other items hold for every $0 \leq i' < i$, we show that (1) $\B'_{i+1} = \B_{i+1} + \frac{m}{2^{i+1}} \times \lt[\A]{t_{i+1}}$, (2) $\Bl'_{i} = \Bl_{i}$, and (3) $\Bh'_{i} = \Bh_{i} + \frac{m}{2^{i}} \times \rng[\A]{t_{i+1}}{t_{i}}$.

We start by showing (2). Since $i < \dig{m}$, $\frac{m}{2^{i}}$ is even, and so $\Bl'_{i} = \modTwo{\B'_{i}} = \modTwo{\B_i + \frac{m}{2^{i}} \times \lt[\A]{t_{i}}} \stackrel{\text{Obs.\ref{obs:modTwo}}}{=} \modTwo{\B_{i}} = \Bl_i$.
In order to show (3), note that $\sert{\Bl}_i = \Bls_i$ implies that $t'_{i+1} = t_{i+1}$, therefore
$\Bh'_{i} \stackrel{\text{Obs.\ref{obs:t}}}{=} \geqt[(\B'_{i} - \Bl'_{i})]{t'_{i+1}} = \geqt[{(\B_i + \frac{m}{2^{i}} \times \lt[\A]{t_{i}} - \Bl_{i})}]{t_{i+1}} = \geqt[(\B_{i}  - \Bl_{i})]{t_{i+1}} + \frac{m}{2^{i}} \times \rng[\A]{t_{i+1}}{t_i} \stackrel{\text{Obs.\ref{obs:t}}}{=} \Bh_{i} + \frac{m}{2^{i}} \times \rng[\A]{t_{i+1}}{t_i}$.
Finally, (1) is true since $\B'_{i+1} = \frac{1}{2} \times (\B'_{i} - \Bl'_{i} - \Bh'_{i}) = \frac{1}{2} \times (\B_i + \frac{m}{2^{i}} \times \lt[\A]{t_{i}} - \Bl_{i} - \Bh_{i} - \frac{m}{2^{i}} \times \rng[\A]{t_{i+1}}{t_i}) = \frac{1}{2} \times (\B_i - \Bl_{i} - \Bh_{i}) + \frac{m}{2^{i+1}} \times (\lt[\A]{t_{i}} - \rng[\A]{t_{i+1}}{t_i})  = \B_{i+1} + \frac{m}{2^{i+1}} \times \lt[\A]{t_{i+1}}$.
\end{proof}

\begin{claim}
\label{clm:elementaryDecomposition}
Let $\B' = \B - m \times (2 \times \beta + \A) + m \times \alpha$ be a type I elementary folding of $\B$. Then,

\begin{equation}
		\forall 0 \leq i \leq \dig{m}, \ \B'_i = 
		 \B_i - \frac{m}{2^i} \times \lt[{(2 \times \beta + \A)}]{t_{i}}
		 + \frac{m}{2^i} \times \lt[\alpha]{t_{i}},
	\label{eq:elem}
\end{equation}

\begin{equation}
		\Bls'_{\dig{m}-1} = \Bls_{\dig{m}-1}
	\label{eq:elemL}
\end{equation}

\begin{equation}
		\forall 0 \leq i < \dig{m}, \ \Bh'_i = 
		 \Bh_i - \frac{m}{2^i} \times \rng[{(2 \times \beta + \A)}]{t_{i+1}}{t_{i}} +
		 \frac{m}{2^i} \times \rng[\alpha]{t_{i+1}}{t_{i}}.
	\label{eq:elemH}
\end{equation}

\end{claim}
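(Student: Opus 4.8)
The plan is to treat the transition from $\B$ to $\B'$ as the addition of the single signed quantity $m \times \alpha - m \times (2 \times \beta + \A)$, and to establish \eqref{eq:elem}, \eqref{eq:elemL} and \eqref{eq:elemH} by one simultaneous induction on $i$ running in lockstep for $0 \le i < \dig{m}$, in direct analogy with the proof of Claim~\ref{clm:addition}. Indeed, \eqref{eq:elem}--\eqref{eq:elemH} are precisely the statement of Claim~\ref{clm:addition} with the convexed collection there replaced by this formal difference, and the only structural facts I will use are that $2 \times \beta + \A$ and $\alpha$ are $l$-BFB palindrome collections and that $\B'$ is again a genuine (non-negatively-counted) collection; the remaining hypotheses of a type I elementary folding play no role here. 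Throughout, $\ltops$, $\Bls$ and $\Bhs$ refer to the entities of $\B$ itself, and the crux is that $\B$ and $\B'$ share the same series $\ltops$ over the prefix $0 \le i \le \dig{m}$.

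For the base case $i = 0$, note $t_0 = \infty$ gives $\lt[X]{t_0} = X$ for every collection $X$, so $\B'_0 = \B' = \B_0 - m \times (2 \times \beta + \A) + m \times \alpha$ is exactly \eqref{eq:elem} at $i = 0$. For the inductive step, assume \eqref{eq:elem} holds up to some index $i < \dig{m}$; I would then derive, in order, \eqref{eq:elemL} at $i$, \eqref{eq:elemH} at $i$, and \eqref{eq:elem} at $i+1$. Applying $\modTwo{\cdot}$ to \eqref{eq:elem} at $i$ and using that $\frac{m}{2^i}$ is even for $i < \dig{m}$, Observation~\ref{obs:modTwo} makes both correction terms vanish modulo $2$ (adding or subtracting an even multiple does not affect $\modTwo{\cdot}$), so $\Bl'_i = \modTwo{\B'_i} = \modTwo{\B_i} = \Bl_i$, which is \eqref{eq:elemL}. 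This equality for all indices below $\dig{m}$ forces $t'_{i+1} = t_{i+1}$ by Definition~\ref{def:t}; hence, writing $\Bh'_i = \geqt[(\B'_i - \Bl'_i)]{t'_{i+1}}$ via Observation~\ref{obs:t} and substituting \eqref{eq:elem} at $i$, the filter selecting elements with top at least $t_{i+1}$ (which is additive, as it merely selects by top) sends $\lt[X]{t_i}$ to $\rng[X]{t_{i+1}}{t_i}$ and leaves $\geqt[(\B_i - \Bl_i)]{t_{i+1}} = \Bh_i$, giving \eqref{eq:elemH}. Finally, feeding \eqref{eq:elem}, \eqref{eq:elemL} and \eqref{eq:elemH} at $i$ into the recursion $\B'_{i+1} = \frac{1}{2} \times (\B'_i - \Bl'_i - \Bh'_i)$ and using the telescoping $\lt[X]{t_i} = \lt[X]{t_{i+1}} + \rng[X]{t_{i+1}}{t_i}$ to collapse the correction terms, division by $2$ yields \eqref{eq:elem} at $i+1$. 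Running the step for $0 \le i < \dig{m}$ produces \eqref{eq:elem} up to $i = \dig{m}$ and \eqref{eq:elemL}, \eqref{eq:elemH} for $i < \dig{m}$, exactly the asserted ranges.

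The delicate point — and the reason the three identities must be proved together rather than one after another — is the reconciliation of the series $\ltops$ of $\B'$ with that of $\B$. Only after $\Bl'_i = \Bl_i$ is known for every $i < \dig{m}$ do the cut points $t'_{i+1}$ and $t_{i+1}$ coincide, so that the operations producing $\Bls'$ and $\Bhs'$ use the same thresholds as those producing $\Bls$ and $\Bhs$. I would explicitly warn against the tempting shortcut of setting $\B'' = \B - m \times (2 \times \beta + \A)$ and applying Claim~\ref{clm:addition} twice to $\B = \B'' + m \times (2 \times \beta + \A)$ and $\B' = \B'' + m \times \alpha$: this recovers \eqref{eq:elem} only with $\ltops(\B'')$ in place of $\ltops$, and since $\frac{m}{2^{\dig{m}}}$ is odd the collections $\Bl''_{\dig{m}}$ and $\Bl_{\dig{m}}$ can differ, so $t''_{\dig{m}} \ne t_{\dig{m}}$ in general and the identity at the top index $i = \dig{m}$ would be evaluated at the wrong threshold. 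The induction above sidesteps this precisely because \eqref{eq:elem} at $i = \dig{m}$ is obtained from the decomposition recursion applied to $\B'$ itself, which manufactures the correct $t_{\dig{m}}$. The only remaining routine checks are that each $\B'_i$ is a legitimate collection (being a member of the decomposition of the folding $\B'$), which justifies applying Observation~\ref{obs:modTwo} to the signed expressions, and that the top-filters distribute over signed combinations, which holds because they act elementwise on counts.
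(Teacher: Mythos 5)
Your inductive proof is correct, but it takes a different route from the paper's. You effectively inline the proof of Claim~\ref{clm:addition}, re-running its three-step cycle (the mod-2 argument for \eqref{eq:elemL}, threshold identification plus top-filtering for \eqref{eq:elemH}, and the decomposition recursion for \eqref{eq:elem} at $i+1$) with the signed quantity $m \times \alpha - m \times (2 \times \beta + \A)$ in place of an added collection; all the individual steps check out. The paper instead does precisely what your last paragraph warns against: it sets $\B'' = \B - m \times (2 \times \beta + \A)$ (a valid collection, since $m \times (2 \times \beta + \A) \subseteq \B$ by Definition~\ref{def:elementary}), writes $\B = \B'' + m \times (2 \times \beta + \A)$ and $\B' = \B'' + m \times \alpha$, applies Claim~\ref{clm:addition} to each, and subtracts. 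Your version buys self-containment at the cost of repeating the machinery; the paper's version is a three-line corollary of Claim~\ref{clm:addition}.

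This means the ``delicate point'' you raise is not a real obstruction, and your warning is mistaken. The threshold $t_d$ is a function of the prefix $\Bls_{d-1}$ alone, i.e.\ $t_d = \min(\mint(\Bl_{d-1}), t_{d-1})$: this is how the series $\ser{t}$ is used throughout the paper --- the proof of Claim~\ref{clm:addition} itself deduces $t'_{i+1} = t_{i+1}$ exactly from $\sert{\Bl}_i = \Bls_i$, and Observation~\ref{obs:t} would be false otherwise (test $\lt{t_{1}} = 2 \times \B_1$ on the collection of Table~\ref{tab:decomposition}). The printed text of Definition~\ref{def:t}, which shows $\mint(\Bl_d)$ rather than $\mint(\Bl_{d-1})$, contains an off-by-one typo, and this is probably what misled you. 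Consequently, item (2) of Claim~\ref{clm:addition} gives $\Bls''_{\dig{m}-1} = \Bls_{\dig{m}-1} = \Bls'_{\dig{m}-1}$, which forces $t''_j = t_j = t'_j$ for every $0 \leq j \leq \dig{m}$ --- including the top index $j = \dig{m}$, because $t_{\dig{m}}$ does not depend on $\Bl_{\dig{m}}$ at all. So both applications of Claim~\ref{clm:addition} can be read with the thresholds of $\B$, the subtraction is evaluated at the correct thresholds, and the shortcut is sound: the approach you reject is the paper's actual proof, and it is correct.
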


\begin{proof}
Let $\B'' = \B - m \times(2 \times \beta + \A)$. Therefore, $\B = \B'' + m \times(2 \times \beta + \A)$, and $\B' = \B'' + m \times \alpha$.
From Claim~\ref{clm:addition},
\begin{enumerate}
	\item For every $0 \leq i \leq \dig{m}$, $\B_i = \B''_i + \frac{m}{2^{i}} \times \lt[{(2 \times \beta + \A)}]{t_{i}}$, and $\B'_i = \B''_i + \frac{m}{2^{i}} \times \lt[\alpha]{t_{i}}$. Therefore, $\B'_i = \B_i - \frac{m}{2^{i}} \times \lt[{(2 \times \beta + \A)}]{t_{i}} + \frac{m}{2^{i}} \times \lt[\alpha]{t_{i}}$.
	\item $\Bls'_{\dig{m}-1} = \Bls_{\dig{m}-1} = \Bls''_{\dig{m}-1}$.
	\item For every $0 \leq i < \dig{m}$, $\Bh_i = \Bh''_i + \frac{m}{2^i} \times \rng[{(2 \times \beta + \A)}]{t_{i+1}}{t_{i}}$, and $\Bh'_i = \Bh''_i +  \frac{m}{2^i} \times \rng[\alpha]{t_{i+1}}{t_{i}}$. Therefore, $\Bh'_i = \Bh_i - \frac{m}{2^i} \times \rng[{(2 \times \beta + \A)}]{t_{i+1}}{t_{i}} +  \frac{m}{2^i} \times \rng[\alpha]{t_{i+1}}{t_{i}}$.
\end{enumerate}
\end{proof}

\begin{claim}
\label{clm:elementaryAux}
Let $\B, \B'$, and $C$ be $l$-BFB palindrome collections, $\A$ a convexed $l$-collection, and
$m$ and $i$ two nonnegative integers, such that
\emph{(a)}	$\sert{s}_{i} = \sigs_{i}$, \emph{(b)} $t'_{i+1} \geq t_{i+1}$, \emph{(c)} $|\Bl'_i| + \frac{|\Bh'_i|}{2} < |\Bl_i| + \frac{|\Bh_i|}{2} - |C|$, \emph{(d)} $\B_{i+1} \cap C = \emptyset$, and \emph{(e)} $\B'_{i+1} = \B_{i+1} + C - m \times \A$.
Then, $\B <^s \B'$.
\end{claim}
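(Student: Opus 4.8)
The plan is to compare the two signatures directly. By hypothesis (a) we have $s'_d = s_d$ for all $d \le i$, hence $\Delta'_{i+1} = \Delta_{i+1}$ and $\max(s'_i,0) = \max(s_i,0)$; so to prove $\B <^s \B'$ it suffices to show that (I) $\sigs \neq \sert{s}$, and (II) at the first index where they differ the entry of $\B$ is the smaller one. Writing $p_d = |\Bl_d| + \frac12 |\Bh_d|$ (an integer, since every count in $\Bh_d$ is even, $\Bh_d$ being a sub-collection of $\B_d - \modTwo{\B_d}$), the recursion of Definition~\ref{def:signature} reads $s_{d+1} = |\Bl_{d+1}| - p_d + \max(s_d,0)$, which gives the one-level identity
\[
 s'_{i+1} - s_{i+1} = \left(|\Bl'_{i+1}| - |\Bl_{i+1}|\right) + \left(p_i - p'_i\right),
\]
and condition (c) states exactly that $p'_i < p_i - |C|$, so that $p_i - p'_i \ge |C| + 1$ by integrality.

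For (I) I would evaluate $|\B|$ and $|\B'|$ through Claim~\ref{clm:main} at level $i$. Using $\Delta'_i = \Delta_i$, $s'_i = s_i$, the relation $|\B_i| = 2|\B_{i+1}| + |\Bl_i| + |\Bh_i|$, and $|\B'_{i+1}| = |\B_{i+1}| + |C| - m \times |\A|$ (immediate from (e), as collection addition adds sizes), a short computation gives
\[
 |\B'| - |\B| = 2^{i+1}\left[\left(|C| - m \times |\A|\right) + \left(p'_i - p_i\right)\right] < 2^{i+1}\left(-m\times|\A|\right) \le 0,
\]
the strict step being (c). Thus $|\B'| < |\B|$. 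Since equal signatures would force $\Delta_d = \Delta'_d$ for all $d$ and hence, by Conclusion~\ref{conc:r}, $|\B| = |\B'|$, this already yields $\sigs \neq \sert{s}$.

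For (II) I would bound $|\Bl'_{i+1}| - |\Bl_{i+1}|$ from below. Since $\Bl'_{i+1} = \modTwo{\B'_{i+1}} = \modTwo{\B_{i+1} + C - m\times\A}$, Observation~\ref{obs:modTwo} reduces this to the combination of $\Bl_{i+1} = \modTwo{\B_{i+1}}$, $\modTwo{C}$ and $\modTwo{m\times\A}$, while Observation~\ref{obs:modTwoA} gives $|\modTwo{m\times\A}| \le 1$; hence $|\Bl'_{i+1}| \ge |\Bl_{i+1}| - |C| - 1$. Combined with the identity and $p_i - p'_i \ge |C|+1$ this yields $s'_{i+1} \ge s_{i+1}$. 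If the inequality is strict, then $i+1$ is the first index of disagreement and $\B$ carries the smaller entry, finishing the proof. The remaining, genuinely delicate, case is $s'_{i+1} = s_{i+1}$, where I would descend one level: halving the decomposition step $2(\B'_{i+2} - \B_{i+2}) = (C - m\times\A) - (\Bl'_{i+1} - \Bl_{i+1}) - (\Bh'_{i+1} - \Bh_{i+1})$ and rewriting its right-hand side in the form $C' - m' \times \A'$, I would check that the hypotheses (a)--(e) regenerate at index $i+1$ and then invoke the claim inductively (the induction being on the number of levels above $i$ on which $\B$ or $\B'$ is nonempty, finite by the existence of $r(\B)$ and $r(\B')$). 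As we have already shown $\sigs \neq \sert{s}$, the chain of equalities cannot persist, so the descent must reach a level with a strict increase, which is then the first disagreement and again gives $\B$ the smaller signature.

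The main obstacle is exactly this descent step. One must show that the halved difference can be written as $C' - m'\times\A'$ with $\A'$ a genuine convexed $l$-collection and with condition (c) re-established at index $i+1$. This calls for a parity analysis of $m$: when $m$ is odd, the single block $\modTwo{m\times\A}$ (the top of $\A$) is precisely what the equality $s'_{i+1} = s_{i+1}$ forces $\Bl$ to absorb, and keeping the residual collection convexed is where Claim~\ref{clm:subConvex} enters; hypotheses (b) and (d) are what supply $t'_{i+2} \ge t_{i+2}$ and the disjointness needed to regenerate (d) and (e). The one-level estimates, by contrast, are routine once the $\modTwo$ bookkeeping of Observations~\ref{obs:modTwo} and~\ref{obs:modTwoA} is in place.
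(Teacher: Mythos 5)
Your architecture is the paper's own: a one-level $\modTwo{\cdot}$ estimate giving $s'_{i+1} \geq s_{i+1}$, immediate victory in the strict case, and an inductive descent in the equality case, with Claim~\ref{clm:subConvex} keeping the residue of $\A$ convexed. The pieces you actually carry out are correct: the identity $s'_{i+1} - s_{i+1} = (|\Bl'_{i+1}| - |\Bl_{i+1}|) + (p_i - p'_i)$, the integrality strengthening of (c), and part (I) ($|\B'| < |\B|$ via Claim~\ref{clm:main}, hence $\sigs \neq \sert{s}$ by Conclusion~\ref{conc:r}). Part (I) is in fact an addition the paper does not need: the paper inducts on $|\A|$, which strictly decreases at each descent, so termination is automatic; your level-counting measure plus part (I) would serve the same purpose.

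The gap is exactly the step you label the main obstacle: the regeneration of (a)--(e) at index $i+1$ is announced ("I would check that the hypotheses regenerate"), not proven, and in the paper this regeneration is the bulk of the argument. Three concrete things are missing. First, your stated bound $|\Bl'_{i+1}| \geq |\Bl_{i+1}| - |C| - 1$ is too weak to extract the forced structure in the equality case; to conclude $C = \emptyset$, $m$ odd, $\A \neq \emptyset$, and $\Bl'_{i+1} = \Bl_{i+1} - \alpha_1$, you need the exact identity $\Bl'_{i+1} = \Bl_{i+1} + \modTwo{C} + \modTwo{m \times \A} - 2 \times \left((\Bl_{i+1} + \modTwo{C}) \cap \modTwo{m \times \A}\right)$ (hypothesis (d) is what reduces $\modTwo{\B_{i+1} + C}$ to $\Bl_{i+1} + \modTwo{C}$), which upgrades the estimate to $s'_{i+1} \geq s_{i+1} + |\modTwo{C}| + |C| + 1 - |\modTwo{m \times \A}|$, and only then does equality force everything at once. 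Second, condition (c) at level $i+1$ does not follow from the halved decomposition identity you wrote; it requires computing $\Bh'_{i+1}$ explicitly via Observation~\ref{obs:t} and hypothesis (b), extracting the collection $\hat{C} = \frac{1}{2} \times \left(\rng[(\B_{i+1} - \Bl_{i+1} - (m-1) \times \alpha_1)]{t_{i+2}}{t'_{i+2}}\right)$ (well defined only because $m$ is odd and $m \times \A \subseteq \B_{i+1}$, so the relevant counts are even), and proving $|\Bh'_{i+1}| \leq |\Bh_{i+1}| - 2|\hat{C}|$; then (c) holds with $C' = \hat{C}$. Third, for (e) one must invoke Claim~\ref{clm:subConvex} to write $\lt[(\A - \alpha_1)]{t'_{i+2}} = 2^x \times \hat{\A}$ with $x > 0$ --- the strict positivity of $x$ is precisely what makes $m' = m2^{x-1}$ an integer, without which (e) at the next level is not even well formed. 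Until these computations are done, what you have is a correct plan with the paper's structure, not a proof.
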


\begin{proof}
We prove the claim by induction on the size of $\A$. Let $\A = \{\alpha_1, \ldots, 2^{r-1} \times \alpha_r\}$, and denote $\tilde{A} = \modTwo{m \times \A}$. Observe that $\tilde{A} = \alpha_1$ when $\A \neq \emptyset$ and $m$ is odd, and otherwise $\tilde{A} = \emptyset$. In addition, observe that $\modTwo{\B_{i+1} + C} \stackrel{\text{Obs.\ref{obs:modTwo}}}{=} \modTwo{\B_{i+1}} + \modTwo{C} - 2 \times \left(\modTwo{\B_{i+1}} \cap \modTwo{C}\right) \stackrel{\text{(d)}}{=} \modTwo{\B_{i+1}} + \modTwo{C} = \Bl_{i+1} + \modTwo{C}$. Therefore,

$$
\begin{array}{rl}
	\Bl'_{i+1} = & \modTwo{\B'_{i+1}} \stackrel{\text{(e)}}{=} \modTwo{\B_{i+1} + C - m \times \A} \\
	 \stackrel{\text{Obs.\ref{obs:modTwo}}}{=} & \modTwo{\B_{i+1} + C} + \modTwo{m \times \A} - 2 \times \left(\modTwo{\B_{i+1} + C} \cap \modTwo{m \times \A}\right)\\
	= & \Bl_{i+1} + \modTwo{C} + \tilde{A} - 2 \times \left((\Bl_{i+1} + \modTwo{C}) \cap \tilde{A}\right) \\
\end{array}
$$

Since $(\Bl_{i+1} + \modTwo{C}) \cap \tilde{A} \subseteq \tilde{A}$, it follows that $|\Bl'_{i+1}| \geq |\Bl_{i+1}| + |\modTwo{C}| + |\tilde{A}| - 2|\tilde{A}| = |\Bl_{i+1}| + |\modTwo{C}| - |\tilde{A}|$.
Therefore, $s'_{i+1} = |\Bl'_{i+1}| - |\Bl'_{i}| - \frac{|\Bh'_i|}{2} + \max(s'_{i}, 0) \stackrel{\text{(a),(c)}}{>} |\Bl_{i+1}| + |\modTwo{C}| - |\tilde{A}| - |\Bl_{i}| - \frac{|\Bh_i|}{2} + |C| + \max(s_{i}, 0) = s_{i+1} + |\modTwo{C}| + |C| - |\tilde{A}|$. As both $s'_{i+1}$ and $s_{i+1}$ are integers, $s'_{i+1} \geq s_{i+1} + |\modTwo{C}| + |C| - |\tilde{A}| + 1 \geq s_{i+1}$ (recall that $|\tilde{A}| \leq 1$).
 When $s'_{i+1} > s_{i+1}$, $\sigs < \sert{s}$ and the claim follows. Otherwise, $s'_{i+1} = s_{i+1}$, and there is a need to continue and examine positions grater than $i+1$ in the signatures of $\B$ and $\B'$.

Note that for obtaining $s'_{i+1} = s_{i+1}$ we must have that $C = \emptyset$ and $\tilde{\A} = \Bl_{i+1} \cap \tilde{\A} = \alpha_1$, which implies that $\A \neq \emptyset$, $m$ is odd, and $\Bl'_{i+1} = \Bl_{i+1} + \alpha_1 - 2 \times \alpha_1 =\Bl_{i+1} - \alpha_1$ (thus $\alpha_1 \in \Bl_{i+1}$, and in particular $\tp[\alpha_1] \geq \mint(\Bl_{i+1}) \geq t_{i+2}$).
Assuming by induction that the claim holds for every $\B'', \B''', C', \A', i'$, and $m'$ sustaining requirements (a) to (e) and $|\A'| < |\A|$, we show the claim also holds for $\B, \B', C, \A, i$, and $m$.

Now, since $\sert{s}_{i} \stackrel{\text{(a)}}{=} \sigs_{i}$ and $s'_{i+1} = s_{i+1}$, requirement (a) in the claim holds with respect to $i' = i+1$. In addition,
$$
\begin{array}{rl}
	t'_{i+2} = & \min(\mint(\Bl'_{i+1}), t'_{i+1}) \geq \min(\mint(\Bl_{i+1} - \alpha_1), t_{i+1}) \\
	\geq & \min(\mint(\Bl_{i+1}), t_{i+1}) = t_{i+2},
\end{array}
$$

\noindent
thus requirement (b) also holds with respect to $i' = i+1$. Furthermore,

$$
\begin{array}{rl}
	\Bh'_{i+1} \stackrel{\text{Obs.\ref{obs:t}}}{=} & \geqt[(\B'_{i+1} - \Bl'_{i+1})]{t'_{i+2}}
	=  \geqt[(\B_{i+1} - m \times \A - \Bl_{i+1} + \alpha_1)]{t'_{i+2}} \\
	= & \geqt[(\B_{i+1} - \Bl_{i+1})]{t'_{i+2}} - m \times \geqt[\A]{t'_{i+2}} + \geqt[\alpha_1]{t'_{i+2}} \\
	= & \left(\geqt[(\B_{i+1} - \Bl_{i+1})]{t_{i+2}} - \rng[(\B_{i+1} - \Bl_{i+1})]{t_{i+2}}{t'_{i+2}}\right) - m \times \geqt[\A]{t'_{i+2}} + \geqt[\alpha_1]{t'_{i+2}} \\
	= & \Bh_{i+1} - \rng[(\B_{i+1} - \Bl_{i+1})]{t_{i+2}}{t'_{i+2}} - m \times \geqt[\A]{t'_{i+2}} + \geqt[\alpha_1]{t'_{i+2}}.
\end{array}
$$

Since $\alpha_1 \in \A$, the operation $\A - \alpha_1$ yields a valid collection. In addition, note that the count of each element in $\B_{i+1} - \Bl_{i+1}$ is even, and
since $m$ is odd, $\B_{i+1}$ contains at least $m$ copies of $\alpha_1$ (as $m \times \A \subseteq \B_{i+1}$) and $\Bl_{i+1}$ contains exactly one copy of $\alpha_1$, $\B_{i+1} - \Bl_{i+1} - (m-1) \times \alpha_1$ is a valid collection, in which the count of each element is even. Denote $\hat{C} = \frac{1}{2} \times \left(\rng[(\B_{i+1} - \Bl_{i+1} - (m-1) \times \alpha_1)]{t_{i+2}}{t'_{i+2}}\right) =  \frac{1}{2}\left(\rng[(\B_{i+1} - \Bl_{i+1})]{t_{i+2}}{t'_{i+2}} - (m-1) \times \lt[\alpha_1]{t'_{i+2}}\right)$. Now we can write

$$
\begin{array}{rl}
	\Bh'_{i+1} = & \Bh_{i+1} - \rng[(\B_{i+1} - \Bl_{i+1})]{t_{i+2}}{t'_{i+2}} - m \times \geqt[\A]{t'_{i+2}} + \geqt[\alpha_1]{t'_{i+2}} \\
	= & \Bh_{i+1} - 2 \times \hat{C} - (m-1) \times \lt[\alpha_1]{t'_{i+2}} - m \times \geqt[\A]{t'_{i+2}} + \geqt[\alpha_1]{t'_{i+2}} \\
	= & \Bh_{i+1} - 2 \times \hat{C} - (m-1) \times \left(\alpha_1 - \geqt[\alpha_1]{t'_{i+2}}\right) - m \times \geqt[\A]{t'_{i+2}} + \geqt[\alpha_1]{t'_{i+2}} \\
	= & \Bh_{i+1} - 2 \times \hat{C} - (m-1) \times \alpha_1 - m \times \geqt[(\A - \alpha_1)]{t'_{i+2}} \\
\end{array}
$$

From the above $|\Bh'_{i+1}| \leq |\Bh_{i+1}| - 2 |\hat{C}|$, and since $|\Bl'_{i+1}| = |\Bl_{i+1}| - 1$ we get that $|\Bl'_{i+1}| + \frac{|\Bh'_{i+1}|}{2} \leq |\Bl'_{i+1}| - 1 + \frac{|\Bh'_{i+1}|}{2} + |\hat{C}| < |\Bl'_{i+1}| + \frac{|\Bh'_{i+1}|}{2} + |\hat{C}|$, and in particular requirement (c) holds with respect to $C' = \hat{C}$, and $i' = i+1$. Moreover, by definition the top values of all elements in $\hat{C}$ are at least $t_{i+2}$, and from Observation~\ref{obs:t}, $\B_{i+2} = \frac{1}{2^{i+2}} \times \B^{< t_{i+2}}$, hence the top values of all elements in $\B_{i+2}$ are lower than $t_{i+2}$. Thus, $\B_{i+2} \cap \hat{C} = \emptyset$, and requirement (d) holds with respect to $C' = \hat{C}$, and $i' = i+1$.

From Claim~\ref{clm:subConvex}, there is an integer $x > 0$ and a convex $l$-collection $\hat{\A}$ such that $|\hat{\A}| < |\A|$ and $\lt[(\A - \alpha_1)]{t'_{i+2}} = 2^x \times \hat{\A}$. Therefore,
$$
\begin{array}{rl}
	\Bh'_{i+1} 	= & \Bh_{i+1} - 2 \times \hat{C} - (m-1) \times \alpha_1 - m \times \geqt[(\A - \alpha_1)]{t'_{i+2}} \\
	= & \Bh_{i+1} - 2 \times \hat{C} - (m-1) \times \alpha_1 - m \times \left((\A - \alpha_1) - \lt[(\A - \alpha_1)]{t'_{i+2}}\right) \\
	= & \Bh_{i+1} - 2 \times \hat{C} - m \times \A + \alpha_1 + m \times \lt[(\A - \alpha_1)]{t'_{i+2}} \\
	= & \Bh_{i+1} - 2 \times \hat{C} - m \times \A + \alpha_1 + m 2^x \times \hat{\A}, \\
\end{array}
$$

\noindent
and

$$
\begin{array}{rl}
	\B'_{i+2} = & \frac{1}{2} (\B'_{i+1} - \Bl'_{i+1} - \Bh'_{i+1})\\
	= & \frac{1}{2} ((\B_{i+1} - m \times \A) - (\Bl_{i+1} - \alpha_1)
	 - (\Bh_{i+1} - 2 \times \hat{C} - m \times \A + \alpha_1 + m 2^x \times \hat{\A})) \\
	= & \frac{1}{2} (\B_{i+1} - \Bl_{i+1} - \Bh_{i+1}) + \hat{C} - m 2^{x-1} \times \hat{\A} \\
	= & \B_{i+2} + \hat{C} - m 2^{x-1} \times \hat{\A}. \\
\end{array}
$$

Since $x > 0$, $m 2^{x-1}$ is an integer. Therefore, requirement (e) holds with respect to
$C' = \hat{C}, \A' = \hat{\A}, \  i' = i+1$, and $m' = m 2^{x-1}$. From the inductive assumption and the fact that $|\hat{\A}| < |\A|$, the claim follows.
\end{proof}

\begin{claim}
\label{clm:typeI}
Let $\B'$ be a type I elementary folding of $\B$. Then, $\B <^s \B'$
\end{claim}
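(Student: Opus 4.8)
The plan is to reduce the statement to Claim~\ref{clm:elementaryAux} via the decomposition relations of Claim~\ref{clm:elementaryDecomposition}. Write $d^\ast = \dig{m}$ and $m_0 = m/2^{d^\ast}$ (so $m_0$ is odd), and set $t = \tp[\beta]$; since $\alpha = \beta\gamma_{\A}\beta$ is an $l$-BFB palindrome we have $\tp[\gamma_{\A}] \leq \tp[\beta]$, hence $\tp[\alpha] = t$, and every element removed or added by the folding has top at most $t$. Let $e$ be the unique index with $t_{e+1} \leq t < t_e$, i.e.\ the level at which the blocks of top $t$ are resolved in the decomposition of $\B$ (Observation~\ref{obs:t}). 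I will apply Claim~\ref{clm:elementaryAux} at the index $i = \min(d^\ast, e)$.

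First I would show that $\sigs$ and $\sert{s}$ agree strictly below $i$. For $d < i \leq e$ every removed or added element has top at most $t < t_{d+1}$, so by Observation~\ref{obs:t} none of them lies in the range $[t_{d+1}, t_d)$ governing $\Bl_d$ and $\Bh_d$; together with \eqref{eq:elemL} and \eqref{eq:elemH} (valid since $d < i \leq d^\ast$) this gives $\Bl'_d = \Bl_d$ and $|\Bh'_d| = |\Bh_d|$, hence $s'_d = s_d$. Next I compare at index $i$ in two regimes. If $i = e < d^\ast$, then \eqref{eq:elemL} still gives $\Bl'_e = \Bl_e$, while \eqref{eq:elemH} yields $|\Bh'_e| = |\Bh_e| - \frac{m}{2^e}\bigl(1 + |\rng[\A]{t_{e+1}}{t_e}|\bigr) < |\Bh_e|$; thus $s'_e = s_e$, but the deficit in the $\Bh$-term is what will later force a strict increase. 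If instead $i = d^\ast \leq e$, then $t < t_{d^\ast}$, so by \eqref{eq:elem} $\B'_{d^\ast} = \B_{d^\ast} - m_0(2\beta + \A) + m_0\alpha$ with $m_0$ odd; passing to $\modTwo{\cdot}$ and using $\A \neq \varepsilon$ (so $\modTwo{\A} = \alpha_1$) together with $\alpha \notin \B$, the count-$2$ term drops out and only the parities of $\alpha$ and $\alpha_1$ are toggled, giving $s'_{d^\ast} - s_{d^\ast} = |\Bl'_{d^\ast}| - |\Bl_{d^\ast}| \in \{0, 2\}$.

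If this difference is positive we are done immediately: $i$ is then the first index of disagreement and $\sigs < \sert{s}$. Otherwise $s'_i = s_i$, and I verify hypotheses (a)--(e) of Claim~\ref{clm:elementaryAux} at index $i$. Hypothesis (a) is the agreement just established. Hypothesis (b), $t'_{i+1} \geq t_{i+1}$, is a routine check: below level $i+1$ the folding only deletes elements of small top and inserts $\alpha$, whose top $t$ is placed consistently, so the minimal top cannot drop. For (c), (d), (e) I express $\B'_{i+1}$ in the required form $\B_{i+1} + C - m'\A'$ with $\A'$ a convexed $l$-collection. When $i = e < d^\ast$ this is immediate: \eqref{eq:elem} at level $e+1 \leq d^\ast$ gives $\B'_{e+1} = \B_{e+1} - \frac{m}{2^{e+1}}\lt[\A]{t_{e+1}}$ (the $\beta$- and $\alpha$-parts vanish, having top $t \geq t_{e+1}$), and Claim~\ref{clm:subConvex} rewrites $\lt[\A]{t_{e+1}}$ as $2^x\hat{\A}$, so I take $C = \emptyset$, $\A' = \hat{\A}$, and $m' = m2^x/2^{e+1}$; here (c) follows from $|\Bh'_e| < |\Bh_e|$ shown above.

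The main obstacle is hypothesis (e) in the remaining regime $i = d^\ast \leq e$ with $s'_{d^\ast} = s_{d^\ast}$, because \eqref{eq:elem} only reaches level $d^\ast$ while I need $\B'_{d^\ast + 1}$. In this subcase $s'_{d^\ast} = s_{d^\ast}$ forces $\alpha \in \Bl'_{d^\ast}$ and $\alpha_1 \in \Bl_{d^\ast}$ with $\alpha_1 \notin \Bl'_{d^\ast}$, which is exactly the configuration arising in the inductive step of the proof of Claim~\ref{clm:elementaryAux}. I would therefore push one level further by the decomposition recursion, computing $\Bl'_{d^\ast} = \Bl_{d^\ast} - \alpha_1 + \alpha$ and then $\Bh'_{d^\ast}$, splitting off the reshuffled even-count part as the collection $C$ (so that (c) and (d) hold and $\B_{d^\ast+1} \cap C = \emptyset$), and applying Claim~\ref{clm:subConvex} to $\lt[(\A - \alpha_1)]{t'_{d^\ast + 1}}$ to obtain a convexed $\A'$ with $|\A'| < |\A|$. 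This computation is essentially identical to the one carried out inside Claim~\ref{clm:elementaryAux}; the delicate points are the exact bookkeeping of the sizes in (c) and the disjointness in (d). With (a)--(e) in hand, Claim~\ref{clm:elementaryAux} yields $\B <^s \B'$.
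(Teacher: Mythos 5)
Your proposal is correct and takes essentially the same route as the paper's own proof: your case split at $i=\min(\dig{m},e)$ reproduces the paper's two cases $q<\dig{m}$ and $q\geq\dig{m}$, with the same use of Claim~\ref{clm:elementaryDecomposition}, the same parity argument via $\modTwo{\cdot}$ and $\alpha\notin\B$, the same choices of $C$, of $\hat{\A}$ via Claim~\ref{clm:subConvex}, and of $m'$, and the same final reduction to Claim~\ref{clm:elementaryAux}. The regime-two bookkeeping you defer to as ``essentially identical to the computation inside Claim~\ref{clm:elementaryAux}'' is precisely the computation the paper carries out explicitly there.
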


\begin{proof}
By definition, $\B'$ is of the form $\B' = \B - m \times (2 \times \beta + \A) + m \times \alpha$, where $m > 0$ is an integer, $\alpha$ and $\beta$ are $l$-BFB palindromes, and $\A = \{\alpha_1, \ldots, 2^{r-1} \times \alpha_{r}\}$ is a convexed $l$-collection such that $\alpha = \beta \gamma_{\A} \beta$.
Let $q \geq 0$ be the index such that $\beta \in \Bl_q + \Bh_q$. From Observation~\ref{obs:t}, $t_{q+1} \leq \tp[\beta] < t_{q}$, and therefore for every $0 \leq i \leq q$ and every $\alpha_j \in \A$ we have that
$(\star) \ \tp[\alpha_j] \leq \tp[\alpha] = \tp[\beta] < t_q \leq t_{i}$. Let $d = \dig{m}$. We consider two cases: (1) $q < d$, and (2) $q \geq d$, and show for each case that $\B <^s \B'$.

\vspace{10pt}
\noindent
(1) $q < d$. In this case, condition $(\star)$ implies that for every $0 \leq i < q$, we have that $\rng[{(2 \times \beta + \A)}]{t_{i+1}}{t_{i}} = \rng[\alpha]{t_{i+1}}{t_{i}} = \emptyset$. In addition $\rng[{(2 \times \beta + \A)}]{t_{q+1}}{t_{q}} = 2 \times \beta + \geqt[\A]{t_{q+1}}, \ \rng[\alpha]{t_{q+1}}{t_{q}} = \alpha$, and $\lt[{(2 \times \beta + \A)}]{t_{q+1}} = \lt[\A]{t_{q+1}}, \ \lt[\alpha]{t_{q+1}} = \emptyset$. Thus, form Claim~\ref{clm:elementaryDecomposition}, we get that

$$
\begin{array}{ll}
	\B'_{q+1} & = \B_{q+1} - \frac{m}{2^{q+1}} \times \lt[\A]{t_{q+1}}
	 \stackrel{\text{Clm.\ref{clm:subConvex}}}{=} \B_{q+1} - \frac{m2^x}{2^{q+1}} \hat{\A}, \\
	 & \text{ where } \frac{m2^x}{2^{q+1}} \text{ is a positive integer and } \hat{\A} \text{  is a convexed } l\text{-collection,}\\
	\Bls'_{q} & = \Bls_{q}, \\
	\Bhs'_{q-1} & = \Bhs_{q-1}, \\
	\Bh'_q & = \Bh_q - \frac{m}{2^q} \times (2 \times \beta  + \geqt[\A]{t_{q+1}}) + \frac{m}{2^q} \times \alpha.
\end{array}
$$

Observe that $\Bls'_{q}  = \Bls_{q}$ and $ \Bhs'_{q-1} = \Bhs_{q-1}$ imply that $\sert{s}_q = \sigs_q$ and $\sert{t}_{q+1} = \ser{t}_{q+1}$. Also, observe that $|\Bh'_q| \leq |\Bh_q| - \frac{m}{2^q} < |\Bh_q|$, therefore $|\Bl'_q| + \frac{|\Bh'_q|}{2} < |\Bl_q| + \frac{|\Bh_q|}{2}$. Applying Claim~\ref{clm:elementaryAux} with respect to entities $\B, \B', C = \emptyset, \hat{\A}, i = q$, and $m' = \frac{m2^x}{2^{q+1}}$, we get that $\B <^s \B'$.

\vspace{10pt}
\noindent
(2) $q \geq d$. In this case, condition $(\star)$ implies that for every $0 \leq i < d$, we have that $\rng[{(2 \times \beta + \A)}]{t_{i+1}}{t_{i}} = \rng[\alpha]{t_{i+1}}{t_{i}} = \emptyset$, and $\lt[{(2 \times \beta + \A)}]{t_d} = 2 \times \beta + \A, \ \lt[\alpha]{t_{d}} = \alpha$. Therefore, from Claim~\ref{clm:elementaryDecomposition},

$$
\begin{array}{ll}
\B'_{d} & = \B_{d} - \frac{m}{2^{d}} \times (2 \times \beta + \A) + \frac{m}{2^{d}} \times \alpha,\\
\Bls'_{d-1} & =  \Bls_{d-1}, \\
\Bhs'_{d-1} & =  \Bhs_{d-1}.
\end{array}
$$

Again, $\Bls'_{d-1}  = \Bls_{d-1}$ and $ \Bhs'_{d-1} = \Bhs_{d-1}$ imply that $\sert{t}_{d} = \ser{t}_{d}$ and $\sert{s}_{d-1} = \sigs_{d-1}$. Denote $m' = \frac{m}{2^{d}}$, and observe that $m'$ is an odd nonnegative integer. Thus,
$\modTwo{m' \times (2 \times \beta + \A)} \stackrel{\text{Obs.\ref{obs:modTwo}}}{=} \modTwo{\A} = \alpha_1$, and $\modTwo{m' \times \alpha} \stackrel{\text{Obs.\ref{obs:modTwo}}}{=} \alpha$.
Since $\alpha \notin \B_{d} - m' \times (2 \times \beta + \A) \subseteq \B$ (by definition of elementary folding), we get that $\modTwo{\B_{d} - m' \times (2 \times \beta + \A)} \cap \modTwo{m' \times \alpha} = \emptyset$.
Consequentially,

$$
\begin{array}{rl}
\Bl'_{d} = & \modTwo{\B'_{d}}
	= \modTwo{\B_{d} - m' \times (2 \times \beta + \A) + m' \times \alpha} \\
	\stackrel{\text{Obs.\ref{obs:modTwo}}}{=}  & \Bl_{d} + \alpha_1 - 2 \times (\Bl_{d} \cap \alpha_1) + \alpha.
\end{array}
$$

Note that $|\Bl'_d| = |\Bl_d| + 2 - 2 |\Bl_{d} \cap \alpha_1|$, and therefore $s'_{d} = |\Bl'_{d}| - |\Bl'_{d-1}| - \frac{|\Bh'_{d-1}|}{2} + \max(s'_{d-1}, 0) = |\Bl_{d}| + 2 - 2 |\Bl_{d} \cap \alpha_1| - |\Bl_{d-1}| - \frac{|\Bh_{d-1}|}{2} + \max(s_{d-1}, 0) = s_d + 2 - 2 |\Bl_{d} \cap \alpha_1|$.
When $\Bl_{d} \cap \alpha_1 = \emptyset$, $s'_{d} = s_d + 2$, and so $\sigs < \sert{s}$ and the claim follows. Else, $\Bl_{d} \cap \alpha_1 = \alpha_1$, $\Bl'_d = \Bl'_d -\alpha_1 + \alpha$, therefore $s'_{d} = s_d$,
and there is a need to continue and examine positions grater than $d$ in the signatures of $\B'$ and $\B$.

 In this remaining case $\sert{s}_d = \sigs_d$, thus requirement (a) in Claim~\ref{clm:elementaryAux} holds with respect to $\B, \B'$ and $i = d$. In addition, $\alpha_1 \in \Bl_{d}$, implying that $t_{d+1} \leq \mint(\Bl_{d}) \leq \tp[\alpha_1] \leq \tp[\beta]$, and so $q = d$. Moreover, $t'_{d+1} \leq \mint(\Bl'_d) \leq \tp = \tp[\beta]$, and $t'_{d+1} = \min(\mint(\Bl'_d), t'_d) = \min(\mint(\Bl_d - \alpha_1 + \alpha), t_d) \geq \min(\mint(\Bl_d), t_d) = t_{d+1}$, hence requirement (b) in Claim~\ref{clm:elementaryAux} holds with respect to $\B, \B'$ and $i = d$. Now,

$$
\begin{array}{rl}
 \Bh'_d = & \geqt[(\B'_d - \Bl'_d)]{t'_{d+1}} \\
 = & ((\B_{d} - m' \times (2 \times \beta + \A) + m' \times \alpha) - (\Bl_{d} - \alpha_1 + \alpha))^{\geq t'_{d+1}} \\
 = & \geqt[(\B_d - \Bl_d)]{t'_{d+1}} - 2m' \times \beta + (m'-1) \times \alpha
  - m' \times \geqt[\A]{t'_{d+1}} + \geqt[\alpha_1]{t'_{d+1}} \\
 = & \left(\geqt[(\B_d - \Bl_d)]{t_{d+1}} - \rng[(\B_d - \Bl_d)]{t_{d+1}}{t'_{d+1}}\right) - 2m' \times \beta + (m'-1) \times \alpha \\
 & - m' \times \geqt[\A]{t'_{d+1}} + \geqt[\alpha_1]{t'_{d+1}} \\
 = & \Bh_d - \rng[(\B_d - \Bl_d)]{t_{d+1}}{t'_{d+1}} - 2m' \times \beta + (m'-1) \times \alpha - m' \times \geqt[\A]{t'_{d+1}} + \geqt[\alpha_1]{t'_{d+1}}.
\end{array}
$$

Observe that each element in $\B_d - \Bl_d$ has an even count, $\B_d$ contains at least $m'$ copies of $\alpha_1$ (since $m' \times \A \subseteq \B_d$), and $\Bl_d$ contains exactly one copy of $\alpha_1$, therefore $\B_d - \Bl_d - (m' - 1) \times \alpha_1$ is a valid collection in which each element has an even count (recall that $m'$ is odd).
Denote $C = \frac{1}{2}\left(\rng[(\B_d - \Bl_d - (m' - 1) \times \alpha_1)]{t_{d+1}}{t'_{d+1}}\right) = \frac{1}{2} \left(\rng[(\B_d - \Bl_d)]{t_{d+1}}{t'_{d+1}} - (m' - 1) \times \alpha_1^{<t'_{d+1}}\right)$. Next, we can write

$$
\begin{array}{rl}
 \Bh'_d   = & \Bh_d - \rng[(\B_d - \Bl_d)]{t_{d+1}}{t'_{d+1}} - 2m' \times \beta + (m'-1) \times \alpha - m' \times \geqt[\A]{t'_{d+1}} + \geqt[\alpha_1]{t'_{d+1}} \\
	= & \Bh_d - 2 \times C - (m' - 1) \times \alpha_1^{<t'_{d+1}} - 2m' \times \beta + (m' - 1) \times \alpha - m' \times \geqt[\A]{t'_{d+1}} + \geqt[\alpha_1]{t'_{d+1}} \\
= & \Bh_d - 2 \times C - (m' - 1) \times \left(\alpha_1 - \alpha_1^{\geq t'_{d+1}}\right) - 2m' \times \beta + (m' - 1) \times \alpha - m' \times \geqt[\A]{t'_{d+1}} + \geqt[\alpha_1]{t'_{d+1}}\\
= & \Bh_d - 2 \times C - (m' - 1) \times \alpha_1 - 2m' \times \beta + (m' - 1) \times \alpha - m' \times (\A - \alpha_1)^{\geq t'_{d+1}}.
\end{array}
$$

Since $m' \geq 1$ (being an odd nonnegative integer), we get that $|\Bh'_d| = |\Bh_d| - 2 |C| - 2m' - m' \left|(\A - \alpha_1)^{\geq t'_{d+1}}\right| < |\Bh_d| - 2 |C|$.
Therefore, $|\Bl'_d| + \frac{|\Bh'_d|}{2} < |\Bl_d| + \frac{|\Bh_d|}{2} + |C|$, and condition (c) of Claim~\ref{clm:elementaryAux} holds with respect to $\B, \B', C$, and $i = d$.
In addition, $\B_{d+1} \stackrel{\text{Obs.\ref{obs:t}}}{=} \frac{1}{2^{d+1}} \times \B^{< t_{d+1}}$, and in particular the top values of all elements in $\B_{d+1}$ are lower than $t_{d+1}$.
Since the top values of all elements in $C$ are at least $t_{d+1}$, we have that $\B_{d+1} \cap C = \emptyset$, and condition (d) of Claim~\ref{clm:elementaryAux} holds with respect to $\B, \B', C$, and $i = d$.
From Claim~\ref{clm:subConvex}, there is an integer $x > 0$ and a convexed $l$-collection $\hat{\A}$ such that $|\hat{A}| < |\A|$ and $(\A - \alpha_1)^{< t'_{d+1}} =  2^x \times \hat{\A}$, and so

$$
\begin{array}{rl}
 \Bh'_d  = & \Bh_d - 2 \times C - (m' - 1) \times \alpha_1 - 2m' \times \beta + (m'-1) \times \alpha - m' \times (\A - \alpha_1)^{\geq t'_{d+1}}\\
 = & \Bh_d - 2 \times C - (m' - 1) \times \alpha_1 - 2m' \times \beta + (m'-1) \times \alpha - m' \times \left((\A - \alpha_1) - (\A - \alpha_1)^{< t'_{d+1}}\right)\\
= & \Bh_d - 2 \times C - 2m' \times \beta + (m'-1) \times \alpha - (m' \times \A - \alpha_1) + m'2^x \times \hat{\A},
\end{array}
$$

\noindent
and

$$
\begin{array}{rl}
 \B'_{d+1} = & \frac{1}{2}(\B'_d - \Bl'_d - \Bh'_d) \\
 = & \frac{1}{2}((\B_{d} - m' \times (2 \times \beta + \A) + m' \times \alpha) - (\Bl_d - \alpha_1 + \alpha) \\
 & - (\Bh_d - 2 \times C - 2m' \times \beta + (m'-1) \times \alpha - (m' \times \A - \alpha_1) + m'2^x \times \hat{\A})) \\
 = & \frac{1}{2}(\B_{d} - \Bl_d - \Bh_d)
  + C - m'2^{x-1} \times \hat{\A} \\
 = & \B_{d+1} + C - m'2^{x-1} \times \hat{\A} \\
\end{array}
$$

Since $x > 0$, $m'2^{x-1}$ is an integer. Therefore, requirement (e) in Claim~\ref{clm:elementaryAux} holds with respect to $\B, \B', C, \hat{\A}, i = d$, and $m'' = m'2^{x-1}$, and the claim follows.

\end{proof}

\begin{claim}
\label{clm:addEpsilons}
Let $\B' = \B + m \times \varepsilon$ be a type II elementary folding. For $d = \dig{m}$, we have
\begin{enumerate}
	\item $\sert{s}_{d-1} = \sigs_{d-1}$,
	\item $\sig'_{d} = \sig_{d} + 1$,
	\item $r' = d+1$.
\end{enumerate}
\end{claim}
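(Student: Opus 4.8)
The plan is to realize the $\varepsilon$-insertion as a special case of Claim~\ref{clm:addition}, taking $\A = \varepsilon$ (a legitimate $l$-BFB palindrome collection, since $\varepsilon$ is an $l$-BFB palindrome), $\B$ the given $l$-block collection, and $m > 0$ the given multiplicity. The one structural fact driving everything is that $\varepsilon$ carries the smallest possible top value: every other element in sight is an $l$-block with top at least $l$, so $\tp[\varepsilon] < t_i$ for all $i$ in range, and $\varepsilon$ sits strictly below all the thresholds $t_i$ generated by $\B$. This makes the two window-evaluations appearing in Claim~\ref{clm:addition} immediate: $\lt[\varepsilon]{t_i} = \varepsilon$ for every $i \leq d = \dig{m}$ (as $\tp[\varepsilon] < t_i$), while $\rng[\varepsilon]{t_{i+1}}{t_i} = \emptyset$ for every such $i$ (the half-open window never descends to $\tp[\varepsilon]$). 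Substituting these into the three items of Claim~\ref{clm:addition} gives $\B'_i = \B_i + \frac{m}{2^i} \times \varepsilon$ for $0 \leq i \leq d$, together with $\Bls'_{d-1} = \Bls_{d-1}$ and $\Bhs'_{d-1} = \Bhs_{d-1}$.

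Parts (1) and (2) then fall out by unwinding Definition~\ref{def:signature}. Since each $s_j$ is determined by $\Bl_j, \Bl_{j-1}, \Bh_{j-1}$ and $s_{j-1}$, the two agreements through index $d-1$ propagate inductively to $\sert{s}_{d-1} = \sigs_{d-1}$, which is part (1); they also yield $t'_d = t_d$. For part (2) I compute $\Bl'_d$ directly. Because $d = \dig{m}$ makes $\frac{m}{2^d}$ odd, Observation~\ref{obs:modTwo} collapses $\Bl'_d = \modTwo{\B_d + \frac{m}{2^d} \times \varepsilon}$ to $\modTwo{\B_d + \varepsilon}$; and since $\varepsilon \notin \B$ forces $\varepsilon \notin \B_d$, a second application of Observation~\ref{obs:modTwo} gives $\Bl'_d = \Bl_d + \varepsilon$, whence $|\Bl'_d| = |\Bl_d| + 1$. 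As all remaining terms defining $s'_d$ already coincide with those of $s_d$ (by the index-$(d-1)$ agreements), we conclude $\sig'_d = \sig_d + 1$.

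The substance of the claim is part (3), where the minimal top of $\varepsilon$ does the real work. Since $\B'_d$ contains $\varepsilon$ with the odd multiplicity $\frac{m}{2^d}$ it is nonempty, so $r' \geq d+1$, and it remains to show $\B'_{d+1} = \emptyset$. The crux is that inserting $\varepsilon$ into $\Bl_d$ drags $\mint(\Bl'_d)$ all the way down to $\tp[\varepsilon]$, so that $t'_{d+1} = \min(\mint(\Bl'_d), t'_d) = \tp[\varepsilon]$. With the threshold pushed down to the global minimum, restricting $\B'_d - \Bl'_d$ to elements of top at least $t'_{d+1}$ retains all of it, so by Observation~\ref{obs:t}, $\Bh'_d = \geqt[(\B'_d - \Bl'_d)]{t'_{d+1}} = \B'_d - \Bl'_d$. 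Plugging this into $\B'_{d+1} = \frac{1}{2} \times (\B'_d - \Bl'_d - \Bh'_d)$ cancels everything and forces $\B'_{d+1} = \emptyset$, so $r' = d+1$. I expect the only care-demanding points to be the bookkeeping that $t'_d = t_d$ (so the minimum defining $t'_{d+1}$ is genuinely attained at $\tp[\varepsilon]$ and not at $t'_d$) and the verification that restricting to tops at least $\tp[\varepsilon]$ keeps both the residual $\varepsilon$-copies and all the $l$-blocks.
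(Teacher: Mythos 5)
Your proposal is correct and follows essentially the same route as the paper's proof: both apply Claim~\ref{clm:addition} with $\A = \varepsilon$ (using that $\varepsilon \notin \B$ forces $\tp[\varepsilon] = 0 < t_i$, so $\lt[\varepsilon]{t_i} = \varepsilon$ and $\rng[\varepsilon]{t_{i+1}}{t_i} = \emptyset$), then compute $\Bl'_d = \Bl_d + \varepsilon$ via Observation~\ref{obs:modTwo} to get $s'_d = s_d + 1$, and finally use $t'_{d+1} = \tp[\varepsilon] = 0$ to force $\B'_{d+1} = \emptyset$. The only cosmetic difference is in the last step, where the paper invokes Observation~\ref{obs:t} in the form $\B'_{d+1} = \frac{1}{2^{d+1}} \times \lt[\B']{0} = \emptyset$ while you derive $\Bh'_d = \B'_d - \Bl'_d$ and cancel through the recursion; these are trivially equivalent.
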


\begin{proof}
Since the folding is elementary, $\varepsilon \notin \B$, and for every $i \geq 0$ we have $t_i > 0$. Therefore, $\lt[\varepsilon]{t_i} = \varepsilon$ and $\rng[\varepsilon]{t_i+1}{t_i} = \emptyset$. From Claim~\ref{clm:addition}, we get that

$$
\begin{array}{rl}
	\B'_{d} = & \B_{d} + \frac{m}{2^{d}} \times \varepsilon, \\
	\Bls'_{d-1} = & \Bls_{d-1}, \\
	\Bhs'_{d-1} = & \Bhs_{d-1}.
\end{array}
$$.

From 	$\Bls'_{d-1} = \Bls_{d-1}$ and $\Bhs'_{d-1} =\Bhs_{d-1}$, it follows that $\sert{s}_{d-1} = \sigs_{d-1}$.
Since $\frac{m}{2^{d}}$ is an odd integer (by definition) and $\varepsilon \notin \Bl_d \subseteq \B$, $\Bl'_d = \modTwo{\B'_d} = \modTwo{\B_d + \frac{m}{2^{d}} \times \varepsilon} \stackrel{\text{Obs.\ref{obs:modTwo}}}{=} \modTwo{\B_d + \varepsilon} \stackrel{\text{Obs.\ref{obs:modTwo}}}{=} \Bl_d + \varepsilon - 2 \times (\Bl_d \cap \varepsilon) = \Bl_d + \varepsilon$. If $d = 0$ then $s'_d = |\Bl'_d| = |\Bl_d| + 1 = s_d + 1$, and otherwise $s'_{d} = |\Bl'_d| - |\Bl'_{d-1}| - \frac{|\Bh'_{d-1}|}{2} + \max(s'_{d-1}, 0) = |\Bl_d| + 1 - |\Bl_{d-1}| - \frac{|\Bh_{d-1}|}{2} + \max(s_{d-1}, 0) = s_d + 1$. Finally, as $\varepsilon \in \Bl'_d$ (and in particular $r' > d$), it follows that $t'_{d+1} = \mint(\Bl'_t) = \tp[\varepsilon] = 0$, $\B'_{d+1} \stackrel{\text{Obs.\ref{obs:t}}}{=} \frac{1}{2^{d+1}} \times \lt[\B]{0} = \emptyset$, and so $r' = d+1$.
\end{proof}

Finally, we prove the main claim in this section.

\begin{proof}[Proof of Claim~\ref{clm:foldIncreaseSignature}]
The correctness of the claim follows immediately from Claims~\ref{clm:elementarySeq},~\ref{clm:typeI}, and~\ref{clm:addEpsilons}.
\end{proof}

\subsection{The FOLD Procedure}
Using the notation and definitions given in the previous sections, we now give an explicit description of the FOLD procedure.

\begin{algorithm}[H]
\normalsize
	\DontPrintSemicolon
	\footnotesize
	\SetKwInput{Algorithm}{Algorithm}
	\SetKwInput{Procedure}{Procedure}
	\SetKwInput{Input}{Input}
	\SetKwInput{Output}{Output}
	\SetKwIF{If}{ElseIf}{Else}{If}{then}{Else if}{Else}{endif}
	\SetKwFor{For}{For}{do}{endfor}
	\SetKwFor{While}{While}{do}{endw}

	\Procedure{FOLD$(\B, n)$}
	\BlankLine
	\Input{An $l$-BFB palindrome collection $\B$ and an integer $n \geq 0$.}
	\Output{A minimum signature folding $\B'$ of $\B$ such that $|\B'| = n$, or the string ``FAILD'' if there is no such $\B'$.}
	\BlankLine
	\lIf{$|\B| \leq  n$}{\Return{$\B + (n - |\B|) \varepsilon$}.\;}
	\Else{
	Let $\sigs = \sigs(\B)$ and $\ser{\Delta} = \ser{\Delta}(\B)$.\;
	\If{there exists $0 \leq d \leq \dig{|\B| - n}$ such that $n \geq 2^d\max(s_d+1, 0) + \Delta_d$}{
		Let $d$ be the maximum integer sustaining the condition above.
		Set $\B' \leftarrow \B + 2^d \times \varepsilon$.\;
		\While{$|\B'| > n$}{
			Set $\B' \leftarrow \text{RIGHT-FOLD}(\B')$.\;
		}
		Set $\B' \leftarrow \B' + (n-|B'|) \times \varepsilon$.\;
		\textbf{Return} $\B'$\;
	}
	\lElse{\Return{``FAILED''}\;}
	}
	\label{alg:FOLD}
\end{algorithm}

\begin{algorithm}[H]
\normalsize
	\DontPrintSemicolon
	\footnotesize
	\SetKwInput{Algorithm}{Algorithm}
	\SetKwInput{Procedure}{Procedure}
	\SetKwInput{Input}{Input}
	\SetKwInput{Output}{Output}
	\SetKwInput{Precondition}{Precondition}
	\SetKwIF{If}{ElseIf}{Else}{If}{then}{Else if}{Else}{endif}
	\SetKwFor{For}{For}{do}{endfor}
	\SetKwFor{While}{While}{do}{endw}

	\Procedure{RIGHT-FOLD$(\B)$}
	\BlankLine
	\Input{An $l$-BFB palindrome collection $\B$.}
	\Precondition{Let $\left\langle \Bs, \Bls, \Bhs\right\rangle$ be the decomposition of $\B$, and $r = r(\B)$. There is an integer $0 \leq g < r$ such that $\Bh_g \neq \emptyset$, $\Bl_g \neq \emptyset$, and for every $g < i < r$, $\Bh_i = \emptyset$ and $\Bl_i \neq \emptyset$.}
	\Output{A folding $\B'$ of $\B$ of size $|\B| - 2^r$.}
	\BlankLine
	Let $g$ be an integer as implied from the precondition (note that $g$ is unique), $\beta$ a minimal element in $\Bh_g$, $\A = \{\alpha_1, 2\times \alpha_2, \ldots, 2^{r-g-1} \times \alpha_{r-g}\}$ a convexed $l$-collection such that $\alpha_i \in  \Bl_{g+i-1}$ for each $1 \leq i \leq r-g$ and $\alpha_1$ is a minimal element in $\Bl_g$, and $\alpha = \beta \gamma_{\A} \beta$.\;
	\textbf{Return} the collection $\B' = \B - 2^g \times (2 \times \beta + \A) + 2^g \times \alpha$.\;
	\label{alg:rightFold}
\end{algorithm}

In general, it is easy to assert that when the precondition holds, the returned collection $\B'$ from the RIGHT-FOLD procedure is a folding of the input collection $\B$, where each one of the $2^g$ copies of $\alpha$ in $\B'$ is obtained by concatenating all elements in $\A$ and two copies of $\beta$.
Since a right-folding adds to the collection $2^g$ copies of $\alpha$ while reducing $2^g$ repeats of the collection $2 \times \beta + \A$ of size $2 + 2^{r-g} - 1 = 1 + 2^{r-g}$, the size of the folded collection $\B'$ has decreased by $2^r$ with respect to the size of the original collection $\B$.

\subsubsection{Right-folding Properties}

In this section we show certain characteristics of right-foldings.

\begin{claim}
\label{clm:rightFoldingAllowed}
There is a right folding of a collection $\B$ if and only if $s_r < 0$.
\end{claim}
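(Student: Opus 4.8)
The plan is to translate the combinatorial precondition of RIGHT-FOLD into a purely numerical statement about the signature, using the recursion $s_d = |\Bl_d| - |\Bl_{d-1}| - \frac{|\Bh_{d-1}|}{2} + \max(s_{d-1}, 0)$ together with Claims~\ref{clm:sL} and~\ref{clm:r}. Here ``there is a right folding'' means that the precondition of RIGHT-FOLD is met. First I would simplify that precondition. Since $\Bh_g = \geqt[(\B_g - \Bl_g)]{\mint(\Bl_g)}$ and $\mint(\emptyset) = \infty$, a nonempty $\Bh_g$ already forces $\Bl_g \neq \emptyset$, so the clause ``$\Bl_g \neq \emptyset$'' is redundant. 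Thus a right folding exists if and only if, taking $g$ to be the largest index $0 \le g < r$ with $\Bh_g \neq \emptyset$ (which must then exist), one has $\Bl_i \neq \emptyset$ for every $g < i < r$.

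This reduction suggests partitioning all collections according to $g :=$ the largest index $< r$ with $\Bh_g \neq \emptyset$: case (A), no such $g$ exists, i.e.\ $\Bh_i = \emptyset$ for all $0 \le i < r$; case (B), $g$ exists and $\Bl_i \neq \emptyset$ for every $g < i < r$ (this is exactly when a right folding is allowed); and case (C), $g$ exists but $\Bl_i = \emptyset$ for some $g < i < r$. I would then show that cases (A) and (C) force $s_r = 0$, while case (B) forces $s_r < 0$. Since these cases partition all collections and $s_r \le 0$ always holds (Claim~\ref{clm:r}), this yields the equivalence: $s_r < 0$ rules out (A) and (C), leaving exactly (B).

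The common engine is the remark that whenever $\Bh_{d-1} = \emptyset$ the recursion collapses to $s_d = |\Bl_d| - |\Bl_{d-1}| + \max(s_{d-1}, 0)$, so a ``reset'' propagates: if at some index $d^*$ one has $\max(s_{d^*}, 0) = |\Bl_{d^*}|$ and $\Bh_i = \emptyset$ for all $i \ge d^*$, then an immediate induction gives $s_d = |\Bl_d| \ge 0$ for every $d > d^*$, whence $s_r = |\Bl_r| = 0$. For case (A) I would run this from $d^* = 0$, where $s_0 = |\Bl_0|$ by definition, obtaining $s_r = 0$. For case (C) I would take $i_0$ to be the least index $> g$ with $\Bl_{i_0} = \emptyset$; using Claim~\ref{clm:sL} (namely $|\Bl_d| \ge \max(s_d,0)$) one checks $s_{i_0} \le 0$ directly from the recursion at $i_0$ (distinguishing $i_0 = g+1$, which uses $|\Bh_g| \ge 1$, from $i_0 > g+1$, where $\Bh_{i_0-1} = \emptyset$), hence $\max(s_{i_0},0) = 0 = |\Bl_{i_0}|$, and the reset from $d^* = i_0$ again gives $s_r = 0$.

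The main obstacle is case (B), where the deficit contributed by the nonempty $\Bh_g$ must be shown to survive all the way to index $r$ without being cancelled. Here I would prove the sharper invariant $s_d \le |\Bl_d| - \frac{1}{2}$ for every $g < d \le r$ by induction. The base case $d = g+1$ uses $\max(s_g,0) \le |\Bl_g|$ (Claim~\ref{clm:sL}) and $|\Bh_g| \ge 1$ to get $s_{g+1} \le |\Bl_{g+1}| - \frac{|\Bh_g|}{2} \le |\Bl_{g+1}| - \frac{1}{2}$. The inductive step relies crucially on the case-(B) hypothesis $|\Bl_{d-1}| \ge 1$, which makes $|\Bl_{d-1}| - \frac{1}{2} \ge 0$ and hence $\max(s_{d-1},0) \le |\Bl_{d-1}| - \frac{1}{2}$; combined with $\Bh_{d-1} = \emptyset$ this yields $s_d \le |\Bl_d| - \frac{1}{2}$. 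Evaluating at $d = r$, where $|\Bl_r| = 0$, gives $s_r \le -\frac{1}{2} < 0$. The delicate points throughout are correctly tracking the $\max(\cdot,0)$ term and isolating the single index $g$ where $\Bh$ is nonempty, so that above $g$ the recursion is always in its collapsed form.
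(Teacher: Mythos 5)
Your proposal is correct, and it differs from the paper's proof in a meaningful way in one of the two directions. Your case (B) argument is essentially the paper's forward direction: the paper propagates the strict inequality $-|\Bl_i| - \frac{|\Bh_i|}{2} + \max(s_i,0) < 0$ (equivalently $s_{i+1} < |\Bl_{i+1}|$) forward from $g$ to $r$, using Claim~\ref{clm:sL} and $\Bh_g \neq \emptyset$ exactly as your invariant $s_d \leq |\Bl_d| - \frac{1}{2}$ does. For the converse, however, the paper argues directly rather than by contrapositive: assuming $s_r < 0$, it runs a backward induction starting at $i = r-1$, maintaining the invariant that the deficit term $-|\Bl_i| - \frac{|\Bh_i|}{2} + \max(s_i,0)$ is negative together with $\Bh_d = \emptyset$, $\Bl_d \neq \emptyset$ for all $i < d < r$; at each step either $\Bh_i \neq \emptyset$, in which case $g = i$ witnesses the precondition (using the same redundancy observation you made, that $\Bh_i \neq \emptyset$ forces $\Bl_i \neq \emptyset$), or the negative deficit is pushed down to $i-1$. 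Your version replaces this backward search by the trichotomy (A)/(B)/(C) and a forward ``reset'' induction pinning $s_r$ to exactly $0$ in cases (A) and (C). Both are sound and rest on the same two ingredients (Claim~\ref{clm:sL} and the recursion for $s_d$). The trade-off: the paper's argument is constructive---it produces the unique witness $g$, the very index the RIGHT-FOLD procedure needs---whereas yours buys a slightly sharper dichotomy as a by-product, namely that $s_r = 0$ precisely when no right folding exists (refining the bound $s_r \leq 0$ of Claim~\ref{clm:r}), and it avoids backward induction entirely at the cost of a three-way case split.
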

\begin{proof}
For the first direction of the proof, assume that there is a right folding $\B'$ of $\B$. From Claim~\ref{clm:sL}, $|\Bl_g| \geq \max(s_g, 0)$. Since $\Bh_g \neq \emptyset$, we get that $- |\Bl_{g}| - \frac{|\Bh_g|}{2} + \max(s_g, 0) < 0$. This, in turn, implies that $s_{g+1} = |\Bl_{g+1}| - |\Bl_{g}| - \frac{|\Bh_g|}{2} + \max(s_g, 0) < |\Bl_{g+1}|$. If $r = g+1$, then $s_r = s_{g+1} < |\Bl_{g+1}| = 0$, and the claim follows. Otherwise, $\Bl_{g+1} \neq \emptyset$, and in particular $- |\Bl_{g+1}| - \frac{|\Bh_{g+1}|}{2} + \max(s_{g+1}, 0) < 0$. Inductively, this shows that $s_r < 0$.

For the second direction, assume that $s_r < 0$. Assume that for some $i < r$ we have that $- |\Bl_{i}| - \frac{|\Bh_{i}|}{2} + \max(s_{i}, 0) < 0$, and that $\Bh_d = \emptyset$ and $\Bl_d \neq \emptyset$ for all $i < d < r$. Note that this requirement holds for $i = r-1$, since $- |\Bl_{r-1}| - \frac{|\Bh_{r-1}|}{2} + \max(s_{r-1},0) = |\Bl_r| - |\Bl_{r-1}| - \frac{|\Bh_{r-1}|}{2} + \max(s_{r-1}, 0) = s_r < 0$, and there are no integers $d$ such that $r-1 < d < r$.
If $\Bh_{i} \neq \emptyset$, then also $\Bl_{i} \neq \emptyset$ (since $\Bl_{i} = \emptyset$ implies that $\mint(\Bl_i) = \infty$, and by definition $\Bh_{i} = \geqt[(\B_i - \Bl_i)]{\infty} = \emptyset$), and therefore the requirements for the existence of a right-folding hold for $g = i$. Else, $\Bh_{i} = \emptyset$, and so $- |\Bl_{i}| + \max(s_{i}, 0) < 0$. This implies that $\Bl_i \neq \emptyset$ and that $i \neq 0$ (as $|\Bl_{0}| = s_{0}$), and so $|\Bl_i| > \max(s_i, 0) \geq s_i = |\Bl_i| - |\Bl_{i-1}| - \frac{\Bh_{i-1}}{2} + \max(s_{i-1}, 0)$, and we get that $-|\Bl_{i-1}| - \frac{\Bh_{i-1}}{2} + \max(s_{i-1}, 0) < 0$. Inductively, for some $i' < r$, it must hold that $\Bh_{i'} \neq \emptyset$, $\Bh_d = \emptyset$ for every $i' < d < r$, and $\Bl_d \neq \emptyset$ for every $i' \leq d < r$, meeting the requirements for the existence of a right folding for $g = i'$.
\end{proof}

Throughout the remaining of this section, assume that $\B$ is a collection satisfying the pre-condition in Procedure RIGHT-FOLD, and let $\B' = \B - 2^g \times(2 \times \beta + \A) + 2^g \times \alpha$ be the output of the procedure (where $g$, $r$, $\beta$, $\A = \{\alpha_1, \ldots, 2^{r-g-1} \times \alpha_{r-g}\}$, and $\alpha$ are as defined in the procedure). Note that when $\alpha \notin \B$, $\B'$ is also a type I elementary folding of $\B$. We later show that all right-foldings preformed in line~7 of the FOLD procedure are elementary.

\begin{claim}
\label{clm:rightFolding}
If $\B'$ is an elementary folding of $\B$, then
\begin{enumerate}
	\item $\Bls'_{g-1} = \Bls_{g-1}$, and $\Bl'_g = \Bl_g - \alpha_1 + \alpha$.
	\item $\Bhs'_{g-1} = \Bhs_{g-1}$, and $\Bh'_g = \Bh_g - 2 \times \beta$.
	\item For every $g < i < r$, $\Bl'_{i} = \Bl_{i} - \alpha_{i-g}$, and $\Bh'_{i} = \Bh_i = \emptyset$.
	\item $\B'_{r} = \B_{r} = \emptyset$ (thus $r' \leq r$).
	\item $|\B'| = |\B| - 2^{r}$.
\end{enumerate}
\end{claim}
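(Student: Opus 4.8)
The plan is to compute the decomposition $\langle\Bs',\Bls',\Bhs'\rangle$ one level at a time, exploiting that $\B'$ is a type~I elementary folding with multiplier $m=2^g$, so $\dig{m}=g$. First I would feed this into Claim~\ref{clm:elementaryDecomposition}: its second identity gives $\Bls'_{g-1}=\Bls_{g-1}$ outright (the first half of item~(1)), and its third identity gives $\Bhs'_{g-1}=\Bhs_{g-1}$ once I observe that $\beta$, every block of $\A$, and $\alpha$ all have top strictly below $t_g$, so for $i<g$ each range $\rng[(2\times\beta+\A)]{t_{i+1}}{t_i}$ and $\rng[\alpha]{t_{i+1}}{t_i}$ is empty (the first half of item~(2)); as a by-product $t'_i=t_i$ for $i\le g$. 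Before proceeding I would record the top-orderings I keep reusing: by Observation~\ref{obs:t}, $\beta\in\Bh_g$ forces $t_{g+1}\le\tp[\beta]<t_g$; $\alpha_1$ minimal in $\Bl_g$ gives $\tp[\alpha_1]=\mint(\Bl_g)=t_{g+1}$; each $\alpha_j\in\Bl_{g+j-1}$ has $\tp[\alpha_j]<t_{g+j-1}\le t_{g+1}$ for $j\ge 2$; and since $\gamma_{\A}$ has top $\tp[\alpha_1]\le\tp[\beta]$, we get $\tp[\alpha]=\tp[\beta]$.

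The crux is level $g$. Specializing the first identity of Claim~\ref{clm:elementaryDecomposition} at $i=g$ and using $\lt[(2\times\beta+\A)]{t_g}=2\times\beta+\A$ and $\lt[\alpha]{t_g}=\alpha$ yields $\B'_g=\B_g-2\times\beta-\A+\alpha$. Taking $\modTwo{\cdot}$ and invoking Observation~\ref{obs:modTwo} — the term $-2\times\beta$ is parity-neutral, $-\A$ flips only the parity of $\alpha_1$ since $\modTwo{\A}=\alpha_1$, and $\alpha\notin\B$ so $+\alpha$ adds one fresh odd element — gives $\Bl'_g=\Bl_g-\alpha_1+\alpha$, the second half of item~(1). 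For $\Bh'_g=\geqt[(\B'_g-\Bl'_g)]{t'_{g+1}}$ I would write $\B'_g-\Bl'_g=(\B_g-\Bl_g)-2\times\beta-(\A-\alpha_1)$; the blocks of $\A-\alpha_1$ have top below $t_{g+1}\le t'_{g+1}$ and disappear, while $t'_{g+1}\le\mint(\Bl'_g)\le\tp[\alpha]=\tp[\beta]$ keeps the two copies of $\beta$. The delicate point is that $t'_{g+1}$ may exceed $t_{g+1}$, so I must check $\geqt[(\B_g-\Bl_g)]{t'_{g+1}}=\Bh_g$: since $\beta$ is minimal in $\Bh_g$, no element of $\B_g-\Bl_g$ has top in $[t_{g+1},t'_{g+1})\subseteq[t_{g+1},\tp[\beta])$, so raising the threshold from $t_{g+1}$ to $t'_{g+1}$ loses nothing. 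This establishes $\Bh'_g=\Bh_g-2\times\beta$ (item~(2)).

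Items~(3) and~(4) then follow by induction on $i$. Substituting the three level-$g$ expressions into $\B'_{g+1}=\tfrac12(\B'_g-\Bl'_g-\Bh'_g)$ collapses to $\B'_{g+1}=\B_{g+1}-\tfrac12(\A-\alpha_1)$, and iterating shows $\B'_{g+k}$ is $\B_{g+k}$ minus the convexed collection obtained from $\A$ by repeatedly discarding its minimal block and halving; the unique odd block of this tail is precisely the block of $\A$ lying in $\Bl_i$, whose removal yields item~(3)'s $\Bl'_i$. For $\Bh'_i$ I would prove alongside that $t'_i\ge t_i$ for all $i$ (deleting a block can only raise $\mint$); then, because the surviving subtracted blocks have top below $t_{i+1}\le t'_{i+1}$ and $\Bh_i=\geqt[(\B_i-\Bl_i)]{t_{i+1}}=\emptyset$ by the precondition, we get $\Bh'_i=\geqt[(\B_i-\Bl_i)]{t'_{i+1}}=\emptyset=\Bh_i$. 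At $i=r-1$ the tail is the single block $\alpha_{r-g}$, which cancels against its removal from $\Bl_{r-1}$, so $\B'_r=\tfrac12(\B_{r-1}-\Bl_{r-1})=\B_r=\emptyset$, giving item~(4) and $r'\le r$.

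Item~(5) is a size count independent of the decomposition: $\B'$ deletes $2^g$ copies of $2\times\beta+\A$, of size $2+|\A|=2+(2^{r-g}-1)=1+2^{r-g}$ each, and inserts $2^g$ copies of the single block $\alpha$, for a net change of $2^g-2^g(1+2^{r-g})=-2^r$, i.e. $|\B'|=|\B|-2^r$. The one genuinely subtle step is the level-$g$ identity $\Bh'_g=\Bh_g-2\times\beta$, precisely because $t'_{g+1}$ need not equal $t_{g+1}$; the minimality of $\beta$ in $\Bh_g$ together with $t'_{g+1}\le\tp[\beta]$ is exactly what rescues it, and the same monotonicity $t'_i\ge t_i$ drives the emptiness of $\Bh'_i$ at the higher levels.
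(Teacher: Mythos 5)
Your proposal is correct and follows essentially the same route as the paper's own proof: Claim~\ref{clm:elementaryDecomposition} with $m=2^g$ for the levels up to $g$, mod-2 arithmetic with $\alpha\notin\B$ and $\alpha_1\in\Bl_g$ for $\Bl'_g$, the minimality of $\beta$ in $\Bh_g$ to handle the raised threshold $t'_{g+1}$ in computing $\Bh'_g$, induction on $i$ for the higher levels, and a direct count for item~(5). Your phrasing of item~(3) as removing ``the block of $\A$ lying in $\Bl_i$'' is in fact the cleaner statement, since the procedure's convention $\alpha_j\in\Bl_{g+j-1}$ makes that block $\alpha_{i-g+1}$ rather than the $\alpha_{i-g}$ written in the claim --- an off-by-one typo in the paper that your argument silently corrects.
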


\begin{proof}
We start by showing the first two items in the claim.
Since $\beta \in \Bh_g \stackrel{\text{Obs.\ref{obs:t}}}{\subseteq} \B^{< t_g}$, it follows that for every $\alpha_j \in \A$ and every $0 \leq i \leq g$, $\tp[\alpha_j] \leq \tp = \tp[\beta] < t_g \leq t_i$. Therefore, from Claim~\ref{clm:elementaryDecomposition} and the fact that $\dig{2^g} = g$, we get that $\Bls'_{g-1} = \Bls_{g-1}$ (and in particular $\sert{t}_{g} = \ser{t}_{g}$), $\Bhs'_{g-1} = \Bhs_{g-1}$, and $\B'_g = \B_g - (2 \times \beta + \A) + \alpha$.
Since $\modTwo{2 \times \beta + \A} = \alpha_1 \in \Bl_g$, we have $\modTwo{\B_g - (2 \times \beta + \A)} \stackrel{\text{Obs.\ref{obs:modTwo}}}{=} \modTwo{\B_g + \alpha_1} \stackrel{\text{Obs.\ref{obs:modTwo}}}{=} \modTwo{\B_g} + \alpha_1 - 2 \times (\modTwo{\B_g} \cap \alpha_1) = \Bl_g + \alpha_1 - 2 \times (\Bl_g \cap \alpha_1) = \Bl_g - \alpha_1$. As $\alpha \notin \Bl_g + \alpha_1$ (follows from $\alpha \notin \B$), we get that $\Bl'_g = \modTwo{\B'_g} = \modTwo{\B_g - (2 \times \beta + \A) + \alpha} \stackrel{\text{Obs.\ref{obs:modTwo}}}{=} \modTwo{(\Bl_g - \alpha_1) + \alpha} = \Bl_g - \alpha_1 + \alpha$.
By definition, $\alpha_1$ is a minimal element in $\Bl_g$, therefore $\tp[\alpha_1] = \mint(\Bl_g)$, and so $\tp[\alpha_1] \leq \mint(\Bl_g - \alpha_1)$. In addition, $\tp[\alpha_1] \leq \tp$, and therefore $\tp[\alpha_1] \leq \min(\mint(\Bl_g - \alpha_1), \tp) = \mint(\Bl_g - \alpha_1 + \alpha) = \mint(\Bl'_g)$. On the other hand, $\tp[\beta] = \tp$, and therefore $\mint(\Bl'_g) = \mint(\Bl_g - \alpha_1 + \alpha) \leq \tp[\beta]$.
From the minimality of $\beta$ in $\Bh_g$, $\Bh_g = \Bh_g^{\geq \tp[\beta]} = ((\B_g - \Bl_g)^{\geq \mint(\Bl_g)})^{\geq \tp[\beta]} = ((\B_g - \Bl_g)^{\geq \tp[\alpha_1]})^{\geq \tp[\beta]} = (\B_g - \Bl_g)^{\geq \max(\tp[\alpha_1], \tp[\beta])} = (\B_g - \Bl_g)^{\geq \tp[\beta]}$. Since $\tp[\alpha_1] \leq \mint(\Bl'_g) \leq \tp[\beta]$, we get that $\Bh_g = (\B_g - \Bl_g)^{\geq \tp[\beta]} \subseteq (\B_g - \Bl_g)^{\geq \mint(\Bl'_g)} \subseteq (\B_g - \Bl_g)^{\geq \tp[\alpha_1]} = \Bh_g$, and so $(\B_g - \Bl_g)^{\geq \mint(\Bl'_g)} = \Bh_g$. In addition, $\beta^{\geq \mint(\Bl'_g)} = \beta$, and $(\A  - \alpha_1)^{\geq \mint(\Bl'_g)} = \emptyset$ (since for all $i>0$, $\tp[\alpha_i] < \tp[\alpha_1] \leq \mint(\Bl'_g)$). Thus, we get that $\Bh'_g = (\B'_g - \Bl'_g)^{\geq \mint(\Bl'_g)} = ((\B_g - 2 \times \beta - \A + \alpha) - (\Bl_g - \alpha_1 + \alpha))^{\geq \mint(\Bl'_g)}
=
(\B_g - \Bl_g)^{\geq \mint(\Bl'_g)} - 2 \times \beta^{\geq \mint(\Bl'_g)} - (\A  - \alpha_1)^{\geq \mint(\Bl'_g)} = \Bh_g - 2 \times \beta$, as required.

Next, we turn to show item 3 in the claim, which is relevant only for the case where $g < r-1$. We prove this item inductively for all $g < i < r$. Note that $\B'_{g+1} = \frac{1}{2} \times (\B'_g - \Bl'_g - \Bh'_g) = \frac{1}{2}((\B_g - 2 \times \beta - \A + \alpha) - (\Bl_g - \alpha_1 + \alpha) - (\Bh_g - 2 \times \beta)) = \frac{1}{2} \times ((\B_g - \Bl_g - \Bh_g) - (\A - \alpha_1)) = \B_{g+1} - \frac{1}{2} \times \A$. Now, assume that for some $g < i < r$, $\B'_i = \B_{i} - \frac{1}{2^{i-g}} \times \A$. Note that $\alpha_{i-g} \in \Bl_i$, and in particular $\Bl_{i} \cap \alpha_{i-g} = \alpha_{i-g}$.
Therefore, $\Bl'_{i} = \modTwo{\B'_{i}} = \modTwo{\B_{i} - \frac{1}{2^{i-g}} \times \A} \stackrel{\text{Obs.\ref{obs:modTwo}}}{=} \modTwo{\B_{i}} + \modTwo{\frac{1}{2^{i-g}} \times \A} - 2 \times (\modTwo{\B_{i}} \cap \modTwo{\frac{1}{2^{i-g}} \times \A}) = \Bl_{i} + \alpha_{i-g} - 2 \times (\Bl_{i} \cap \alpha_{i-g}) = \Bl_{i} + \alpha_{i-g} - 2 \times  \alpha_{i-g} = \Bl_{i} - \alpha_{i-g}$, as required. In addition, since $\mint(\Bl'_i) = \mint(\Bl_i - \alpha_{i-g}) \geq \mint(\Bl_i) = \tp[\alpha_{i-g}]$, we get that  $\Bh'_i = (\B'_i - \Bl'_i)^{\geq \mint(\Bl'_i)} = ((\B_{i} - \frac{1}{2^{i-g}} \times \A) - (\Bl_i - \alpha_{i-g}))^{\geq \mint(\Bl'_i)} \subseteq
(\B_{i} - \Bl_i - (\frac{1}{2^{i-g}} \times \A - \alpha_{i-g}))^{\geq \mint(\Bl_i)} \subseteq \Bh_i = \emptyset$, and so $\Bh'_i = \emptyset$, as required. Finally, it follows that $\B'_{i+1} = \frac{1}{2} \times (\B'_i - \Bl'_i - \Bh'_i) = \frac{1}{2} \times((\B_{i} - \frac{1}{2^{i-g}} \times \A) - (\Bl_{i} - \alpha_{i-g}) - \Bh_i) = \B_{i+1} - \frac{1}{2} \times (\frac{1}{2^{i-g}} \times \A - \alpha_{i-g}) = \B_{i+1} - \frac{1}{2^{i+1-g}} \times \A$, as required for the next inductive step.

Item~4 in the claim follows from the fact that $\B'_r = \B_{r} - \frac{1}{2^{r-g}} \times \A = \emptyset - \emptyset = \emptyset$, and item~5 is obtained from the fact that $|\B'| = |\B| - 2^g(2  + |\A|) + 2^g = |\B| - 2^g(2 + 2^{r-g} - 1) + 2^g = |\B| - 2^{r}$.
\end{proof}

\begin{claim}
\label{clm:rightFoldingSig}
If $\B'$ is an elementary folding of $\B$, then
$\sert{s}_{r-1} = \sigs_{r-1}$ and $\sig'_r = \sig_r + 1$. In addition, $r' \leq r$, where if $\sig'_r < 0$ then $r' = r$.
\end{claim}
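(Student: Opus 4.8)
The plan is to read off the decomposition of $\B'$ directly from Claim~\ref{clm:rightFolding} and substitute the resulting block sizes into the signature recurrence of Definition~\ref{def:signature}. That claim records exactly how $\Bl'_d,\Bh'_d$ differ from $\Bl_d,\Bh_d$: nothing changes below level $g$; at level $g$ the single element $\alpha_1$ of $\Bl_g$ is replaced by $\alpha$, so $|\Bl'_g| = |\Bl_g|$, while $\Bh_g$ loses exactly two copies of $\beta$, so $|\Bh'_g| = |\Bh_g| - 2$; for $g < d < r$ the collection $\Bl_d$ loses the single element $\alpha_{d-g}$, so $|\Bl'_d| = |\Bl_d| - 1$, while $\Bh'_d = \Bh_d = \emptyset$; and $\B'_r = \emptyset$, so $|\Bl'_r| = 0 = |\Bl_r|$.

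First I would establish $\sert{s}_{r-1} = \sigs_{r-1}$ by an induction on $d$ showing $\sig'_d = \sig_d$ for $0 \le d \le r-1$. For $d < g$ the equality is immediate since both $\Bl$ and $\Bh$ are untouched, and at $d = g$ the changes cancel because $|\Bl'_g| = |\Bl_g|$. The first nontrivial step is $d = g+1$ (when it lies in range): here the $-1$ in $|\Bl'_{g+1}|$ is exactly compensated by $\tfrac{|\Bh'_g|}{2} = \tfrac{|\Bh_g|}{2} - 1$, so the recurrence returns $\sig'_{g+1} = \sig_{g+1}$. For each subsequent level $g+1 < d \le r-1$ the $-1$ in $|\Bl'_d|$ cancels against the matching $-1$ in $|\Bl'_{d-1}|$ while $\Bh'_{d-1} = \Bh_{d-1} = \emptyset$, so again $\sig'_d = \sig_d$.

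Next I would compute $\sig'_r$. Since $|\Bl'_r| = 0$, the recurrence collapses to $\sig'_r = -|\Bl'_{r-1}| - \tfrac{|\Bh'_{r-1}|}{2} + \max(\sig'_{r-1},0)$, which I would compare with the analogous expression for $\sig_r$ using the already-proved $\sig'_{r-1} = \sig_{r-1}$. A short case split on the position of level $r-1$ relative to $g$ is needed. If $r = g+1$, then $r-1 = g$ and the surplus comes from $|\Bh'_g| = |\Bh_g| - 2$, contributing $+1$; if $r > g+1$, then $r-1 > g$, so $\Bh'_{r-1} = \emptyset$ and the surplus comes instead from $|\Bl'_{r-1}| = |\Bl_{r-1}| - 1$, again contributing $+1$. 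Either way $\sig'_r = \sig_r + 1$.

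Finally, $r' \le r$ is immediate from item~4 of Claim~\ref{clm:rightFolding}, which gives $\B'_r = \emptyset$. For the last assertion, suppose $\sig'_r < 0$ but $r' < r$; then applying Claim~\ref{clm:r} to $\B'$ forces $\sig'_d = 0$ for every $d > r'$, and in particular $\sig'_r = 0$, a contradiction. Hence $r' = r$ whenever $\sig'_r < 0$. I expect the main obstacle to be the bookkeeping of the two distinct cancellation mechanisms across the middle range $g < d < r$ (the $\Bh'_g$ correction at $d=g+1$ versus the telescoping $\Bl$ terms beyond), together with the boundary case $r = g+1$ in which these coincide with the terminal level; once those are handled, every step is a direct substitution into the signature recurrence.
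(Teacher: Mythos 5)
Your proposal is correct and takes essentially the same route as the paper: the paper's proof simply notes that the signature depends only on the sizes $|\Bl'_d|$, $|\Bh'_d|$ given by Claim~\ref{clm:rightFolding}, and that substituting them into the recurrence of Definition~\ref{def:signature} yields the claim ``in a straightforward manner.'' Your write-up is exactly that substitution carried out explicitly (including the two cancellation mechanisms at $d=g+1$ and beyond, the boundary case $r=g+1$, and the use of Claim~\ref{clm:r} for the final assertion), so it matches the intended argument, just with the details made explicit.
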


\begin{proof}
By definition~\ref{def:signature}, the values in the series $\sigs'$ depend only on sizes of collections in $\Bls'$ and $\Bhs'$. These sub-collection sizes may be inferred from Claim~\ref{clm:rightFolding}, and their assignments in definition~\ref{def:signature} imply the correctness of the claim in a straightforward manner.
\end{proof}

\begin{sloppypar}
Let $\beta$ be a palindrome obtained by concatenating zero or more $l$-blocks. If $\beta$ is obtained by concatenating an odd number of blocks, $\beta$ is of the form $\beta = \beta_1 \beta_2 \ldots \beta_{q-1} \beta_q \beta_{q-1} \ldots \beta_2 \beta_1$ (where each $\beta_i$ is an $l$-block), whereas if $\beta$ is obtained by concatenating an even number of blocks it is of the form $\beta = \beta_1 \beta_2 \ldots \beta_{q-1} \beta_q \varepsilon \beta_q \beta_{q-1} \ldots \beta_2 \beta_1$. Call $\beta_q$ or $\varepsilon$ respectively the \emph{center} of $\beta$, in these two cases. Note that a center of an $l$-block $\beta$ is $\beta$.
\end{sloppypar}

\begin{definition}
\label{def:uniqueCenters}
Say that an $l$-BFB palindrome collection $\B$ has \emph{unique centers} if all elements in collections of the form $\Bh_d$ are $l$-blocks, and for every $\beta \in \Bl_d$ and $\beta' \in \Bl'_d$ (for some possibly equal integers $d$ and $d'$) such that $\beta \neq \beta'$, the centers of $\beta$ and $\beta'$ differ.
\end{definition}

\begin{claim}
\label{clm:rightFoldUniqueCenters}
If $\B$ has unique centers then $\B'$ is an elementary folding of $\B$, and $\B'$ has unique centers.
\end{claim}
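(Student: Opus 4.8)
The plan is to reduce the statement to two independent tasks using the machinery already in place. The paper has already observed that \emph{whenever} $\alpha \notin \B$, the right-folding $\B'$ is a type I elementary folding of $\B$, and Claim~\ref{clm:rightFolding} then pins down the decomposition of $\B'$ in terms of that of $\B$. So I would first establish the single missing hypothesis, $\alpha \notin \B$, and then feed the resulting decomposition of $\B'$ into a direct check of the unique-centers conditions.

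Before doing either, I would extract the cheap consequences of the unique-centers hypothesis for the objects chosen inside RIGHT-FOLD. Since $\beta$ is a minimal element of $\Bh_g$ and, by hypothesis, every element of an $\Bh$-collection is an $l$-block, $\beta$ is an $l$-block. Moreover $\A$ is a convexed $l$-collection by construction, and $\tp[\gamma_\A] = \tp[\alpha_1] = \mint(\Bl_g) \leq \tp[\beta]$ (the inequality is exactly the one used in the proof of Claim~\ref{clm:rightFolding}, since $\Bh_g = \geqt[(\B_g - \Bl_g)]{\mint(\Bl_g)}$ forces $\tp[\beta] \geq \mint(\Bl_g)$). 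Hence $\gamma_\A \leq^t \beta$, and Claim~1 certifies that $\alpha = \beta \gamma_\A \beta$ is an $l$-BFB palindrome. The one extra observation I would record here is the \emph{center} of $\alpha$: because $\gamma_\A = \gamma_{\A/2}\,\alpha_1\,\gamma_{\A/2}$ places the palindrome $\alpha_1$ in the exact middle of $\gamma_\A$, and $\alpha$ wraps $\gamma_\A$ symmetrically in two copies of $\beta$, the center of $\alpha$ equals the center of $\alpha_1$.

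The heart of the argument is showing $\alpha \notin \B$, and this is where the unique-centers hypothesis does its real work. Suppose $\alpha \in \B$ with multiplicity $c$ and put $e = \dig{c}$. Since $\alpha$ is a concatenation of more than one $l$-block it is not an $l$-block, so by unique centers $\alpha$ lies in no collection $\Bh_d$. Consequently, at each level $d < e$ the (even) multiplicity $c/2^d$ of $\alpha$ in $\B_d$ is removed neither by $\Bl_d = \modTwo{\B_d}$ nor by $\Bh_d$, and therefore descends, halved, to $\B_{d+1}$; at level $e$ the multiplicity $c/2^e$ is odd, so $\alpha \in \Bl_e$. But $\alpha_1 \in \Bl_g$, the two elements are distinct ($\alpha$ has strictly more $l$-blocks than $\alpha_1$), and they share a center --- contradicting the unique-centers property of $\B$. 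Hence $\alpha \notin \B$, all defining conditions of a type I elementary folding are met, and $\B'$ is such a folding; in particular Claim~\ref{clm:rightFolding} now applies.

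Finally I would verify that $\B'$ has unique centers, reading off its decomposition from Claim~\ref{clm:rightFolding}. The $\Bh$-side is immediate: $\Bh'_d = \Bh_d$ for $d < g$, $\Bh'_g = \Bh_g - 2\times\beta \subseteq \Bh_g$, and $\Bh'_d = \emptyset$ for $d > g$, so every element of every $\Bh'_d$ is a sub-element of an $\Bh$-collection of $\B$ and hence an $l$-block. For the $\Bl$-side, Claim~\ref{clm:rightFolding} shows that passing from $\B$ to $\B'$ only deletes $\alpha_1, \ldots, \alpha_{r-g}$ from the various $\Bl$-collections and inserts the single new element $\alpha$ into $\Bl'_g$. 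Deletions preserve pairwise-distinctness of centers, and the lone insertion is safe precisely because $\alpha$ carries the same center as the deleted $\alpha_1$, whose center was (by unique centers of $\B$) distinct from that of every other $\Bl$-element; thus $\alpha$'s center differs from that of every surviving element. This yields unique centers for $\B'$. The step I expect to be the main obstacle is the middle one: correctly identifying the center of $\alpha$ with that of $\alpha_1$ and arguing that a composite palindrome occurring in $\B$ must resurface in some $\Bl_e$ --- it is exactly the hypothesis ``$\Bh$-collections contain only $l$-blocks'' that prevents $\alpha$ from being silently absorbed into an $\Bh$-layer and makes this surfacing argument valid.
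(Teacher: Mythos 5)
Your proof is correct, and its overall skeleton matches the paper's: establish $\alpha \notin \B$ by contradiction using the two unique-centers properties, then read the unique centers of $\B'$ off the decomposition computed in Claim~\ref{clm:rightFolding}. Where you genuinely diverge is the mechanism for the central step. The paper localizes $\alpha$ in one stroke: since $\tp[\alpha] = \tp[\beta]$ and $\beta \in \Bh_g$, Observation~\ref{obs:t} forces any occurrence of $\alpha$ in $\B$ to lie in $2^g \times (\Bl_g + \Bh_g)$; membership in $\Bh_g$ is then excluded because $\alpha$ is not an $l$-block, and membership in $\Bl_g$ because $\alpha$ shares its center with $\alpha_1$. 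You instead ignore top values entirely and run a parity descent: because unique centers bars the composite palindrome $\alpha$ from every $\Bh_d$, its multiplicity $c$ halves through consecutive levels of the decomposition until the odd value $\frac{c}{2^{e}}$ surfaces at $e = \dig{c}$, giving $\alpha \in \Bl_{e}$; the same-center clash with $\alpha_1 \in \Bl_g$ then applies even though $e$ need not equal $g$, since the unique-centers condition is global across levels. Both arguments are sound. The paper's is shorter; yours proves a slightly more robust ``surfacing'' fact --- any composite palindrome occurring in $\B$ must appear in some $\Bl_e$ --- which does not depend on the equality $\tp[\alpha] = \tp[\beta]$ and makes vivid exactly why the hypothesis that $\Bh$-collections contain only blocks is needed. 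Your remaining steps (that $\beta$ is a block, that $\alpha$ is an $l$-BFB palindrome via Claim~1 with center equal to that of $\alpha_1$, and the deletion-plus-one-insertion check that $\B'$ inherits unique centers) are a correct and careful spelling-out of what the paper's proof leaves implicit.
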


\begin{proof}
To prove the folding is elementary, we need to show that $\alpha \notin \B$.
Note that $\beta \in \Bh_g$, $\alpha_1 \in \Bl_g$, $\tp = \tp[\beta]$, and the center of $\alpha$ is the the center of $\alpha_1$.
 Assume by contradiction that $\alpha \in \B$. Since $\tp = \tp[\beta]$, Observation~\ref{obs:t} implies that $\alpha \in \Bl_g + \Bh_g$. Since $\alpha$ is not an $l$-block, $\alpha \notin \Bh_g$. Since $\alpha_1$ and $\alpha$ have the same center, and since $\alpha_1 \in \Bl_g$ and $\B$ has unique centers, it follows that $\alpha \notin \Bl_g$, leading to a contradiction.

 The fact that $\B'$ has unique centers follows from the contents of collections in the series $\Bls'$ and $\Bhs'$, as given in Claim~\ref{clm:rightFolding}.
\end{proof}

\begin{claim}
\label{clm:rightFoldingSeq}
Let $\B$ be an $l$-BFB palindrome collection with unique centers.
Then, for $i = -s_r - \min(s_{r-1}, 0)$, it is possible to produce a series of collections $\B^0 = \B, \B^1, \B^2, \ldots, \B^i$, each collection $\B^j$ obtained by applying an elementary right-foldings over the preceding collection $\B^{j-1}$. In addition, for every $0 \leq j \leq -s_r$,
\begin{itemize}
	\item $|\B^j| = |\B| - 2^{r}j$,
	\item $\sera{s}{j}_{r-1} = \sigs_{r-1}$,
	\item $\sig^j_{r} = s_r + j$,
\end{itemize}
\noindent
and for each $-s_r \leq j \leq -s_r - \min(\sig_{r-1}, 0)$,

\begin{itemize}
	\item $r^j \leq r-1$,
	\item $|\B^j| = |\B| + 2^{r-1}(s_r - j)$,
	\item $\sera{s}{j}_{r-2} = \sigs_{r-2}$,
	\item $\sig^j_{r-1} = \sig_{r-1} + j + s_r$.
\end{itemize}
\end{claim}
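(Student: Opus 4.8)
The plan is to build the sequence greedily: set $\B^0 = \B$ and repeatedly apply Procedure RIGHT-FOLD, proving by induction on $j$ that each application is legal while tracking its exact effect on the size, on $r(\cdot)$, and on the signature. The argument splits into two phases at the index $j = -s_r$, matching the two index ranges in the statement. The induction engine is the same at every step $\B^{j} = \mathrm{RIGHT\text{-}FOLD}(\B^{j-1})$: I invoke Claim~\ref{clm:rightFoldUniqueCenters} to see that $\B^{j}$ is an \emph{elementary} folding of $\B^{j-1}$ that again has unique centers (so the hypothesis propagates and the claims requiring elementariness stay applicable); Claim~\ref{clm:rightFoldingAllowed} to certify that a right-folding exists precisely while the current top-level signature entry is negative; Claim~\ref{clm:rightFolding}(5) for the size decrement; and Claim~\ref{clm:rightFoldingSig} for $\sig^{j}_{\rho}=\sig^{j-1}_{\rho}+1$, $\sera{s}{j}_{\rho-1}=\sera{s}{j-1}_{\rho-1}$, and the behaviour of $r(\cdot)$, where $\rho = r(\B^{j-1})$ is the current active level.

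Phase one handles $1 \le j \le -s_r$ with $\rho = r$. The induction hypothesis is that $r(\B^{j-1}) = r$ and $\sig^{j-1}_r = s_r + (j-1)$; for $j-1 < -s_r$ this is negative, which both lets Claim~\ref{clm:rightFoldingAllowed} produce the fold and, via the clause ``$\sig'_r < 0 \Rightarrow r' = r$'' of Claim~\ref{clm:rightFoldingSig}, keeps the active level at $r$. Chaining the per-step facts and using transitivity of the signature equalities yields the three bullets of the first range: $|\B^j| = |\B| - 2^r j$, $\sera{s}{j}_{r-1} = \sigs_{r-1}$, and $\sig^j_r = s_r + j$.

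The transition at $j = -s_r$ is where I expect the real work. Here $\sig^{-s_r}_r = 0$, so Claim~\ref{clm:rightFoldingSig} no longer forces $r(\B^{-s_r}) = r$, and I must prove the active level actually drops, i.e. $r(\B^{-s_r}) \le r-1$. The plan is to feed the two facts $\sig^{-s_r}_r = 0$ and $\sera{s}{-s_r}_{r-1} = \sigs_{r-1}$ (hence $\sig^{-s_r}_{r-1} = s_{r-1}$) into Claim~\ref{clm:r} applied to $\B^{-s_r}$: were $r(\B^{-s_r})$ still equal to $r$, Claim~\ref{clm:r} would read $0 = -\tfrac{1}{2}\bigl(|\B^{-s_r}_{r-1}| + |\Bl^{-s_r}_{r-1}|\bigr) + \max(s_{r-1},0)$ with $\B^{-s_r}_{r-1} \ne \emptyset$, which is already impossible in the regime $s_{r-1}\le 0$ (the regime $\min(s_{r-1},0)=s_{r-1}$ in which phase two is actually demanded). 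Making this fully airtight in general, and simultaneously ruling out that the level overshoots strictly below $r-1$ when more folds are needed, is what I take to be the main obstacle; I expect it to require the structural description of the final phase-one fold from Claim~\ref{clm:rightFolding}(1)–(3), tracing how removing the convexed sub-collection $\A$ across levels $g,\ldots,r-1$ empties $\Bl^{-s_r}_{r-1}$ and $\Bh^{-s_r}_{r-1}$.

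Finally, phase two handles $-s_r < j \le -s_r - s_{r-1}$ (nonempty only when $s_{r-1} < 0$) with active level $\rho = r-1$. Having established $r(\B^{-s_r}) = r-1$ and $\sig^{-s_r}_{r-1} = s_{r-1} < 0$, the identical engine applies verbatim with $r$ replaced by $r-1$: Claim~\ref{clm:rightFoldingAllowed} produces a fold while $\sig_{r-1} < 0$, each fold reduces the size by $2^{r-1}$ and increments $\sig_{r-1}$, and $r^j \le r-1$ is maintained by the same ``$\sig'_{r-1} < 0 \Rightarrow r' = r-1$'' clause, until $\sig^j_{r-1} = s_{r-1} + (j + s_r)$ reaches $0$ at $j = -s_r - s_{r-1}$. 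A short induction then delivers the four bullets of the second range, and the agreement of the two size formulas at the shared index $j=-s_r$ (namely $|\B| + 2^r s_r = |\B| + 2^{r-1}(s_r - (-s_r))$) serves as a consistency check on the bookkeeping.
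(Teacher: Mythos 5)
Your plan is the paper's own proof: build the sequence by greedy iteration of RIGHT\mbox{-}FOLD, split into the same two phases at $j = -s_r$, and drive every step with the same engine (Claim~\ref{clm:rightFoldUniqueCenters} for elementariness and propagation of unique centers, Claim~\ref{clm:rightFoldingAllowed} for existence of the next fold, Claim~\ref{clm:rightFolding}(5) for the size decrement, Claim~\ref{clm:rightFoldingSig} for the signature update). Your phase-one and phase-two bookkeeping coincides with the paper's; the only step you leave open is the transition at $x = -s_r$, so let me address its two halves.

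The half you call the main obstacle --- ruling out an overshoot below level $r-1$ --- needs none of the structural tracing of Claim~\ref{clm:rightFolding}(1)--(3) that you anticipate: it is exactly the second statement of Claim~\ref{clm:r}, which says $s_d = 0$ for every $d$ strictly above $r(\cdot)$. Phase one preserves the prefix, so $s^{x}_{r-1} = s_{r-1} < 0 \neq 0$, hence $r(\B^{x}) \geq r-1$. Your contradiction argument for the other inequality is sound and is essentially the paper's direct computation: since $\B^{x}_{r} = \emptyset$ we have $\Bl^{x}_{r} = \emptyset$, so $0 = s^{x}_{r} = -|\Bl^{x}_{r-1}| - \frac{|\Bh^{x}_{r-1}|}{2} + \max(s^{x}_{r-1}, 0)$, and with $\max(s_{r-1},0) = 0$ this forces $\Bl^{x}_{r-1} = \Bh^{x}_{r-1} = \emptyset$, hence $\B^{x}_{r-1} = 2 \times \B^{x}_{r} + \Bl^{x}_{r-1} + \Bh^{x}_{r-1} = \emptyset$. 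Together, $r(\B^{x}) = r-1$, and your phase two runs verbatim. As for making the transition airtight ``in general'': your restriction to the regime $s_{r-1} \leq 0$ is not a defect but a necessity, because the bullet $r^{j} \leq r-1$ at the shared index $j = -s_r$ can genuinely fail when $s_{r-1} > 0$. For instance, take blocks with $\tp[\beta_1] < \tp[\beta_2] < \tp[\beta_3] \leq \tp[\beta_4]$ and $\B = \{2 \times \beta_1, 2 \times \beta_2, \beta_3, 2 \times \beta_4\}$; then $\Bl_0 = \beta_3$, $\Bh_0 = 2 \times \beta_4$, $\Bl_1 = \{\beta_1, \beta_2\}$, $r = 2$, $\sigs = 1, 1, -1, 0, \ldots$, and any admissible right-folding yields $\B^{1} = \{2 \times \beta_1, \beta_4\beta_2\beta_3\beta_2\beta_4\}$ (or the symmetric collection with $\beta_1, \beta_2$ exchanged), whose decomposition still has $r(\B^{1}) = 2$. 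So that bullet is not provable in that regime; the paper's proof glosses over this case (``the second part of the claim follows immediately''), harmlessly, since the only downstream use of that bullet (case 2 of the proof of Claim~\ref{clm:FoldAux}) is for $j$ strictly greater than $-s_r$, which presupposes $s_{r-1} < 0$. In short: prove the second bullet set in the regime $\min(s_{r-1},0) < 0$ via the Claim~\ref{clm:r} one-liner above (noting, as the paper's opening case does, that $s_r = 0$ forces $s_{r-1} \geq 0$, so indices $j > -s_r$ exist only in that regime), and your argument is complete and matches the paper's.
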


\begin{proof}

From Claim~\ref{clm:r}, $s_r \leq 0$. Note that $s_r = -|\Bl_{r-1}| - \frac{|\Bh_{r-1}|}{2} + \max(s_{r-1}, 0) \leq \max(s_{r-1}, 0)$.
When $s_r = 0$, $\max(s_{r-1}, 0) \geq 0$, and so $\min(s_{r-1}, 0) = 0$ and in particular $-s_r - \min(s_{r-1}, 0) = 0$ and the claim holds trivially.
Otherwise, $s_r < 0$, thus $-s_r - \min(s_{r-1}, 0) > 0$, and we continue to assert the correctness of the claim.

In the remaining case, $s_r < 0$, and from Claims~\ref{clm:rightFoldingAllowed},~\ref{clm:rightFoldUniqueCenters}, and~\ref{clm:rightFoldingSig} there is an elementary right-folding $\B^1$ of $\B^0 = \B$ with unique centers, such that
$\sera{s}{1}_{r-1} = \sigs_{r-1}$ and $\sig^1_{r} = \sig_r + 1$. Note that $r^1 \leq r$, where $s^1_r < 0$ implies that $r_1 = r$.
Similarly, it is possible to apply a series of a total amount of $x = -s_r$ right-foldings $\B = \B^0, \B^1, \ldots, \B^x$, where for every $j < x$ we have that $r_j = r$ and $r_x \leq r$, and for every $j \leq x$ we have that $\sera{s}{j}_{r-1} = \sigs_{r-1}$ and $\sig^j_{r} = \sig_r + j$. Since each such a right-folding decreases the size of the collection by $2^r$ elements, $|\B^j| = |\B| - 2^r j$, hence the first part of the claim.

After performing $x = -s_r$ right-foldings, we get the collection $\B^x$ for which $r_x \leq r$, $\sera{s}{j}_{r-1} = \sigs_{r-1}$, $\sig^j_{r} = \sig_r + x = 0$, and $|\B^x| = |\B| - 2^{r+1}x = |\B| + 2^{r+1}s_r$. If $- \min(\sig_{r-1}, 0) = 0$, then the second part of the claim follows immediately. Else, $- \min(\sig_{r-1}, 0) = - \min(\sig^x_{r-1}, 0) > 0$, thus $\sig^x_{r-1} = \sig_{r-1} < 0$, and $\max(s^x_{r-1}, 0) = 0$. Since $0 = s^x_r = -|\Bl^x_{r-1}| - \frac{|\Bh_{r^x-1}|}{2} + \max(s^x_{r-1}, 0) = -|\Bl^x_{r-1}| - \frac{|\Bh_{r^x-1}|}{2}$, it follows that $\Bl^x_{r-1} = \Bh^x_{r-1} = \emptyset$, and therefore $r^x \leq r-1$. On the other hand, from Claim~\ref{clm:r} and the fact that $s^x_{r-1} \neq 0$ we get that $r^x \geq r-1$, and thus $r^x = r-1$. As above, it is possible to apply additional consecutive $y = -s^x_{r-1} = -s_{r-1} = -\min(s_{r-1}, 0)$ right-foldings, where each such folding maintains the signature values at positions $0$ to $r-2$, increases by 1 the signature value at position $r-1$ with respect to the preceding collection in the series, and decreases the collection size by $2^{r-1}$. Hence, for $-s_r \leq j \leq -s_r - \min(\sig_{r-1}, 0)$, we have that $r^j \leq r-1$, $|\B^j| = |\B^x| - 2^{r-1}(j-x) = |\B| + 2^r s_r - 2^{r-1}(j + s_r) = |\B| + 2^{r-1}(s_r - j)$, $\sera{s}{j}_{r-2} = \sigs_{r-2}$, and $\sig^j_{r-1} = \sig^x_{r-1} + j - x = \sig_{r-1} + j + s_r$, as required.
\end{proof}

\subsubsection{Correctness of the FOLD Procedure}
\label{sec:foldProcedure}

Throughout this section, $\B$ and $n$ correspond to an $l$-block collection and an integer given as an input to the FOLD procedure. When $n \neq |\B|$, denote $d' = \dig{|\B| - n}$.

\begin{claim}
\label{clm:required}
If there is a folding $\B'$ of $\B$ of size $n \neq |\B|$, then there is an integer $0 \leq d \leq d'$ such that $\sert{s}_{d-1} = \sigs_{d-1}$, $s'_d \geq s_d+1$, and $n \geq 2^d\max(s_{d}+1, 0) + \Delta_d$. In addition, if $d < d'$, then $s'_d \geq s_d+2$.
\end{claim}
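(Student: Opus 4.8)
The plan is to take $d$ to be the first position at which the signatures of $\B$ and $\B'$ disagree, and then to read off all four assertions from Claim~\ref{clm:main} (the size formula) together with a single parity computation. Since $|\B'| = n \neq |\B|$, the collections $\B$ and $\B'$ are distinct, so Claim~\ref{clm:foldIncreaseSignature} applies and gives $\sigs < \sert{s}$. Let $d \geq 0$ be the least index with $s_d \neq s'_d$; by the definition of the lexicographic order this means $\sert{s}_{d-1} = \sigs_{d-1}$ (the first assertion) and $s_d < s'_d$. Each signature entry is an integer --- recall $|\Bh_{d-1}|$ is even, so $\frac{|\Bh_{d-1}|}{2}$ is an integer --- hence $s'_d \geq s_d + 1$, which is the second assertion.

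Next I would obtain the size inequality. Because $\sigs_{d-1} = \sert{s}_{d-1}$ and $\Delta_d$ depends only on $s_0, \ldots, s_{d-1}$, we have $\Delta_d = \Delta'_d$. Applying the inequality part of Claim~\ref{clm:main} to $\B'$ at index $d$, and using $s'_d \geq s_d + 1$ together with the monotonicity of $\max(\cdot,0)$,
\[
n = |\B'| \geq 2^d\max(s'_d,0) + \Delta'_d \geq 2^d\max(s_d+1,0) + \Delta_d,
\]
which is the third assertion.

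To bound $d$ against $d'$, I would use the equality part of Claim~\ref{clm:main} for both collections: $|\B| = 2^d(|\B_d| + |\Bl_d| - s_d) + \Delta_d$ and $n = |\B'| = 2^d(|\B'_d| + |\Bl'_d| - s'_d) + \Delta'_d$. Subtracting these (the $\Delta_d$ and $\Delta'_d$ terms cancel) yields $|\B| - n = 2^d m$, where $m = (|\B_d| + |\Bl_d| - s_d) - (|\B'_d| + |\Bl'_d| - s'_d)$. Since $|\B| \neq n$ we have $m \neq 0$, so $d' = \dig{|\B|-n} = d + \dig{m} \geq d$, giving $0 \leq d \leq d'$.

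The hard part is the final refinement, and it hinges on a parity identity. Unwinding one step of the recursion inside the proof of Claim~\ref{clm:main}, namely $|\B_d| + |\Bl_d| - s_d = 2(|\B_{d+1}| + |\Bl_{d+1}| - s_{d+1}) + \abs(s_d)$, shows that $|\B_d| + |\Bl_d| - s_d \equiv s_d \pmod 2$, and the same identity holds for $\B'$; consequently $m \equiv s_d - s'_d \pmod 2$. Now $d < d'$ is equivalent to $\dig{m} > 0$, i.e.\ to $m$ being even, which forces $s_d \equiv s'_d \pmod 2$. Combined with $s'_d \geq s_d + 1$ this upgrades to $s'_d \geq s_d + 2$, the last assertion. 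Everything except isolating this parity relation is a direct reading-off of Claims~\ref{clm:foldIncreaseSignature} and~\ref{clm:main}, so that congruence is where I expect the only genuine work to lie.
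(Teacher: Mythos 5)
Your proof is correct and follows essentially the same route as the paper's: both take $d$ to be the first index at which the signatures disagree (justified by Claim~\ref{clm:foldIncreaseSignature}), read off the inequality $n \geq 2^d\max(s_d+1,0)+\Delta_d$ and the divisibility $2^d \mid |\B|-n$ from Claim~\ref{clm:main}, and finish with the parity argument that $d < d'$ forces $s'_d - s_d$ to be even. The only cosmetic difference is in how the parity fact is justified: the paper notes directly that $|\B_d|+|\Bl_d|$ and $|\B'_d|+|\Bl'_d|$ are even (since $\Bl_d = \modTwo{\B_d}$ makes all counts in $\B_d - \Bl_d$ even), whereas you extract the equivalent congruence $|\B_d|+|\Bl_d|-s_d \equiv s_d \pmod 2$ by unwinding one step of the recursion inside the proof of Claim~\ref{clm:main}.
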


\begin{proof}
Assume there is a folding $\B'$ of $\B$ of size $n \neq |\B|$, and let $d \geq 0$ be the integer such that $\sigs_{d-1} = \sert{s}_{d-1}$ and $s'_d \geq s_d+1$ (whose existence is implied by Claim~\ref{clm:foldIncreaseSignature}). Since $\sigs_{d-1} = \sert{s}_{d-1}$, it follows that $\Delta_d = \Delta'_d$.
Thus, $n = |\B'| \stackrel{\text{Clm.\ref{clm:main}}}{=} 2^d (|\B'_d| + |\Bl'_d| -s'_d) + \Delta'_d \stackrel{\text{Clm.\ref{clm:main}}}{\geq} 2^d \max(s'_d, 0) + \Delta'_d \geq 2^d \max(s_d+1, 0) + \Delta_d$. In addition, $|\B| = 2^d\left(|\B_{d}| + |\Bl_d| - s_d\right) + \Delta_d$, therefore $|\B| -  n = 2^d\left(|\B_{d}| + |\Bl_d| - s_d - |\B'_{d}| - |\Bl'_d| + s'_d\right)$. Since all parameters in the right-hand side of the latter equation are integers, $|\B| - n$ divides by $2^d$, and in particular $d \leq d'$. Furthermore, if $d < d'$, then $\frac{|\B| - n}{2^d}$ is even, and therefore  $|\B_{d}| + |\Bl_d| - s_d - |\B'_{d}| - |\Bl'_d| + s'_d$ is also even. As $|\B_{d}| + |\Bl_d|$ is even, as well as $|\B'_{d}| + |\Bl'_d|$, it follows that $s'_d - s_d$ has to be even. Since $s'_d > s_d$, it follows that $s'_d \geq s_d + 2$.
\end{proof}

\begin{claim}
\label{clm:FoldAux}
If there is a folding of $\B$ of size $n$ then FOLD$(\B, n)$ returns such a folding $\B'$, and otherwise FOLD$(\B, n)$ returns ``FAILED''. In addition, if $n \neq |\B|$ and FOLD$(\B, n)$ has returned $\B'$, then for the maximum integer $0 \leq d \leq d'$ for which $n \geq 2^d\max(s_{d}+1, 0) + \Delta_d$ (whose existence is guaranteed by Claim~\ref{clm:required}), $\sert{s}_{d-1} = \sigs_{d-1}$ and $r' \leq d+1$, and if $r' = d+1$ then $s'_d = s_d + 1$ in case $d = d'$ and $s'_d = s_d + 2$ in case $d < d'$.
\end{claim}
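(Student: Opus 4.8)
The plan is to branch on the sign of $|\B|-n$. Throughout I write $d$ for the maximal index in $[0,d']$ with $n\ge 2^d\max(s_d+1,0)+\Delta_d$, and $\B^{*}=\B+2^{d}\varepsilon$ for the collection built in line~6. The two easy pieces come first: if $n\ge|\B|$ then FOLD returns $\B+(n-|\B|)\varepsilon$, which is a folding of $\B$ of size exactly $n$, and since such a folding always exists here the FAILED branch is unreachable. When $n<|\B|$, the dichotomy is supplied by Claim~\ref{clm:required}: contrapositively, if no admissible $d$ exists then no folding of size $n$ exists and FAILED is correct, so it remains to show that when an admissible $d$ does exist the constructive branch succeeds.

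For the constructive branch I first record, from Claim~\ref{clm:addEpsilons} (a type~II elementary folding with $m=2^{d}$), that $\B^{*}$ agrees with $\B$ on $s_0,\dots,s_{d-1}$, that $s^{*}_{d}=s_d+1$, $r^{*}=d+1$, and $|\B^{*}|=|\B|+2^{d}$; and that $\B^{*}$ has unique centers, since $\B$ is an $l$-block collection and $\varepsilon$ is the unique element whose center is $\varepsilon$. By Claims~\ref{clm:rightFoldUniqueCenters} and~\ref{clm:rightFoldingSeq}, iterating RIGHT-FOLD on $\B^{*}$ yields a legal sequence of elementary right-foldings of strictly decreasing size, each step applicable by Claim~\ref{clm:rightFoldingAllowed}. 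The pivotal computation is that the smallest size attainable along this sequence is exactly the threshold $2^{d}\max(s_d+1,0)+\Delta_d$: substituting $|\B^{*}|=-2^{d+1}s^{*}_{d+1}+\Delta^{*}_{d+1}$ with $\Delta^{*}_{d+1}=\Delta_d+2^{d}|s_d+1|$ (Conclusion~\ref{conc:r}) into the end-of-sequence size $|\B^{*}|+2^{d}\bigl(2s^{*}_{d+1}+\min(s^{*}_{d},0)\bigr)$ from Claim~\ref{clm:rightFoldingSeq} collapses it to $\Delta_d+2^{d}\max(s^{*}_{d},0)=\Delta_d+2^{d}\max(s_d+1,0)$. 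Since this is $\le n$ by the choice of $d$ while $|\B^{*}|>n$, the loop halts with size $\le n$ without exhausting the available (all elementary) right-foldings, so it never gets stuck. A short modular argument---every intermediate size is $\equiv n\pmod{2^{d}}$, dropping by $2^{d+1}$ while $r=d+1$ and by $2^{d}$ afterwards---pins the halting size to be either $n$ or $n-2^{d}$, so the closing padding adds $0$ or $2^{d}$ copies of $\varepsilon$ and $|\B'|=n$.

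For the structural conclusions I track the signature through the three stages. The agreement $\sert{s}_{d-1}=\sigs_{d-1}$ holds because the $\varepsilon$-additions fix the entries $s_0,\dots,s_{d-1}$ (Claims~\ref{clm:addEpsilons},~\ref{clm:addition}) and so does each right-folding (Claim~\ref{clm:rightFoldingSeq}). For $r'$ and $s'_d$ I case-split on the halting point, reading off Claim~\ref{clm:rightFoldingSeq}: while the size still drops by $2^{d+1}$ one has $r=d+1$ and the position-$d$ signature entry equal to $s_d+1$, whereas once it drops by $2^{d}$ one has $r\le d$. Hence if the halt is in the latter regime then $r'\le d$ and the $s'_d$ clause is vacuous; if it is in the former regime at exactly $n$ (which the modular argument forces to mean $d=d'$) then $\B'$ is the halting collection, $r'\le d+1$, and $s'_d=s_d+1$ whenever $r'=d+1$; and if it is in the former regime at $n-2^{d}$ (forcing $d<d'$) then $\B'$ is obtained by padding the halting collection with $2^{d}$ copies of $\varepsilon$, so a final use of Claim~\ref{clm:addEpsilons} gives $r'=d+1$ and $s'_d=(s_d+1)+1=s_d+2$. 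The instance $n>|\B|$ is handled separately by applying Claim~\ref{clm:addEpsilons} with $m=n-|\B|$ and checking via Claim~\ref{clm:sL} that the maximal admissible $d$ equals $d'$.

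The step I expect to be delicate is the padding sub-case. Invoking Claim~\ref{clm:addEpsilons} there requires $\varepsilon\notin$ the halting collection: if the $2^{d}$ copies from line~6 survived, padding would raise the count of $\varepsilon$ to $2^{d+1}$ and force $r'=d+2$, contradicting the claim. I would discharge this by showing the first right-folding already absorbs $\varepsilon$ completely: after line~6 we have $\mint(\Bl^{*}_{d})=0$ with $\varepsilon$ the unique minimal element of $\Bl^{*}_{d}$ (every $l$-block has strictly positive top), so $\varepsilon$ enters the convexed collection $\A$ of the first RIGHT-FOLD and, sitting at level $r^{*}-1=d$, is removed in all $2^{d}$ copies at once. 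Making this selection argument precise---alongside the bookkeeping that every loop right-folding stays elementary and that the attainable minimum matches the threshold exactly---is where the real care lies.
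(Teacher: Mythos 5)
Your proposal is correct and follows essentially the same route as the paper's own proof: the same reduction via Claims~\ref{clm:required}, \ref{clm:addEpsilons}, \ref{clm:rightFoldUniqueCenters}, and~\ref{clm:rightFoldingSeq}, the same computation (through Conclusion~\ref{conc:r}) that the right-folding sequence bottoms out at size exactly $2^d\max(s_d+1,0)+\Delta_d$, the same case split on where the loop halts with the parity argument distinguishing $d=d'$ (halt at $n$) from $d<d'$ (halt at $n-2^d$, then pad), and the same $\varepsilon$-absorption argument justifying that the final padding is a type~II elementary folding. The only differences are cosmetic: your modular-arithmetic phrasing of the halting-size analysis in place of the paper's explicit solving for the iteration count $j$, and your explicit check (via Claim~\ref{clm:sL}) that the maximal admissible $d$ equals $d'$ when $n>|\B|$, a point the paper leaves implicit.
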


\begin{proof}
When there is no folding of $\B$ of size $n$, then in particular $n \neq |\B|$, and the procedure does not halt at line~1. In addition, from Claim~\ref{clm:required}, the condition in line~4 does not met, and the procedure returns ``FAILED'' in line~10 as required.

Else, there is a folding of $\B$ of size $n$, and we show that the procedure finds such a folding sustaining the stated requirements. When $|\B| <= n$, the FOLD procedure halts by returning $\B + (n - |\B|) \varepsilon$ in line~1, which is in particular a folding of $\B$ of size $n$ as required. In addition, if $|\B| < n$, we have from Claim~\ref{clm:addEpsilons} that $\sert{s}_{d-1} = \sigs_{d-1}$, $s'_d = s_d + 1$, and $r' = d+1$, thus the remaining requirements in the claim hold.
Otherwise, $n < |\B|$, and from Claim~\ref{clm:required} the condition in line~4 holds, therefore in line~5 of the FOLD procedure, the value of the parameter $d$ is selected to be the maximum integer in the range $0 \leq d \leq d'$ such that $n \geq 2^d\max(s_{d}+1, 0) + \Delta_d$.

Let $\B^0 = \B+2^d \times \varepsilon$ be the value of the collection $\B'$ after executing line~5. Thus $|\B^0| = |\B| + 2^d$, and from Claim~\ref{clm:addEpsilons}, we have that
\begin{enumerate}
	\item $\sera{s}{0}_{d-1} = \sigs_{d-1}$,
	\item $\sig^0_{d} = \sig_{d} + 1$,
	\item $r^0 = d+1$.
\end{enumerate}

From the proof of Claim~\ref{clm:addEpsilons} and the fact that $\B$ is an $l$-block collection it can be seen that $\B^0$ has unique centers.
From Conclusion~\ref{conc:r}, $\sig^0_{d+1} = \frac{\Delta^0_{d+1} - |\B^0|}{2^{d+1}} = \frac{\Delta_{d} + 2^d\abs(s_d+1) - |\B| - 2^d}{2^{d+1}}$. From Claim~\ref{clm:rightFoldingSeq}, the collection $\B^0$ can undergo a series of $i$ right-foldings producing the sequence $\B^0, \B^1, \ldots, \B^i$, where $i = -s^0_{d+1} - \min(s^0_{d}, 0)$.
 The size of the collection $\B^i$ according to Claim~\ref{clm:rightFoldingSeq} is $|\B^i| = |\B^0| + 2^d(s^0_{d+1} - i) = (|\B| + 2^d) + 2^d(2s^0_{d+1} + \min(s^0_{d}, 0)) = |\B| + 2^d + 2^d\left(\frac{\Delta_{d} + 2^d\abs(s_d+1) - |\B| - 2^d}{2^{d}} + \min(s_{d}+1, 0)\right) = \Delta_d + 2^d(\abs(s_d+1) + \min(s_d+1,0)) = \Delta_d + 2^d\max(s_d+1, 0)$. From the condition in line~4, $n \geq 2^d\max(s_d+1, 0) + \Delta_d = |\B^i|$, and in particular there exists some $0 \leq j \leq i$ such that $|\B^j| \leq n$. The sequence of right-foldings computed by the loop lines~6-7 is a prefix of such a right-folding sequence (i.e. after $x$ iterations of the loop, $\B' = \B^x$), where the loop terminates after $j$ iterations for a minimal integer $j$ such that $|\B^j| \leq n$.
 After executing line~8, $\B'$ is a folding of $\B$ of size $|\B'| = |\B^j| + (n - |\B^j|) = n$, and so the output $\B'$ of the procedure is a folding of $\B$ of size $n$, as required.

To complete the proof, we need to show that when $n < |\B|$, $\sert{s}_{d-1} = \sigs_{d-1}$ and $r' \leq d+1$, and if $r' = d+1$ then $s'_d = s_d + 1$ in case $d = d'$ and $s'_d = s_d + 2$ in case $d < d'$. To do so, we consider two cases for the number of loop iterations $j$ conducted by the procedure. Note that $j > 0$, since in the first iteration we have that $|\B^0| = |\B| + 2^d > n$.

\noindent
\textbf{1. $0 < j \leq -s^0_{d+1}$.}
In this case, Claim~\ref{clm:rightFoldingSeq} and the loop termination condition imply that $n \geq |\B^j| = |\B^0| - 2^{d+1}j = |\B| + 2^d - 2^{d+1}j = |\B| + 2^d(1 - 2j)$, and that $n < |\B^{j-1}| = |\B| + 2^d(1 - 2(j-1))$, therefore, $2j - 3 < \frac{|\B|-n}{2^d} \leq 2j - 1$. Note that when $d = d'$, $\frac{|\B|-n}{2^d}$ is odd, hence $\frac{|\B|-n}{2^d} = 2j - 1$, whereas when $d < d'$, $\frac{|\B|-n}{2^d}$ is even, and  $\frac{|\B|-n}{2^d} = 2j - 2$.

In the case where $d = d'$, $|\B^j| = |\B| + 2^d(1 - 2j) = |\B| - 2^d(\frac{|\B|-n}{2^d}) = n$, thus no $\varepsilon$ elements are added to the collection in line~8 of the procedure and the returned collection is $\B' = \B^j$. From Claim~\ref{clm:rightFoldingSeq}, $r' = r^0 = d+1$, and $\sert{s}_{d} = \sera{s}{0}_{d}$, i.e. $\sert{s}_{d-1} = \sera{s}{0}_{d-1} = \sigs_{d-1}$ and $s'_d = s^0_d = s_d+1$, and the claim follows.

In the case where $d < d'$, $|\B^j| = |\B| + 2^d(1 - 2j) = |\B| - 2^d(2j - 2 + 1) = |\B| - 2^d(\frac{|\B|-n}{2^d} + 1) = n - 2^d$, thus after line~8 of the procedure the returned collection is $\B' = \B^j + 2^d \times \varepsilon$. It may be asserted that $\varepsilon$ is the unique minimal element in $\B^0_d$ (as all other elements are $l$-blocks with higher top values), and thus this element participates in the right-folding that transforms $\B^0$ to $\B^1$. Therefore, for each $1 \leq j' \leq j$, $\varepsilon \notin \B^{j'}$, and in particular $\B'$ is a type II elementary folding of $\B^j$. From Claim~\ref{clm:addEpsilons}, $r' = d+1$, $\sert{s}_{d-1} = \sera{s}{0}_{d-1} = \sigs_{d-1}$, and $s'_d = s^0_{d}+1 = s_d + 2$, hence the claim follows.

\vspace{10pt}
\noindent
\textbf{2. $-s^0_{d+1} < j \leq -s^0_{d+1} - \min(s_d^0, 0)$.}
In this case, Claim~\ref{clm:rightFoldingSeq} and the loop termination condition imply that $n \geq |\B^j| = |\B^0| + 2^{d}(s^0_{d+1} - j) = (|\B| + 2^d) + 2^d(s^0_{d+1} - j) = |\B| + 2^d(s^0_{d+1} - j + 1)$, and $n < |\B^{j-1}| = |\B| + 2^d(s^0_{d+1} - j + 2)$. Therefore, $-s^0_{d+1} + j - 2 < \frac{|\B| - n}{2^d} \leq -s^0_{d+1} + j - 1$. Since $\frac{|\B| - n}{2^d}$ is an integer, it follows that $\frac{|\B| - n}{2^d} = -s^0_{d+1} + j - 1$, therefore $|\B^j| = n$, and consequentially after line~8 of the procedure the returned collection is $\B' = \B^j$.
From Claim~\ref{clm:rightFoldingSeq}, $r' \leq d$, and $\sert{s}_{d-1} = \sera{s}{0}_{d-1} = \sigs_{d-1}$, and the claim follows.
\end{proof}

Finally, we now prove the correctness of the FOLD procedure, as formulated by Claim~\ref{clm:FOLDisOptimal}.

\begin{claim}
\label{clm:FOLDisOptimal}
Let $\B$ be an $l$-block collection and let $n \geq 0$ be an integer. FOLD$(\B, n)$ returns a folding $\B'$ of $\B$ of size $n$ if such a folding exists, and otherwise it returns ``FAILED''. In addition, for every $l$-block collection $\B^*$ such that $|\B| = |\B^*|$ and $\sigs(\B) \leq \sigs(\B^*)$, if there is a folding $\B'^*$ of $\B^*$ of size $n$, then
FOLD$(\B, n)$ returns a collection $\B'$ such that
$\sigs(\B') \leq \sigs(\B'^*)$.

\end{claim}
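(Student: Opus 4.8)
The first assertion---that FOLD$(\B,n)$ returns a folding of $\B$ of size $n$ exactly when such a folding exists---is precisely the first statement of Claim~\ref{clm:FoldAux}, so all the work lies in the optimality clause. Fix an $l$-block collection $\B^*$ with $|\B^*| = |\B|$ and $\sigs(\B) \le \sigs(\B^*)$, and fix a folding of $\B^*$ of size $n$, which we denote $\B^{\prime *}$; write $\sert{s} = \sigs(\B')$ for the FOLD output, denote by $s_d$, $s^*_d$, $s'_d$ the entries of $\sigs(\B)$, $\sigs(\B^*)$, $\sigs(\B')$, and by $u_d$ the entries of $\sigs(\B^{\prime *})$. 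The aim is $\sert{s} \le \sigs(\B^{\prime *})$. Two facts will be used repeatedly. First, Claim~\ref{clm:foldIncreaseSignature} applied to the folding $\B^{\prime *}$ of $\B^*$ gives $\sigs(\B) \le \sigs(\B^*) \le \sigs(\B^{\prime *})$. Second, by Conclusion~\ref{conc:r} a collection's size is determined by its signature, so $\sigs(\B^{\prime *}) = \sigs(\B)$ would force $n = |\B|$; hence when $n \ne |\B|$ there is a least index $f$ with $u_f > s_f$ and $u_i = s_i$ for $i < f$. The case $n = |\B|$ is immediate, since FOLD returns $\B' = \B$ at line~1 and $\sigs(\B) \le \sigs(\B^{\prime *})$ already holds.

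Assume now $n \ne |\B|$ and put $d' = \dig{|\B| - n}$. The plan is first to guarantee that FOLD does not fail. Claim~\ref{clm:required}, applied to the pair $(\B^*, \B^{\prime *})$, produces an index $0 \le d^* \le d'$ with $n \ge 2^{d^*}\max(s^*_{d^*}+1,0) + \Delta^*_{d^*}$ and with the signatures of $\B^*$ and $\B^{\prime *}$ agreeing below $d^*$. Combining this with $\sigs(\B) \le \sigs(\B^*)$ and the lower bound $|\B| = |\B^*| \ge 2^i\max(s^*_i,0) + \Delta^*_i$ from Claim~\ref{clm:main}, one transfers the feasibility condition to $\B$: letting $e$ be the least index where $\sigs(\B)$ and $\sigs(\B^*)$ differ, the choice $i = e$ satisfies $n \ge 2^e\max(s_e+1,0) + \Delta_e$ when $e \le d^*$ (using $\Delta_e = \Delta^*_e$ and $s_e + 1 \le s^*_e$), while $i = d^*$ works verbatim when $e > d^*$. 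In either case the test in line~4 succeeds, FOLD picks the maximal valid index $d \le d'$ and returns a folding $\B'$ of size $n$, and Claim~\ref{clm:FoldAux} describes it: $\sert{s}$ agrees with $\sigs(\B)$ through position $d-1$, $r' = r(\B') \le d+1$, and if $r' = d+1$ then $s'_d = s_d+1$ for $d = d'$ and $s'_d = s_d+2$ for $d < d'$.

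The core is comparing $\sert{s}$ with $\sigs(\B^{\prime *})$, and the pivotal step is the inequality $f \le d$. I would argue it by contradiction: if $f > d$ then $u_i = s_i$ for all $i \le d$, so the two $\Delta$-prefixes coincide up to $f$; Claim~\ref{clm:main} applied to $\B^{\prime *}$ at position $f$ yields $n \ge 2^f\max(s_f+1,0) + \Delta_f$, and the same prefix agreement makes $|\B|-n$ divisible by $2^f$, so $f \le d'$, contradicting the maximality of $d$. Granting $f \le d$: if $f < d$ then $\sert{s}$ and $\sigs(\B^{\prime *})$ agree below $f$ while $\sert{s}_f = s_f < u_f$, so $\sert{s} < \sigs(\B^{\prime *})$ outright. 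The subtle case is $f = d$, where agreement holds through $d-1$ and the verdict rests on position $d$: here $u_d \ge s_d + 1$, and this must dominate the value $s'_d$ from Claim~\ref{clm:FoldAux}. For $d = d'$ one has $s'_d = s_d + 1 \le u_d$; for $d < d'$ a parity argument is needed, since $|\B|-n$ is then divisible by $2^{d+1}$ and the decomposition sizes $|\B_d| + |\Bl_d|$ together with the corresponding sizes for $\B^{\prime *}$ are even, forcing $u_d \equiv s_d \pmod 2$, so $u_d > s_d$ upgrades to $u_d \ge s_d + 2 = s'_d$. Hence $\sert{s}_d \le u_d$, and on equality the signatures agree through $r'-1 \le d$, whereupon Claim~\ref{clm:eqDeltar} (both collections having size $n$) gives $\sert{s} \le \sigs(\B^{\prime *})$.

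The hard part will be exactly this position-$d$ comparison in the case $f = d$: it forces one to read off the precise shape of the FOLD output from Claim~\ref{clm:FoldAux}---in particular to separate the sub-cases $r' = d+1$ and $r' \le d$ (the latter giving $\sert{s}_d \le 0$ and calling for a direct sign check of $\sert{s}_d$ against $u_d$)---and to run the parity and divisibility bookkeeping that distinguishes $d = d'$ from $d < d'$. By comparison, the reduction to signatures, the non-failure of FOLD, and the bound $f \le d$ are routine once Claims~\ref{clm:main},~\ref{clm:required},~\ref{clm:FoldAux}, and~\ref{clm:eqDeltar} are invoked.
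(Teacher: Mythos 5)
Your core argument is correct, and it is genuinely cleaner than the paper's. The paper keeps $\B^*$ in play throughout, splitting on whether $\sigs(\B)$ and $\sigs(\B^*)$ first differ before or at the index $d^*$ produced by Claim~\ref{clm:required}; you instead compare $\sigs(\B)$ directly with $\sigs(\B'^*)$ at their least differing index $f$ (writing $u_i$ for the entries of $\sigs(\B'^*)$, as you do), using $\B^*$ only to obtain $\sigs(\B) \leq \sigs(\B'^*)$, strict when $n \neq |\B|$ by your Conclusion~\ref{conc:r} observation. Your computation at $f$ is sound: prefix agreement below $f$ equates the two $\Delta$-values at $f$; Claim~\ref{clm:main} applied to $\B'^*$, together with $u_f \geq s_f + 1$, gives $n \geq 2^f \max(s_f+1,0) + \Delta_f$; and the equality form of Claim~\ref{clm:main}, applied to both $\B$ and $\B'^*$, shows that $2^f$ divides $|\B| - n$, so $f \leq d' = \dig{|\B|-n}$. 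Maximality of $d$ then forces $f \leq d$. The case $f < d$, the parity upgrade to $u_d \geq s_d + 2$ when $d < d'$, and the use of Claim~\ref{clm:eqDeltar} on ties all close exactly as in the paper.

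Two steps as written are defective, however. First, the feasibility ``transfer'': for $e < d^*$ you need $n \geq 2^e\max(s^*_e, 0) + \Delta^*_e$, but what you cite from Claim~\ref{clm:main} is a lower bound on $|\B^*|$, not on $n$; since $n < |\B^*|$ in the relevant case, that bound yields nothing. The paper closes this gap through the monotonicity of $\ser{\Delta}(\B^*)$: $n \geq \Delta^*_{d^*} \geq \Delta^*_{e+1} = \Delta^*_e + 2^e\,\abs(s^*_e) \geq \Delta^*_e + 2^e \max(s^*_e, 0)$. In your architecture, though, the cleanest repair is to delete the step entirely: your $f$-computation already exhibits an index passing the test in line~4 of FOLD, so FOLD cannot return ``FAILED'', and $d \geq f$ is simply the maximal passing index; no appeal to the decomposition of $\B^*$ is needed anywhere.

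Second, the sub-case $r' \leq d$ of the case $f = d$ cannot be settled by ``a direct sign check of $\sert{s}_d$ against $u_d$.'' From what you have derived one knows only $s'_d \leq 0$ (Claim~\ref{clm:r}) and $u_d \geq s_d + 1$, and since $s_d$ may be negative these two facts do not compare $s'_d$ with $u_d$ at all. What actually closes this sub-case is that when $r' \leq d$ the prefixes of $\sert{s}$ and $\sigs(\B'^*)$ automatically agree through position $r' - 1 \leq d - 1$ (both agree with $\sigs$ there), so Claim~\ref{clm:eqDeltar} applies verbatim and gives $\sert{s} \leq \sigs(\B'^*)$ with no comparison at position $d$ whatsoever --- the same move you already make in the tie case $s'_d = u_d$.
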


\begin{proof}
Claim~\ref{clm:FoldAux} proves the first statement in Claim~\ref{clm:FOLDisOptimal}, thus it remains to show that for every $l$-block collection $\B^*$ such that $|\B| = |\B^*|$ and $\sigs \leq \sera{s}{*}$, if there is a folding $\B'^*$ of $\B^*$ of size $n$, then
FOLD$(\B, n)$ returns a collection $\B'$ such that
$\sert{s} \leq \sert{s}^*$.

First, note that when $n = |\B| = |\B^*|$, then in particular $\B^*$ and $\B$ are minimum signature $n$-size foldings of $\B^*$ and $\B$, respectively (Claim~\ref{clm:foldIncreaseSignature}), and thus $\B \leq^s \B'^*$ for every $n$-size folding $\B'^*$ of $\B^*$. Since in this case FOLD$(\B, n)$ returns $\B$, the claim follows.
Otherwise, $n \neq |\B|$, and we first show that FOLD$(\B, n)$ returns a folding $\B'$ of $\B$ of size $n$ if such a folding exists, and otherwise it returns ``FAILED''.

In the reminder of this proof we assume that $n \neq |\B| = |\B^*|$, and note that $d' = \dig{|\B| - n} = \dig{|\B^*| - n}$.
Since $\sigs \leq \sera{s}{*}$, either $\sigs = \sera{s}{*}$, or $\sigs < \sera{s}{*}$ and there is an integer $i$ such that $\sigs_{i-1} = \sera{s}{*}_{i-1}$ and $s_{i} < s^*_{i}$.

We first show that if is a folding $\B'^*$ of $\B^*$ of size $n$, FOLD$(\B, n)$ returns a folding $\B'$ of $\B$ of size $n$ satisfying $\sert{s} \leq \sert{s}^*$.
In this case, Claim~\ref{clm:required} states that there is an integer $0 \leq d^* \leq d'$ such that $\sert{s}^*_{d^*-1} = \sera{s}{*}_{d^*-1}$, $s'^*_{d^*} \geq s^*_{d^*}+1$, and $n \geq 2^{d^*}\max(s^*_{d^*}+1, 0) + \Delta^*_{d^*}$.
Consider two cases: (1) $\sigs_{d^*-1} = \sera{s}{*}_{d^*-1}$, which occurs when $\sigs = \sera{s}{*}$ or when $\sigs < \sera{s}{*}$ and $i \geq d^*$, and (2) $\sigs_{d^*-1} < \sera{s}{*}_{d^*-1}$, which occurs when $\sigs < \sera{s}{*}$ and $i < d^*$.

\vspace{10pt}
\noindent
(1) $\sigs_{d^*-1} = \sera{s}{*}_{d^*-1}$. In this case, $n \geq 2^{d^*}\max(s^*_{d^*}+1, 0) + \Delta^*_{d^*} \geq 2^{d^*}\max(s_{d^*}+1, 0) + \Delta_{d^*}$. Thus,
 when executing FOLD$(\B, n)$, the condition in line~4 is met and the algorithm does not return ``FAILED''. From Claim~\ref{clm:FoldAux}, FOLD$(\B, n)$ returns an $n$-size folding $\B'$ of $\B$, such that for the maximum integer $0 \leq d \leq d'$ for which $n \geq 2^d\max(s_{d}+1, 0) + \Delta_d$ we have that $\sert{s}_{d-1} = \sigs_{d-1}$ and $r' \leq d+1$, and if $r' = d+1$ then $s'_d = s_d + 1$ in case $d = d'$ and $s'_d = s_d + 2$ in case $d < d'$. By selection, $d^* \leq d \leq d'$. If $d^* < d$, then $\sert{s}_{d^*} = \sigs_{d^*} = \sera{s}{*}_{d^*} < \sert{s}^{*}_{d^*}$, and in particular $\sert{s} \leq \sert{s}^*$ and the claim follows. If $d^* = d$, then $\sert{s}_{d-1} = \sigs_{d-1} = \sera{s}{*}_{d-1} = \sert{s}^{*}_{d-1}$. If $r' < d+1$ then $\sert{s} \leq \sert{s}^*$ from Claim~\ref{clm:eqDeltar}, and the claim follows. If $r' = d+1$, then $s'_d = s_d + 1$ in case $d = d'$ and $s'_d = s_d + 2$ in case $d < d'$. In addition, from Claim~\ref{clm:required}, $s'^*_d \geq s_d + 1$ in case $d = d'$ and $s'^*_d \geq s_d + 2$ in case $d < d'$, thus in both cases $s'_d \leq s'^*_d$. If $s'_d < s'^*_d$ then $\sert{s}_d < \sert{s}^*_d$, and in particular $\sert{s} < \sert{s}^*$ and the claim follows. If $s'_d = s'^*_d$ then $\sert{s}_d = \sert{s}^*_d$, and from Claim~\ref{clm:eqDeltar} $\sert{s} \leq \sert{s}^*$ and the claim follows.

\vspace{10pt}
\noindent
(2) $\sigs_{d^*-1} < \sera{s}{*}_{d^*-1}$. In this case, for $i < d^*$ we have that $\sigs_{i-1} = \sera{s}{*}_{i-1}$ and $s_i < s^*_i$. Now, $n \geq 2^{d^*}\max(s^*_{d^*}+1, 0) + \Delta^*_{d^*} \geq \Delta^*_{d^*} \geq \Delta^*_{i+1} = 2^i \abs(s^*_i) + \Delta^*_{i} \geq 2^i \max(s^*_i, 0) + \Delta^*_{i}$. Similarly as before, Claims~\ref{clm:eqDeltar} and~\ref{clm:FoldAux} can be applied to show that $\sert{s} \leq \sert{s}^*$ .

\end{proof}

\subsection{Correctness of Algorithm SEARCH-BFB}
Assuming there is a BFB string $\alpha^*$ admitting the algorithm's input count vector $\vec{n}$, the BFB palindrome $\beta^* = \alpha^* \rev{\alpha}^*$ admits the count vector $2\vec{n}$. Let $\B^{*k+1} = \emptyset, \B^{*k}, \B^{*k-1}, \ldots, \B^{*1}$ be the block collection series corresponding to the layers of $\beta^*$ as described in the main manuscript. Since $\B^{k+1} = \B^{*k+1} = \emptyset$ ($\B^{k+1}$ is initialized in line~1 of Algorithm SEARCH-BFB), we have that $\ser{s}(B^{k+1}) = \sigs(\B^{*k+1})$. Assume that for some $0 \leq l \leq k$ we have that $\ser{s}(B^{l+1}) \leq \sigs(\B^{*l+1})$. Recall that the collection $\B^{*l}$ is obtained by the wrapping of some folding $\B'^{*}$ of size $n_l$ of $\B^{*l+1}$. Since the wrapping operation does not change element multiplicities and top values, it follows that $\sigs(\B^{*l}) = \sigs(\B'^{*})$.
From Claim~\ref{clm:FOLDisOptimal}, the application of the FOLD procedure in the $l$-th iteration of the loop in lines~2-4 of the algorithm returns a folding $\B'$ of $\B^{l+1}$ of size $n_l$, where $\sigs(\B^l) = \sigs(\B') \leq \sigs(\B'^*) = \sigs(\B^{*l})$. Inductively, the algorithm does not return ``FAILED'' in each one of the loop iterations, and after the last iteration $\ser{s}(\B^1) \leq \sigs(\B^{*1})$. From the same arguments as above and since $\B^{*1}$ can be folded into the single palindrome $\beta^*$, it follows that the application of FOLD in line~4 of the algorithm does not fail, and returns a collection containing a single palindrome $\beta  = \alpha\rev{\alpha}$, where $\alpha$ is a BFB string admitting $\vec{n}(\alpha) = \vec{n}$.

For the other direction of the proof, assume that the BFB algorithm returned the string $\alpha$. In this case, the series of collections $\B^{k+1}, \B^{k}, \ldots, \B^{1}$ satisfies that each collection $\B^l$ is an $l$-block collection of size $n_l$ and is obtained by folding and wrapping of the preceding collection in the series $\B^{l+1}$. The final collection $\B^1$ is folded into a single BFB palindrome $\beta = \alpha \rev{\alpha}$ admitting the count vector $2\vec{n}$, and therefore $\alpha$ is a BFB string admitting $\vec{n}$.

\subsection{Time Complexity of Algorithm SEARCH-BFB}

\subsubsection{Object Representation}
The algorithm handles two types of objects: BFB palindromes, and BFB palindrome collections.
BFB palindromes are further divided into three subtypes, who are implemented separately: empty palindromes, $l$-blocks, and composite $l$-BFB palindromes of the form $\beta \gamma \beta$ (see Claim 1 in the main paper). Each BFB palindrome object contains a filed maintaining the top value of the represented palindrome, allowing $O(1)$ time queries of this value. An empty palindrome is represented by an object containing only the top value field (which always holds the value 0), and generating new such objects take $O(1)$ time. An $l$-block is implemented as an object containing, in addition to the top-value field, a pointer to its internal $(l+1)$-BFB palindrome. Given a pointer to the internal $(l+1)$-BFB palindrome, generating new $l$-block objects take $O(1)$ time by copying the pointer, and setting the top value field to the top value of the pointed $(l+1)$-BFB palindrome. A composite $l$-BFB palindrome $\beta \gamma \beta$ is implemented by specifying a pointer to the $l$-BFB palindrome $\beta$, and a list of $l$-BFB palindromes $\alpha_1, \alpha_2, \ldots, \alpha_p$ representing the convexed $l$-collection $\A$ such that $\gamma = \gamma_{\A}$. Composite palindromes can be generated in a time proportional to the order of their internal convexed $l$-collection (where the top value field is set to be the top value of $\beta$).

A collection $\B = \{m_1 \beta_1, m_2 \times \beta_2, \ldots, m_q \times \beta_q\}$ is implemented by an object containing a field which maintains the size $|\B|$ of the collection, and two doubly linked lists maintaining the prefixes $\Bls_{r-1}$ and $\Bhs_{r-1}$ of the series $\Bls$ and $\Bhs$ in the decomposition of $\B$, where $r = r(\B)$. Note that for $i \geq r$, $\Bl_i = \Bh_i = \emptyset$. Each element $\Bl_i$ or $\Bh_i$ is implemented as
a linked list of $l$-BFB palindromes ordered with nondecreasing top values (it is possible that an $\Bh_i$ list contains multiple repeats of identical elements). Thus, computing $\mint(\Bl_i)$ or $\mint(\Bh_i)$ and extracting minimal elements from such lists is done in $O(1)$ time. Generating an empty collection is done in $O(1)$ time (where the two lists $\Bl_{r-1}$ and $\Bh_{r-1}$ are empty), and duplicating or wrapping a collection $\B$ take at most $O(|\B|)$ time (note that $r-1 \leq \log |\B|$, since an element $\beta \in \B_{r-1}$ corresponds to $2^{r-1}$ repeats of $\beta$ in $\B$, and that the total number of elements in all lists $\Bl_i$ and $\Bh_i$ is at most $|\B|$).

\subsubsection{Type II Elementary Folding}
\label{sec:typeIITime}
Using the object representation described above, for a collection $\B$ such that $\varepsilon \notin \B$ and an integer $m > 0$, it is possible to compute a type II elementary folding $\B' = \B + m \varepsilon$ in $O(|\B| + m)$ time as follows.
First, the number $d = \dig{m}$ is computed.
Note that $d \leq \log m$ ($d$ can be defined as the index of the least significant bit different from 0 in the binary representation of $m$), and may be computed in $O(\log m)$ time.
$\B'$ is initialized by copying $\B$, i.e. generating the list $\Bls'_{r-1}$ and $\Bhs'_{r-1}$ (in $O(|\B|)$ time). Then, if $d \geq r$, empty collections $\Bl'_i$ and $\Bh'_i$ are added to the prefixes of $\Bls'$ and $\Bhs'$ for $r \leq i \leq d$, and a single $\varepsilon$ element is added to $\Bl'_d$. Else, $d < r$, and a single $\varepsilon$ element is added as the first element in $\Bl'_d$ (being of minimum top value among all elements in the list), and elements from collections $\Bl'_i$ and $\Bh'_{i}$ for $i > d$ are moved into $\Bh'_d$. This latter modification is performed by first merging each $\Bl'_i$ and $\Bh'_i$ lists for $i > d$ to a single list ordered with nondecreasing top values (in a linear time with respect to the number of elements in the two lists), and then, with increasing index $i$, each merged list is added to the beginning of $\Bh'_d$, where $2^{i-d}$ repeats of each element in the merged list of $\Bl'_i$ and $\Bh'_i$ are added to $\Bh'_d$. In both cases where $d \geq r$ or $d < r$, it is possible to assert the modification updates properly the representation of $\B'$ to represent the collection $\B + 2^d \varepsilon$, that $r(\B') = d+1$, and that total time required for the modification is at most $O(|\B| + d) = O(|\B| + \log m)$. Finally, additional $\frac{m}{2^d}-1$ repeats of $\varepsilon$ are added to $\Bh'_d$ in $O(m)$ time, where now it is possible to assert that $\B'$ properly represents the collection $\B + m \varepsilon$, and that the total computation time is $O(|\B| + m)$.

\subsubsection{Right-folding}
\label{sec:rightFoldTime}
In order to right-fold a collection $\B$, the algorithm first gets pointers to the elements $\Bl_{r-1}$ and $\Bh_{r-1}$, in $O(r)$ time for $r = r(\B)$. Then, it starts traversing these lists backward for $i = r-1$ down to $g$, inclusive, where $g$ is the first encountered index such that $\Bh_g \neq \emptyset$. For each such $i$, the algorithm extracts the first (minimal) element in the list $\Bl_i$, and accumulates these elements in a list $\A$. Finally, the algorithm extracts two copies of the minimal element $\beta$ in $\Bh_g$, and uses $\beta$ and $\A$ to construct the BFB palindrome $\alpha = \beta \gamma_{\A} \beta$. Then, $\alpha$ is inserted into $\Bl_g$. As this procedure takes $O(r)$ time and decreases the size of the collection by $2^r$, any valid consecutive application of right-foldings over $\B$ takes at most $O(|\B|)$ time.

\subsubsection{The FOLD Procedure}
\label{sec:FOLDTime}
Consider the application of the FOLD procedure on a collection $\B = \{m_1 \beta_1, m_2 \times \beta_2, \ldots, m_q \times \beta_q\}$ and an integer $n \geq 0$. If $n \geq |B|$, the procedure applies in line~1 a type II elementary folding in $O(|\B| + n)$ time (Section~\ref{sec:typeIITime}) and hults. Otherwise, given the series $\Bls_{r-1}$ and $\Bhs_{r-1}$, it is possible to compute $\sigs_{r}$ and $\ser{\Delta}_{r+1}$ in $O(r) = O(\log(|\B|))$ time. Note that $s_i = 0$ for $i > r$, and $\Delta_{i} = \Delta_{r+1}$ for $i > r+1$. The number $\dig{|\B| - n}$ satisfies $\dig{|\B| - n} \leq \max(\log(|\B|) + \log(n))$.
After computing $\sigs_{r}$ and $\ser{\Delta}_{r+1}$,
checking the condition in line~4, as well as computing the parameter $d$ in line~$5$, can be done in $O(\dig{|\B| - n})$ time. The two type II elementary foldings in lines~5 and~8 take $O(|\B| + n)$ time (Section~\ref{sec:typeIITime}), and the total time for right-folding applications in the loop in lines~6-7 is $O(|\B|)$ (Section~\ref{sec:rightFoldTime}). Thus, the total running time of the procedure is $O(|\B| + n)$.

\subsubsection{Overall Running Time}
Let $\vec{n} = [n_1, n_2, \ldots, n_k]$ be the input vector for the algorithm. Denote $N = \displaystyle{\sum_{1 \leq l \leq k} n_l}$, and note that $N$ is the length of the output string $\alpha$ in case the algorithm does not return ``FAILED''.
It is simple to assert that besides operations conducted within the FOLD procedure, Algorithm SEARCH-BFB performs $O(N)$ operations. For every $1 \leq l \leq k$, FOLD is called once by the BFB algorithm over the collection $\B^{l+1}$ of size $n_{l+1}$ and the integer $n_l$, and runs in $O(n_{l+1} + n_l)$ time (Section~\ref{sec:FOLDTime}). Summing the running time of FOLD for $l = k$ down to 1, its overall running time, as well as the overall running time of Algorithm SEARCH-BFB, is $O(N)$.

\section{The Decision Variant}
\label{sec:decision}

In this section, we describe a simplification of the SEARCH-BFB algorithm which solves the decision variant of the BFB count vector problem. Essentially, this algorithm applies similar steps to those of the search algorithm, yet instead of explicitly maintaining collections $\B$, the algorithm only maintains the signature $\sigs$ of $\B$. We assume that the algorithm maintains explicitly only the prefix $\sigs_r$ of $\sigs$ (for $r = r(\B)$) as a linked list, where for $i > r$ the algorithm takes the value 0 whenever it needs using the value $s_i$.

\begin{algorithm}
\normalsize
	\DontPrintSemicolon
	\footnotesize
	\SetKwInput{Algorithm}{Algorithm}
	\SetKwInput{Procedure}{Procedure}
	\SetKwInput{Input}{Input}
	\SetKwInput{Output}{Output}
	\SetKwIF{If}{ElseIf}{Else}{If}{then}{Else if}{Else}{endif}
	\SetKwFor{For}{For}{do}{endfor}
	\SetKwFor{While}{While}{do}{endw}
	\Algorithm{DECISION-BFB$(\vec{n})$}
	\BlankLine
	\Input{A count vector $\vec{n} = \left[n_1, n_2, \ldots, n_k\right]$.}
	\Output{``TRUE'' if $\vec{n}$ is a BFB count vector, and ``FAILED'' if otherwise.}
	\BlankLine
	Set $n_{k+1} \leftarrow 0$ and $\sera{s}{k+1} \leftarrow \ser{0}$.\;
  \For{$l \leftarrow k$ \textbf{down to} 1}{
  	Apply SIGNATURE-FOLD$(\sera{s}{l+1}, n_{l+1}, n_l)$. If this operation has failed, \textbf{return} ``FALSE''.\;
  	Otherwise, set $\sera{s}{l}$ to be the returned value from the call to SIGNATURE-FOLD$(\sera{s}{l+1}, n_{l+1}, n_l)$.\;
  }
 	Apply SIGNATURE-FOLD$(\sera{s}{1}, n_1, 1)$. If this operation has failed, \textbf{return} ``FALSE'', and otherwise \textbf{return} ``TRUE''.\;
	\label{alg:DECISION-BFB}
\end{algorithm}

\begin{algorithm}[H]
\normalsize
	\DontPrintSemicolon
	\footnotesize
	\SetKwInput{Algorithm}{Algorithm}
	\SetKwInput{Procedure}{Procedure}
	\SetKwInput{Input}{Input}
	\SetKwInput{Output}{Output}
	\SetKwIF{If}{ElseIf}{Else}{If}{then}{Else if}{Else}{endif}
	\SetKwFor{For}{For}{do}{endfor}
	\SetKwFor{While}{While}{do}{endw}

	\Procedure{SIGNATURE-FOLD$(\sigs, n_\B, n)$}
	\BlankLine
	\Input{The signature $\sigs$ and size $n_\B$ of an $l$-block collection $\B$ and an integer $n \geq 0$.}
	\Output{The minimum signature $\sert{s}$ of a folding $\B'$ of $\B$ such that $|\B'| = n$, or the string ``FAILD'' if there is no such $\B'$.}
	\BlankLine
	\If{$n_\B \leq n$}{
		\Return{ADD-EMPTY$(\sigs, n_\B, n - n_\B)$}.\;
	}
	\Else{
	Compute the prefix $\ser{\Delta}_{\dig{n_\B - n}}$ of $\ser{\Delta}(\B)$.\;
	\If{there exists $0 \leq d \leq \dig{n_\B - n}$ such that $n \geq 2^d\max(s_d+1, 0) + \Delta_d$}{
		Let $d$ be the maximum integer sustaining the condition above.\;
		Set $\sert{s} \leftarrow \text{ADD-EMPTY}(\sigs, n_\B, 2^d)$, and $n_{\B'} \leftarrow n_\B + 2^d$. \;
		\If{$n \geq n_{\B'} + 2^{d+1}s'_{d+1}$}{
			Set $s'_{d+1} \leftarrow s'_{d+1} + \left\lceil \frac{n_{\B'} - n}{2^{d+1}}\right\rceil$.\;
			Set $n_{\B'} \leftarrow \Delta_d + 2^d\abs(s'_d) + 2^{d+1}\abs(s'_{d+1})$.\;
			Set $\sert{s} \leftarrow \text{ADD-EMPTY}(\sert{s}, n_{\B'}, n - n_{\B'})$.\;
		}
		\Else{
			Set $s'_{d} \leftarrow s'_{d} + \frac{n_{\B'} - n}{2^{d}} + 2 s'_{d+1}$.\;
			Set $s'_{d+1} \leftarrow 0$.\;
		}
		\Return{$\sert{s}$}.\;
	}
	\lElse{\Return{``FAILED''}\;}
	}
	\label{alg:DECISION-FOLD}
\end{algorithm}

\begin{algorithm}[H]
\normalsize
	\DontPrintSemicolon
	\footnotesize
	\SetKwInput{Algorithm}{Algorithm}
	\SetKwInput{Procedure}{Procedure}
	\SetKwInput{Input}{Input}
	\SetKwInput{Output}{Output}
	\SetKwIF{If}{ElseIf}{Else}{If}{then}{Else if}{Else}{endif}
	\SetKwFor{For}{For}{do}{endfor}
	\SetKwFor{While}{While}{do}{endw}

	\Procedure{ADD-EMPTY$(\sigs, n_\B, m)$}
	\BlankLine
	\Input{The signature $\sigs$ and size $n_\B$ of an $l$-BFB palindrome collection $\B$ containing no $\varepsilon$ elements, and an integer $m \geq 0$.}
	\Output{The signature $\sert{s}$ of the folding $\B' = \B + m \varepsilon$ of $\B$.}
	\BlankLine
		\If{$n_\B = m$}{\Return{$\ser{s}$}}
		\Else{
			Let $d = \dig{n - n_\B}$, and set the prefix $\sert{s}_{d-1}$ to be the copy of the prefix $\sert{s}_{d-1}$ of $\sigs$.\;
			Set $s'_{d} \leftarrow s_d + 1$.\;
			Compute $\sert{\Delta}_{d} = \displaystyle{\sum_{0 \leq i < d} 2^i \abs(s_i)}$.\;
			Set $s'_{d+1} \leftarrow \frac{\ser{\Delta}_{d} + 2^d \abs(s'_d) - n}{2^{d+1}}$.
			\tcp*[h]{All values $s'_i$ for $i > d+1$ are implicitly set to 0.}\;
			\Return{$\sert{s}$}.\;
		}
	\label{alg:ADD-EMPTY}
\end{algorithm}

The fact that the signature modifications applied by Procedure SIGNATURE-FOLD yield identical signatures to those of the collections computed by Procedure FOLD can be asserted from Conclusion~\ref{conc:r} and Claims~\ref{clm:addEpsilons} and~\ref{clm:rightFoldingSeq}. It may also be asserted that the total number of operations in all calls to Procedure ADD-EMPTY (lines~2,~7, and~11 in Procedure SIGNATURE-FOLD), as well as the computation of $\ser{\Delta}_{\dig{n_\B - n}}$ in line~4, checking the condition in line~5, and computing $d$ in line~6, is $O(r(\B) + r(\B')) = O(\log n_\B + \log n)$. Besides these operations, Procedure SIGNATURE-FOLD applies additional $O(1)$ operations, hence its total running time is $O(\log n_\B + \log n)$. Therefore, the overall running time of Algorithm DECISION-BFB is $\displaystyle{O\left(\sum_{0 \leq l \leq k} (\log n_{l+1} + \log n_l)\right) = O\left(\sum_{0 \leq l \leq k} \log n_l\right) = O(\tilde{N})}$, where $\tilde{N}$ is the number of bits in the representation of the input vector $\vec{n}$. A more involved amortized analysis, omitted from this text, may show that the algorithm performs $O(\tilde{N})$ bit operations, hence being strictly linear with respect to its input length.

\section{The Distance Variant}

This section gives Algorithm DISTANCE-BFB for solving the distance variant of the BFB count vector problem. As a matter of fact, the presented algorithm solves the problem for every suffix $\sera{n}{l} = [n_l, n_{l+1}, \ldots, n_k]$ of the input vector $\vec{n} = [n_1, n_2, \ldots, n_k]$.

For a vector $\vec{n} = [n_1, n_2, \ldots, n_k]$ of length $k$ and an integer $m$, denote by $[m, \vec{n}]$ the $(k+1)$-length vector $[m, n_1, n_2, \ldots, n_k]$.
The algorithm is generic and may work with any vector distance measure $\delta$, provided that for
any equal-length three vectors $\vec{n}$, $\sert{n}$, and $\sert{n}'$ such that $\delta(\ser{n}, \sert{n}) \leq \delta(\ser{n}, \sert{n}')$, (1) $\delta(\sert{n}, \sert{n}) = \delta(\sert{n}', \sert{n}') = 0 \leq \delta(\ser{n}, \sert{n}) \leq \delta(\sert{n}, \sert{n}') \leq 1$, and (2) for any pair of numbers $m$ and $m'$, $\delta([m, \ser{n}], [m', \sert{n}]) \leq \delta([m, \ser{n}], [m', \sert{n}'])$.
For some precision parameter $0 \leq \eta < 1$, the algorithm finds the exact solution for the distance variant of the BFB count vector problem for every suffix of the input vector for which the solution is at most $\eta$, and returns the approximated solution 1 to suffixes for which the solution is greater than $\eta$.

Similarly to Algorithms SEARCH-BFB and DESCISION-BFB, Algorithm DISTANCE-BFB runs $k$ iterations on an input vector $\vec{n} = [n_1, n_2, \ldots, n_k]$, indexed from $k$ down to 1. At the end of iteration $l$, the algorithm computes a collection $S^l$ containing elements of the form $(\sera{n}{i} = [n^i_l, n^i_{l+1}, \ldots, n^i_k], \sera{s}{i})$, where $\sera{s}{i}$ is the minimum signature of an $l$-block collection $\B^i$ admitting the count vector $\sera{n}{i}$, and $\delta(\sera{n}{l}, \sera{n}{i}) \leq \eta$. It is guaranteed that for every BFB count vector $\sera{n}{j} = [n^j_l, n^j_{l+1}, \ldots, n^j_k]$ such that $\delta(\sera{n}{l}, \sera{n}{j}) \leq \eta$ and every $l$-block collection $\B^j$ admitting $\sera{n}{j}$, $S^l$ contains a pair $(\sera{n}{i}, \sera{s}{i})$ such that $\delta(\sera{n}{l}, \sera{n}{i}) \leq \delta(\sera{n}{l}, \sera{n}{j})$ and $\sera{s}{i} \leq \sera{s}{j}$.

\begin{algorithm}
\normalsize
	\DontPrintSemicolon
	\footnotesize
	\SetKwInput{Algorithm}{Algorithm}
	\SetKwInput{Procedure}{Procedure}
	\SetKwInput{Input}{Input}
	\SetKwInput{Output}{Output}
	\SetKwIF{If}{ElseIf}{Else}{If}{then}{Else if}{Else}{endif}
	\SetKwFor{For}{For}{do}{endfor}
	\SetKwFor{ForEach}{For each}{do}{endfor}
	\SetKwFor{While}{While}{do}{endw}
	\Algorithm{DISTANCE-BFB$(\vec{n}, \eta)$}
	\BlankLine
	\Input{A count vector $\vec{n} = [n_1, n_2, \ldots, n_k]$, and a precision parameter $0 \leq \eta < 1$.}
	\Output{For every $1 \leq l \leq k$, the algorithm reports the minimum distance $\delta_l$ of the suffix $\sera{n}{l} = [n_l, n_{l+1}, \ldots, n_k]$ of $\vec{n}$ from a BFB count vector, in case this distance is at most $\eta$.}
	\BlankLine
	\tcp {Collections of the form $S^l$ contain pairs $(\sert{n}^i = [n'_l, n'_{l+1}, \ldots, n'_k], \sera{s}{i})$, where $\sera{s}{i}$ is the minimum signature of an $l$-block collection $\B^i$ admitting the count vector $\sert{n}^i$, and $\delta(\sera{n}{l}, \sert{n}^i) \leq \eta$.}
	Set $S^{k+1}$ be a collection containing only the pair $(\vec{0}, 0)$.\;
  \For{$l \leftarrow k$ \textbf{down to} 1}{
		Set $\delta_l \leftarrow 1$.\;
		Set $S^{l} \leftarrow \emptyset$.\;
	  \ForEach{$(\sert{n}^i = [n'_{l+1}, \ldots, n'_k], \sera{s}{i}) \in S^{l+1}$}{
  		\ForEach{$n \geq 1$ such that $\delta(\sera{n}{l}, [n ,\sert{n}^i]) \leq \eta$}{
  			\If{SIGNATURE-FOLD$(\sert{n}^i, n'_{l+1}, n) = \sigs$, IS-PALINDROMIC$(\sigs)$, and for all $(\sert{n}^j, \sera{s}{j}) \in S^{l}$ such that $\delta(\sera{n}{l}, \sert{n}^j) \leq \delta(\sera{n}{l}, [n ,\sert{n}^i])$ it is true that $\sigs < \sert{n}^j$}{
	  			Add $([n, \sert{n}^i], \sigs)$ to $S^{l}$.\;
					Set $\delta_l \leftarrow \min(\delta_l, \delta(\sera{n}{l}, [n ,\sert{n}^i]))$.\;
  			}
  		}
 		}
	\textbf{Report} $\delta_l$.\;
  }
	\BlankLine
	\hrule
	\BlankLine
	\BlankLine
	\BlankLine
	\setcounter{AlgoLine}{0}
	\Procedure{IS-PALINDROMIC$(\sigs)$}
	\BlankLine
	\Input{The signature $\sigs$ of an $l$-BFB palindrome collection $\B$.}
	\Output{``TRUE'' if it is possible to concatenate all elements in $\B$ into a single $l$-BFB palindrome, and ``FALSE'' otherwise.}
	\BlankLine

	Compute the prefix $\ser{\Delta}_{r+1}$ of $\ser{\Delta}(\B)$ for $r = r(\B)$. \tcp*[h]{Note that $|\B| = \Delta_{r+1}$}\;
	\lIf{there exists $0 \leq d \leq \dig{\Delta_{r+1}-1}$ such that $1 \geq 2^d\max(s_d+1, 0) + \Delta_d$}{
		\textbf{return} ``TRUE''.\;
	}
	\lElse{\textbf{return} ``False''.}
	\label{alg:BFB_error}
\end{algorithm}

Consider the signature $\sigs$ of a collection $\B$ of size $n$. It is simple to show that $r(\B) \leq \log n + 1$, and that $-n < s_i \leq n$ for every $0 \leq i \leq r$. Therefore, $\sigs$ can be represented by $O(\log^2 n)$ bits, and so the number of different signatures of collections of size $n$ is upper bounded by $2^{O(\log^2 n)}$. In addition, under realistic assumptions, we may assume that the number of different values $n$ examined in line~6 of Algorithm DISTANCE-BFB bounded by $2^{O(\log^2 n_l)}$, since this number should approximate the count $n_l$ (for example, using the Poisson $\delta$ function described by the main manuscript, it is possible to show that for every value of $n_l$ and $\sert{n}^i$ and for $n \geq 20 n_l$, $\delta(\sera{n}{l}, [n , \sert{n}^i]) > 1-10^{-6}$, thus choosing $\eta = 1 - 10^{-6}$ guarantees that the loop in lines~6-9 is being executed less than $20n_l$ times for every $(\sert{n}^i, \sera{s}{i}) \in S^{l+1}$). Due to the condition in line~7, every possible signature $\sigs$ appears at most once in some pair in $S^l$, thus the size of $S^l$ is bounded by $2^{O(\log^2 n_l)}$. It is straightforward to observe that the total number of operations in the loop in lines~7-9 is also $2^{O(\log^2 n_l)}$, and so the total running time of the algorithm is bounded by $\displaystyle{\sum_{1 \leq l \leq k} 2^{O(\log^2 n_l)}} \leq 2^{O(\log^2 N)} = N^{O(\log N)}$.

\section{Chromosome simulation details}
Each chromosome pair was modeled as two sequences of 100,000,000 ordered bases.
Then fifty rearrangement were introduced to each chromosome independently. Each
rearrangement type was chosen randomly from deletion, inversion, and
duplication, according to a distribution. Thus, both balanced and unbalanced
rearrangements were used to simulate the chromosomes. If the chosen rearrangement was a
duplication, then it was decided whether or not the duplication would be tandem
and whether or not it would be inverted. Tandem duplications would be inserted
adjacent to the original chromosome segment, and inverted duplications would
have the new duplicated segment reversed with respect to the original segment.

Two rearrangement type regimes were used. In the first regime, referred to as
``evendup'' in the supplemental data, each rearrangement was
a duplication, inversion, or deletion with probability .5, .25, and .25
respectively. Duplications had a 50\% chance of being tandem and,
independently, a 50\% chance of being inverted. In the second regime, called
``highdup'' in the supplemental data, the
probability of duplication, inversion, and deletion were $\frac{7}{11}$,
$\frac{2}{11}$, and $\frac{1}{11}$. The probability of a duplication being
tandem or inverted was .9 and .9. This second regime was created because in the
first, fold-back inversions occur infrequently. The second regime allowed us
to examine tests for BFB when an alternative mechanism is creating many
fold-back inversions.

The size of each non-BFB rearrangement was chosen from a normal distribution
bounded at zero with mean 10,000 and a variance of 10,000,000. Rearrangements
were introduced sequentially in each chromosome. For chromsomes in which BFB
was simulated, consecutive rounds of BFB were introduced after one of the fifty
non-BFB rearrangments. The number of BFB rounds varied from two to ten. Each
BFB round consisted of a prefix of the chromosome undergoing a tandem inverted
duplication. The size of the prefix was selected from a normal distribution
with a mean of zero and a standard deviation of one tenth of the length of the
chromosome.

After each chromosome in the pair was rearranged, the copy numbers and
breakpoints were combined as one would expect from experimental evidence.

\section{Cancer cell line results}
We identified count vectors on three chromosomes from the 746 cancer cell lines that
were long and nearly consistent with BFB. The observed count vectors along with
the nearest count vector consistent with BFB are shown below.

\vspace{8pt}
\noindent
Cell line: AU565 \hspace{8em} Tissue: bone\\
Chromosome 8 between 72.5 MB and 80.0 MB\\
\texttt{Observed 4,8,14,10,8,14,9,13,7,12,9,7}\\
\texttt{Fit\hspace{2.5em} 4,8,14,10,8,14,9,13,7,\textcolor{red}{13},9,7}

\vspace{8pt}
\noindent
Cell line: PC-3 \hspace{8em} Tissue: prostate\\
Chromosome 10 between 60 MB and 82 MB\\
\texttt{Observed 6,10,14, 9,6,9,13,9,5,9,3,14}\\
\texttt{Fit\hspace{2.5em} 6,10,14,\textcolor{red}{10},6,9,13,9,5,9,3,\textcolor{red}{15}}

\vspace{8pt}
\noindent
Cell line: MG-63 \hspace{8em} Tissue: bone\\
Chromosome 8 between 112 MB and 121 MB\\
\texttt{Observed 10,6,8,14,11,14,9,8,13,9,13,9,7}\\
\texttt{Fit\hspace{2.5em} 10,6,8,14,11,\textcolor{red}{15},9,\textcolor{red}{9},13,9,13,9,7}\\

\newpage
\section{ROC curves for varying simulation parameters}
Below are the ROC curves, similar to those in Figure 4 of the main paper,
for many different simulation and test parameters.

\begin{figure}[h!]
    \centering
    \includegraphics[width=.9\columnwidth]{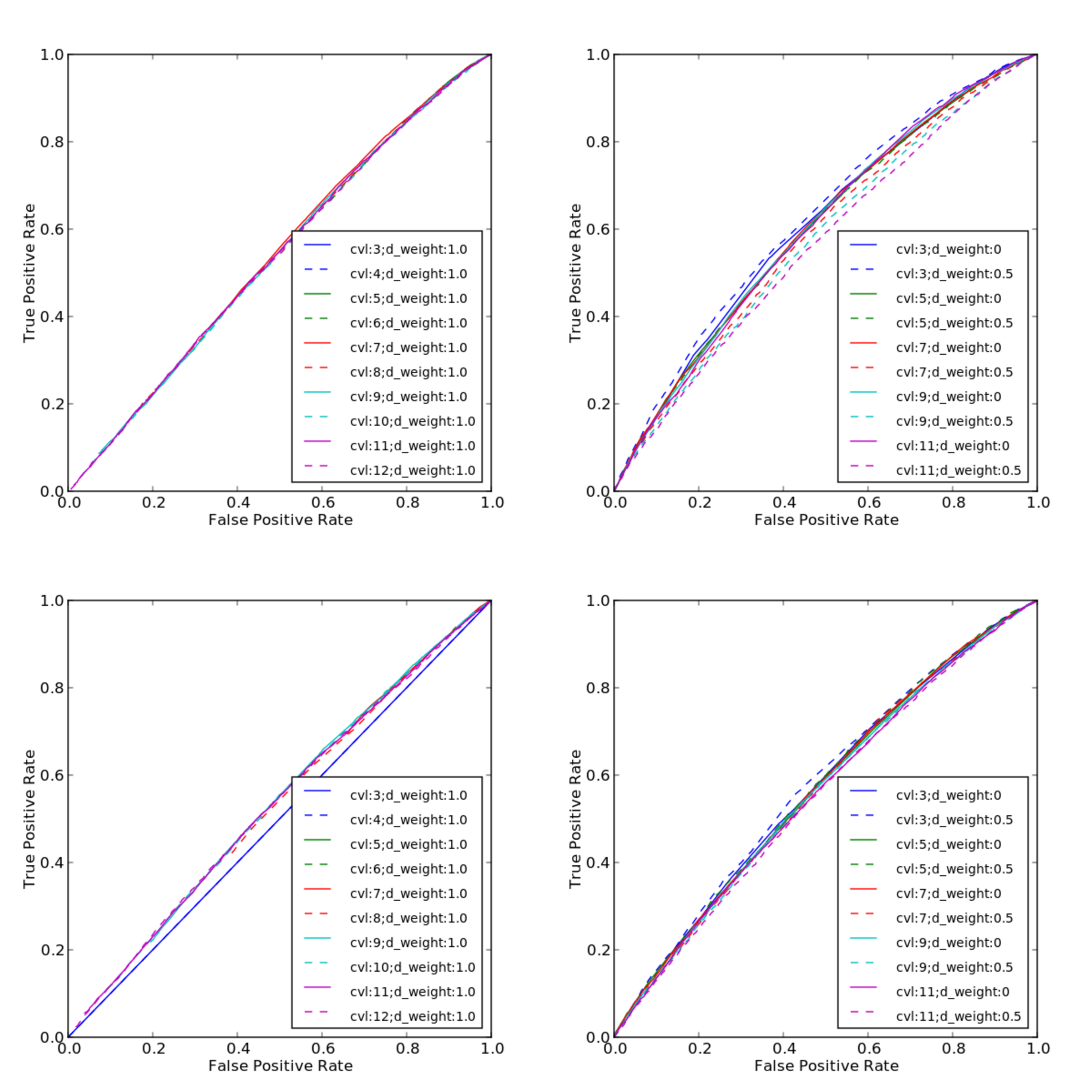}
    \caption{{\normalsize ROC curves for simulations with 2 rounds of BFB.
    Clockwise from the upper left, evendup background with no use of fold-back
    fraction, evendup background using fold-backs, highdup background using
    fold-backs, highdup background with no use of fold-back fraction.}}
\end{figure}
\afterpage{\clearpage}
\begin{figure}[h!]
    \centering
    \includegraphics[width=.9\columnwidth]{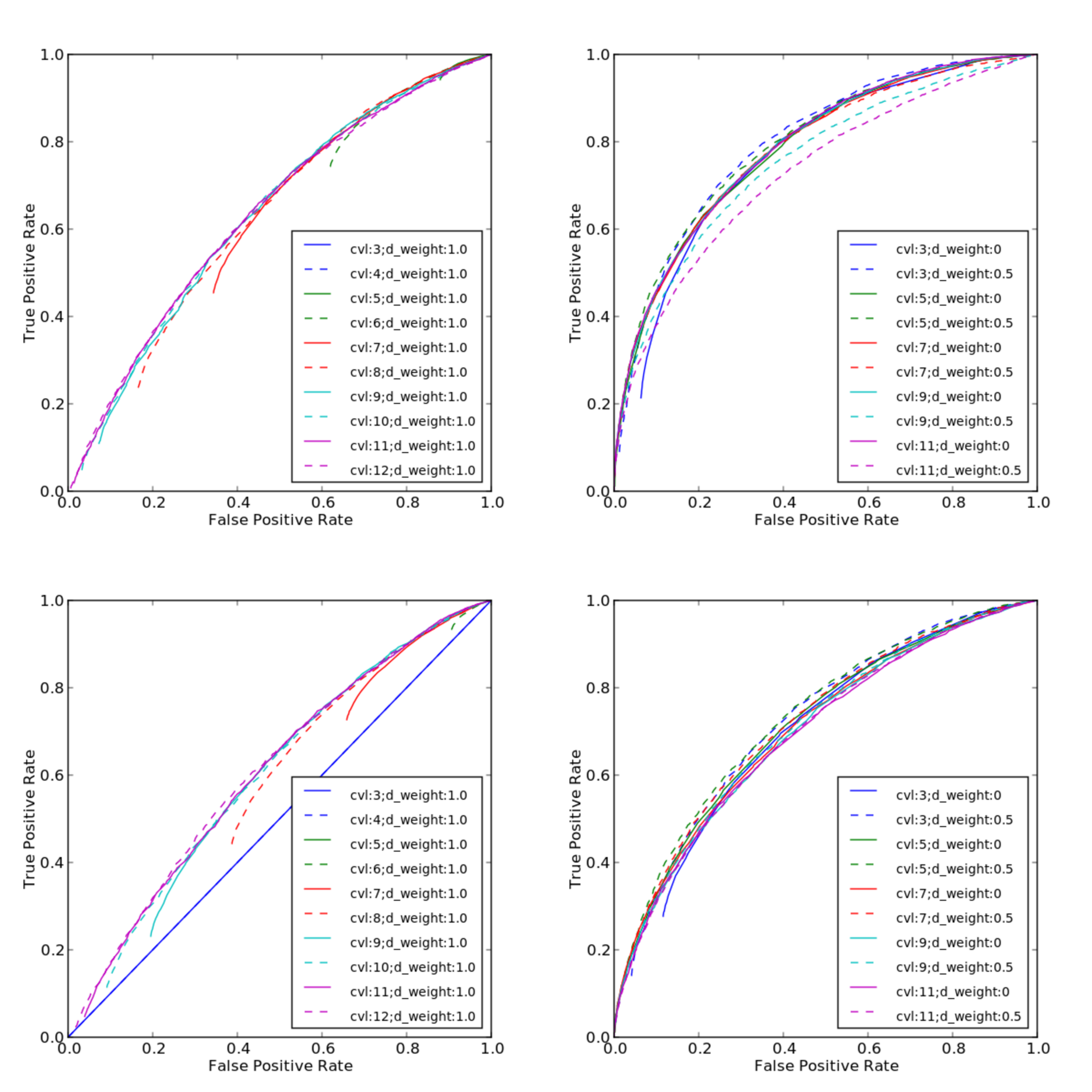}
    \caption{{\normalsize ROC curves for simulations with 4 rounds of BFB.
    Clockwise from the upper left, evendup background with no use of fold-back
    fraction, evendup background using fold-backs, highdup background using
    fold-backs, highdup background with no use of fold-back fraction.}}
\end{figure}
\afterpage{\clearpage}
\begin{figure}[h!]
    \centering
    \includegraphics[width=.9\columnwidth]{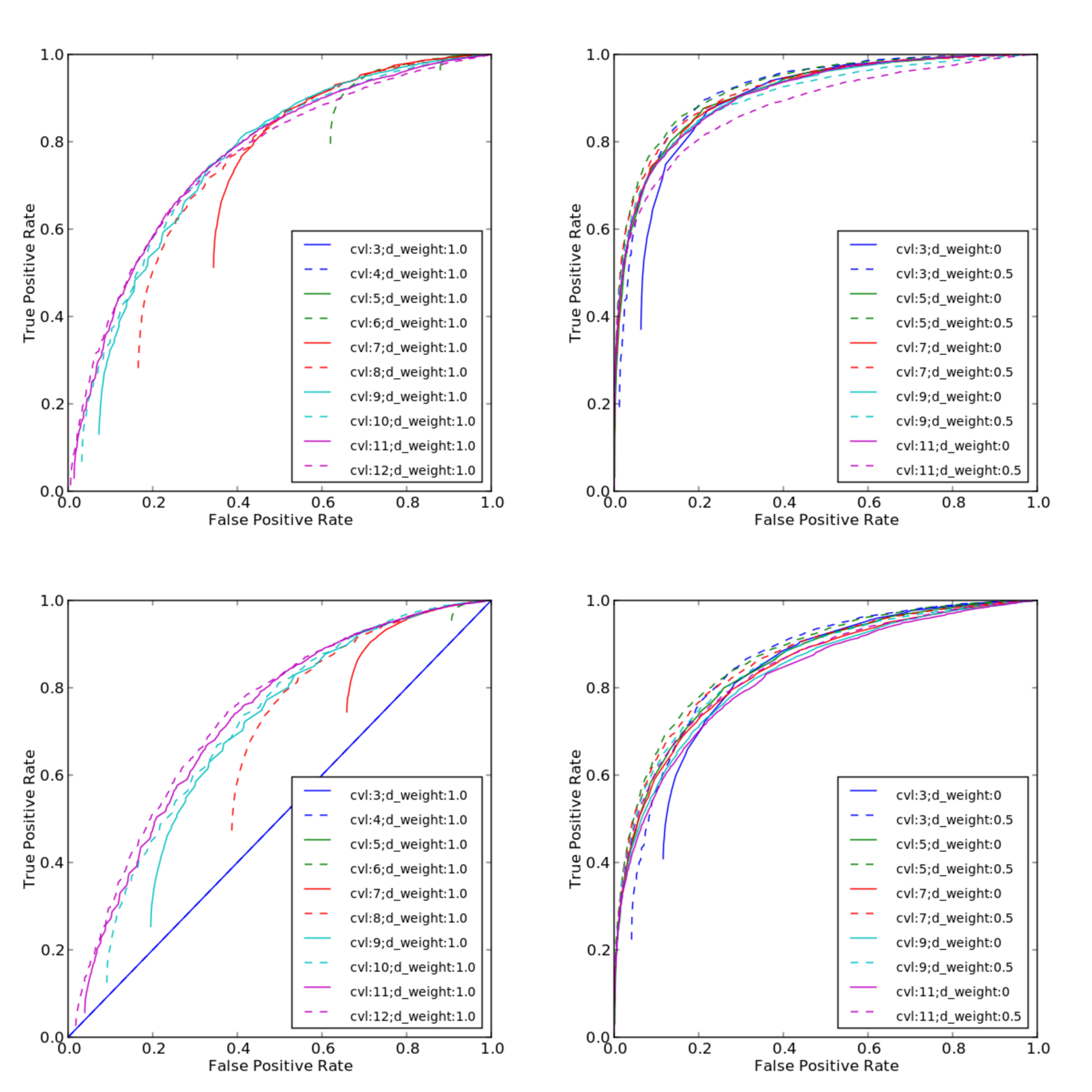}
    \caption{{\normalsize ROC curves for simulations with 6 rounds of BFB.
    Clockwise from the upper left, evendup background with no use of fold-back
    fraction, evendup background using fold-backs, highdup background using
    fold-backs, highdup background with no use of fold-back fraction.}}
\end{figure}
\afterpage{\clearpage}
\begin{figure}[h!]
    \centering
    \includegraphics[width=.9\columnwidth]{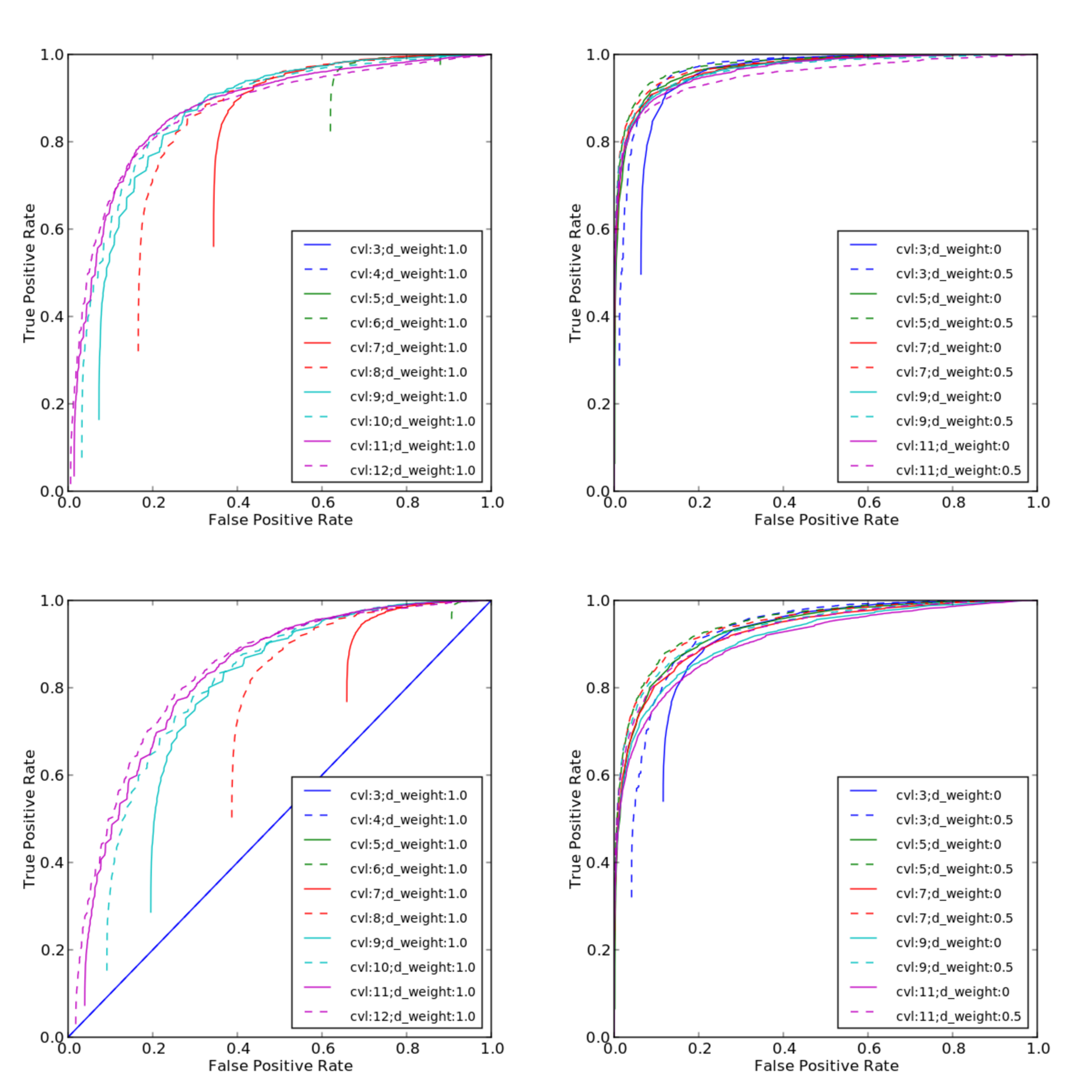}
    \caption{{\normalsize ROC curves for simulations with 8 rounds of BFB.
    Clockwise from the upper left, evendup background with no use of fold-back
    fraction, evendup background using fold-backs, highdup background using
    fold-backs, highdup background with no use of fold-back fraction.}}
\end{figure}
\afterpage{\clearpage}
\begin{figure}[h!]
    \centering
    \includegraphics[width=.9\columnwidth]{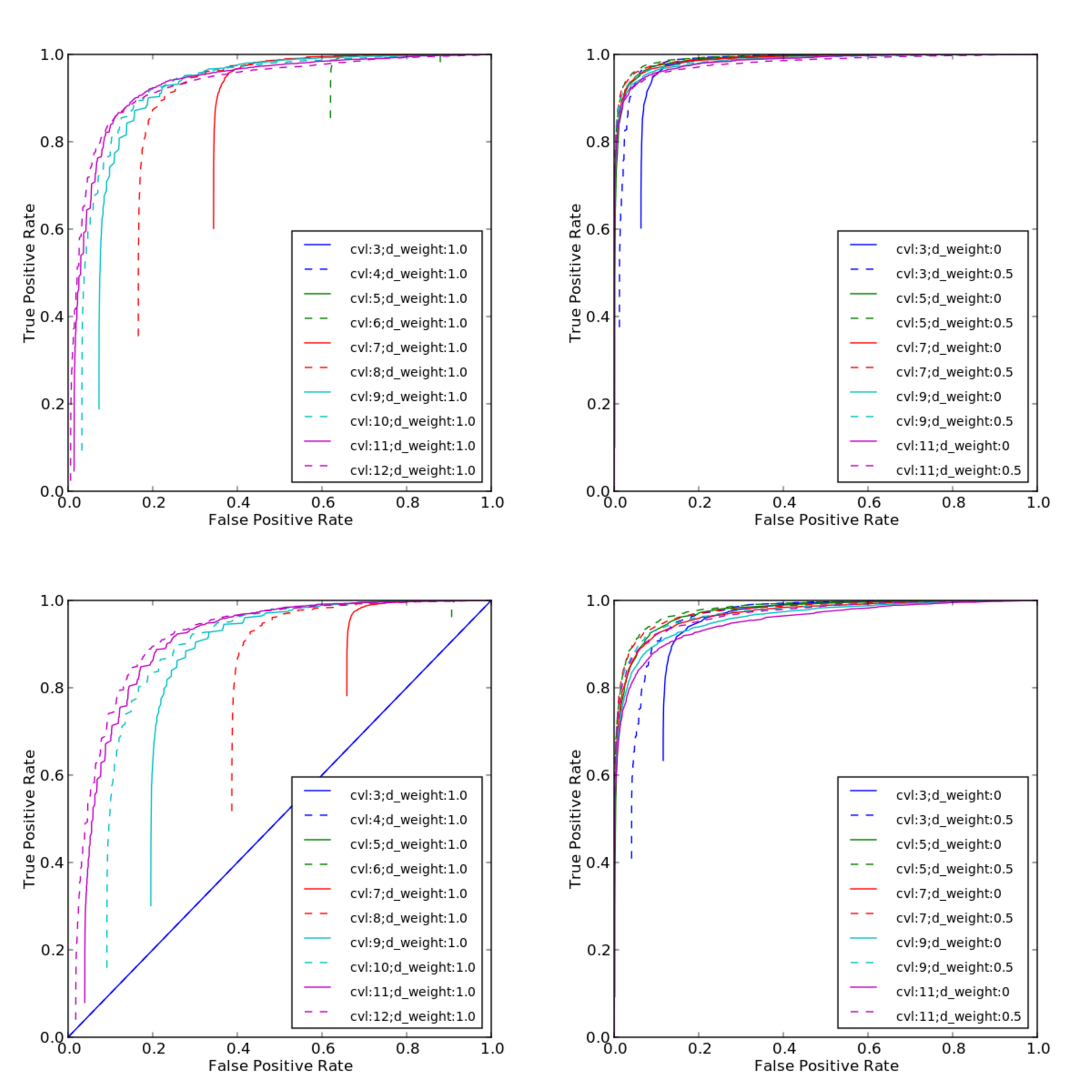}
    \caption{{\normalsize ROC curves for simulations with 10 rounds of BFB.
    Clockwise from the upper left, evendup background with no use of fold-back
    fraction, evendup background using fold-backs, highdup background using
    fold-backs, highdup background with no use of fold-back fraction.}}
\end{figure}
\afterpage{\clearpage}

\newpage
\section{Pancreatic cancer data analysis pipeline}
Figure~\ref{fig:pipeline} shows a graphical layout of the analysis.
\begin{figure}[b!]
    \centering
    \includegraphics[trim =0mm 0mm 0mm 0mm, clip, width=.5\columnwidth]{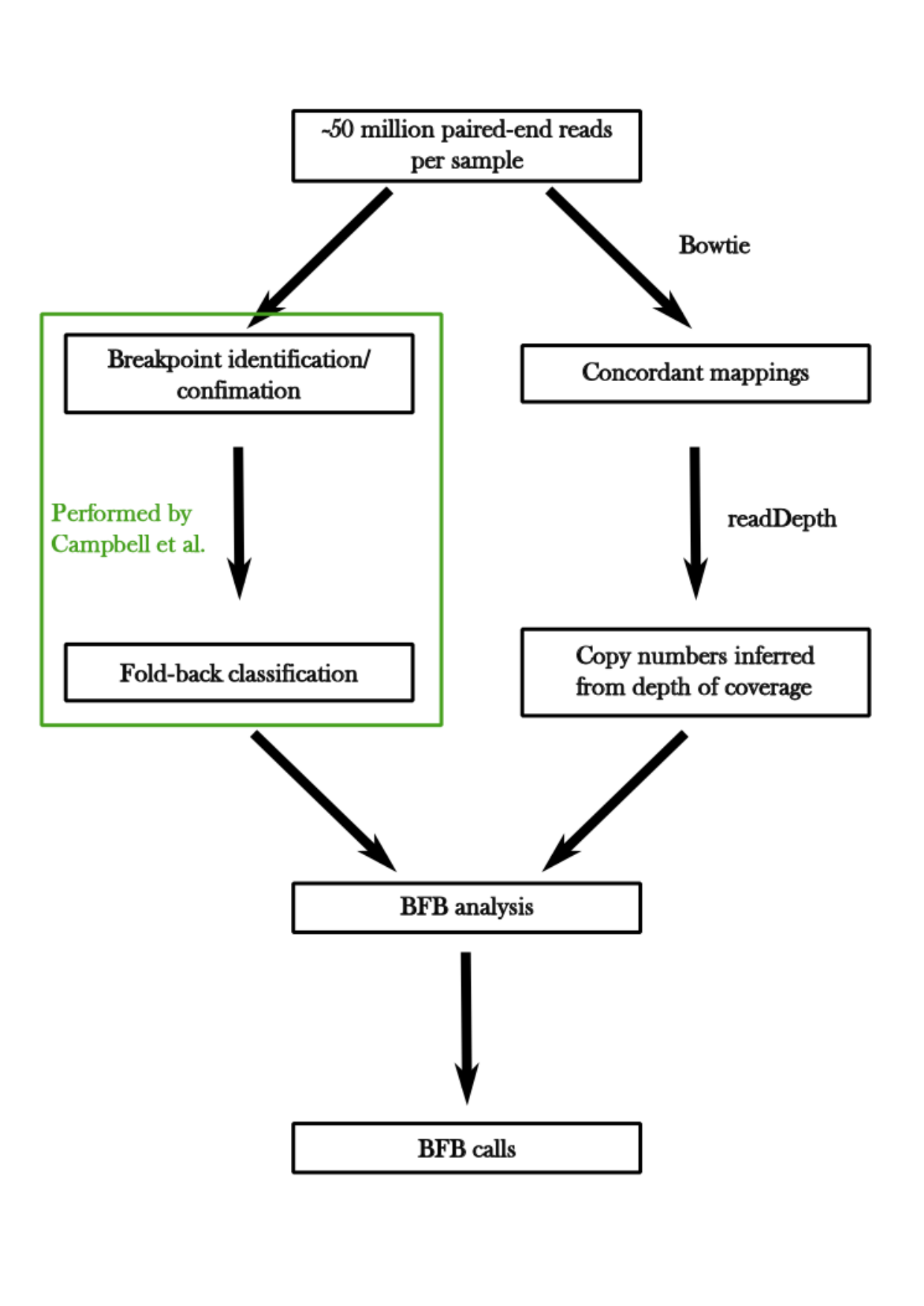}
    \caption{Graphical representation of the analysis performed
    with the pancreatic cancer paired-end sequencing data.\label{fig:pipeline}}
\end{figure}
\clearpage

\section{Possible arrangement of segments on BFB-rearranged chromosome 12}

\begin{figure}[b!]
    \centering
    \includegraphics[width=.8\columnwidth]{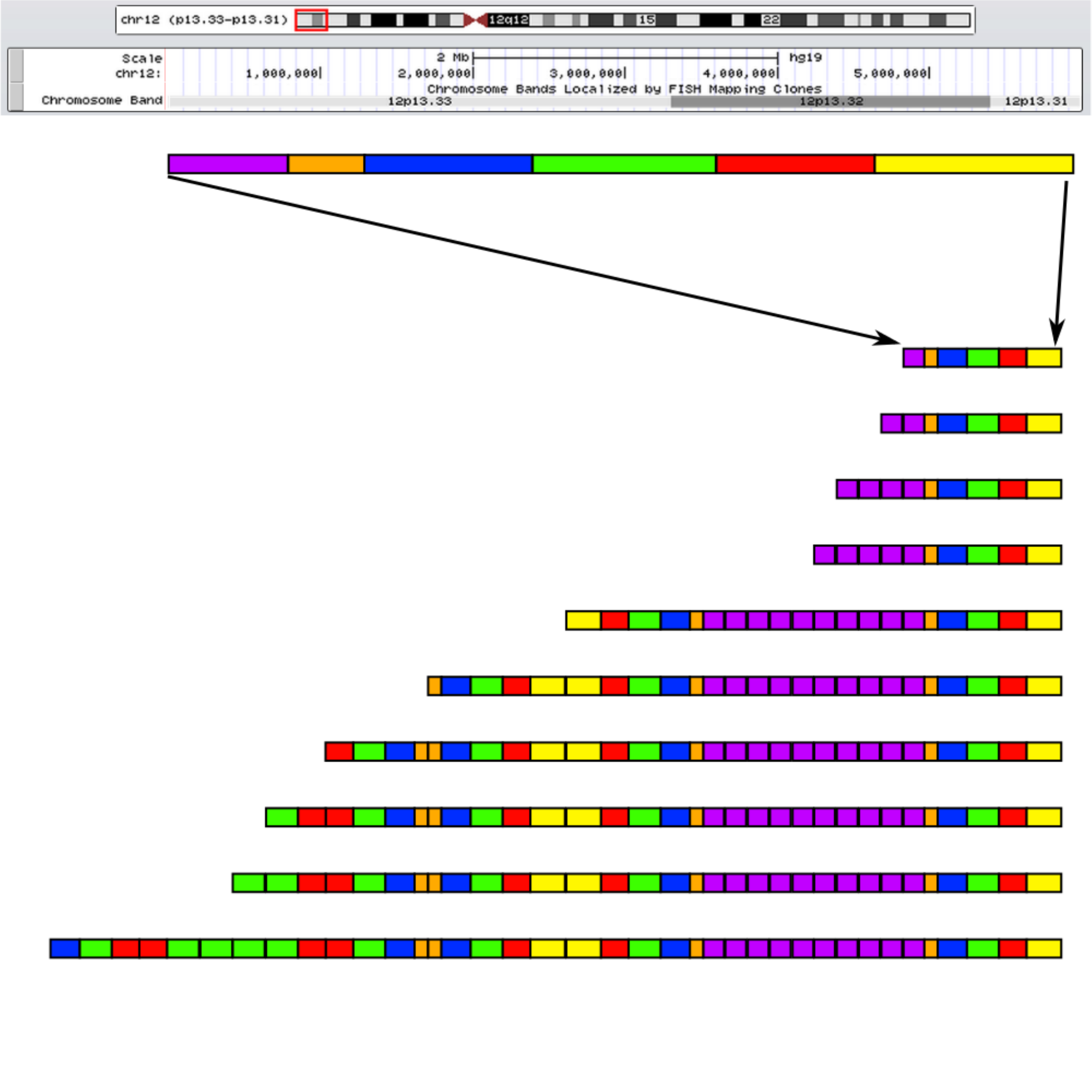}
    \caption{Plausible BFB cycles that could lead to the copy counts observed
    in chromosome 12 of pancreatic cancer sample PD3641.}
\end{figure}
\afterpage{\clearpage}

\bibliographystyle{pnas2009}

\end{document}